\newtheorem{theorem}{Theorem}
\newtheorem{proposition}[theorem]{Proposition}
\newtheorem{corollary}[theorem]{Corollary}
\newtheorem{lemma}[theorem]{Lemma}
\newtheorem{example}[theorem]{Example}
\newcommand{\OWL}{\textsl{OWL\,2}}
\newcommand{\OWLQL}{\textsl{OWL\,2\,QL}}
\newcommand{\q}{{\boldsymbol q}}
\newcommand{\pq}{\boldsymbol{p}}
\newcommand{\NLogSpace}{\textsc{NL}}
\newcommand{\coNLogSpace}{\textsc{coNL}}
\newcommand{\LogSpace}{\textsc{L}}
\newcommand{\NP}{\textsc{NP}}
\newcommand{\NCo}{\textsc{NC}^1}
\newcommand{\Ppoly}{\textsc{P}/\text{poly}}
\newcommand{\poly}{\text{poly}}
\newcommand{\cli}{\textsc{Clique}}
\newcommand{\A}{\mathcal{A}}
\newcommand{\T}{\mathcal{T}}
\newcommand{\C}{\mathcal{C}}
\newcommand{\FO}{\text{FO}}
\newcommand{\PE}{\text{PE}}
\newcommand{\NDL}{\text{NDL}}
\newcommand{\ind}{\mathsf{ind}}
\newcommand{\tw}{\mathsf{tw}}
\renewcommand{\t}{\mathfrak{t}}
\newcommand{\tr}{\mathfrak{t}_\mathsf{r}}
\newcommand{\ti}{\mathfrak{t}_\mathsf{i}}
\newcommand{\qtw}{\q_{\mathsf{tw}}}
\newcommand{\s}{\mathfrak{s}}
\newcommand{\sr}{\mathfrak{s}_\mathsf{r}}
\newcommand{\si}{\mathfrak{s}_\mathsf{i}}
\newcommand{\NBP}{\textsc{NBP}}
\newcommand{\mNBP}{\textsc{NBP}_{\!\!\scriptscriptstyle+}}
\newcommand{\HGP}{\textsc{HGP}^2}
\newcommand{\mHGP}{\textsc{HGP}^2_{\!\!\scriptscriptstyle+}}
\newcommand{\HGPt}{\textsc{HGP}^3}
\newcommand{\mHGPt}{\textsc{HGP}^3_{\!\!\scriptscriptstyle+}}
\newcommand{\HGPn}{\textsc{HGP}^n}
\newcommand{\mHGPn}{\textsc{HGP}^n_{\!\!\scriptscriptstyle+}}
\newcommand{\NBC}{\textsc{NBC}}
\newcommand{\mNBC}{\textsc{NBC}_{\scriptscriptstyle+}}
\newcommand{\Cir}{\boldsymbol{C}}
\newcommand{\Pro}{H'}
\begin{document}

\title{On the Succinctness of Query Rewriting over \OWLQL{} Ontologies with Shallow  Chases}

\author{{S. Kikot$^1$, R. Kontchakov$^1$, V. Podolskii$^2$ and  M. Zakharyaschev$^1$}\\[8pt]
{\small\begin{tabular}{cc}
$^1$Department of Computer Science and Information Systems & $^2$Steklov Mathematical Institute\\
Birkbeck, University of London, U.K. & Moscow, Russia\\
\{kikot, roman, michael\}@dcs.bbk.ac.uk & podolskii@mi.ras.ru\end{tabular}}
}

\date{}

\maketitle

\begin{abstract}
We investigate the size of first-order rewritings of conjunctive queries over \OWLQL{} ontologies of depth 1 and 2 by means of hypergraph programs computing Boolean functions. Both positive and negative results are obtained. Conjunctive queries over ontologies of depth~1 have polynomial-size nonrecursive datalog rewritings; tree-shaped queries have polynomial positive existential rewritings; however, in the worst case, positive existential rewritings can only be of superpolynomial size. Positive existential and nonrecursive datalog rewritings of queries over ontologies of depth 2 suffer an exponential blowup in the worst case, while first-order rewritings are superpolynomial unless $\NP \subseteq \Ppoly$. We also analyse rewritings of tree-shaped queries over arbitrary ontologies and observe that the query entailment problem for such queries is fixed-parameter tractable.
\end{abstract}


\section{Introduction}\label{intro}

Our concern here is the size of conjunctive query (CQ) rewritings over \OWLQL{} ontologies. \OWLQL{}
({\small\url{www.w3.org/TR/owl2-profiles}})
is a profile of the Web Ontology Language \OWL{} designed for ontology-based data access (OBDA). In first-order logic, an \OWLQL{} ontology can be given as a finite set of sentences of the form
\begin{align}\label{tgd}
\forall \vec{x} \, \big (\varphi(\vec{x}) \to \exists \vec{y} \, \psi (\vec{x},\vec{y}) \big) \qquad \text{ or } \qquad
\forall \vec{x} \, \big (\varphi(\vec{x}) \land \varphi'(\vec{x}) \to \bot \big)
\end{align}
where $\varphi$, $\varphi'$ and $\psi$ are unary or binary predicates (such sentences are known as linear tuple-generating dependencies of arity 2 and disjointness constraints). \OWLQL{} is a (nearly) maximal fragment of \OWL{} enjoying first-order (FO) rewritability of CQs: given an ontology $\T$ and a CQ $\q(\vec{x})$, one can construct an  FO-formula $\q'(\vec{x})$ in the signature of $\q$ and $\T$ such that $\T,\A \models \q(\vec{a})$ iff $\A \models \q'(\vec{a})$, for any set $\A$ of ground atoms (data instance) and any tuple $\vec{a}$ of constants in $\A$. Thus, to find certain answers to $\q(\vec{x})$ over $(\T,\A)$, we can compute an FO-rewriting $\q'(\vec{x})$ and evaluate it over $\A$ using, for example, a database system.
The ontology $\T$ in the OBDA paradigm serves as a high-level global schema providing the user with a convenient query language over possibly heterogeneous data sources and enriching the data with additional knowledge. OBDA is widely regarded as a key to the new generation of information systems.
\OWLQL{} is based on the \mbox{\sl DL-Lite} family of description logics~\cite{CDLLR07,ACKZ09}; other languages supporting FO-rewritability of CQs include linear, sticky and sticky-join sets of tuple-generating dependencies~\cite{DBLP:journals/ai/CaliGP12, DBLP:conf/ijcai/BagetLMS09}.

In practice, rewriting-based OBDA systems\footnote{See, e.g., QuOnto~\cite{PLCD*08}, Presto/Prexto~\cite{DBLP:conf/kr/RosatiA10,DBLP:conf/esws/Rosati12}, Rapid~\cite{DBLP:conf/cade/ChortarasTS11}, Ontop~\cite{iswc13}, Requiem/Blackout~\cite{Perez-UrbinaMH09,Perez-Urbina12}, Nyaya~\cite{DBLP:conf/icde/GottlobOP11}, Clipper~\cite{DBLP:conf/aaai/EiterOSTX12} and \cite{DBLP:conf/rr/KonigLMT13}.} can only work efficiently with those CQs and ontologies that have \emph{reasonably short} rewritings. This obvious fact raises fundamental succinctness problems such as:
What is the size of FO-rewritings of CQs and \OWLQL{} ontologies in the worst case? Can rewritings of one type (say, nonrecursive datalog) be substantially shorter than rewritings of another type (say, positive existential)?
First answers to these questions were given in~\cite{icalp12} which constructed CQs $\q_n$ and ontologies $\T_n$, \mbox{$n < \omega$}, with only exponential positive existential (\PE) and nonrecursive datalog (\NDL) rewritings, and superpolynomial FO-rewritings (unless $\NP \subseteq \Ppoly$);~\cite{icalp12} also showed that \NDL-rewritings (FO-rewritings) can be exponentially (superpolynomially) more succinct than \PE-rewritings.
These prohibitively high lower bounds are caused by the fact that the chases (canonical models) for $\T_n$ contain full binary trees of depth $n$ and give rise to \emph{exponentially-many} homomorphisms from $\q_n$ to the labelled nulls of the chases, all of which have to be reflected in the rewritings of $\q_n$ and $\T_n$.

In this paper, we investigate succinctness of CQ rewritings over `shallow'  ontologies whose (polyno\-mial-size) chases are finite trees of depth 1 or 2 (which do not have chains of more than 1 or 2 labelled nulls). From the theoretical point of view, ontologies of depth 1 are important because their chases can only generate linearly-many homomorphisms of CQs to the labelled nulls; on the other hand, shallow ontologies are typical in the real-world OBDA applications. We obtain both positive and, unexpectedly, `negative' results summarised below:
\begin{enumerate}\itemsep=0pt
\item[(\emph{i})] any CQ and ontology of depth 1 have a polynomial-size \NDL-rewriting;

\item[(\emph{ii})] \PE-rewritings of some CQs and ontologies of depth 1 are of superpolynomial size;

\item[(\emph{iii})] any tree-shaped CQ and ontology of depth 1 have a  PE-rewriting of polynomial
size;

\item[(\emph{iv})] the existence of polynomial-size \FO-rewritings for all CQs and ontologies of depth 1 is equivalent to an open problem `$\NLogSpace/\poly \subseteq \NCo$?';

\item[(\emph{v})] \NDL- and \PE-rewritings of some CQs and  ontologies of depth 2 are of exponential size, while \FO-rewritings are of superpolynomial size unless $\NP \subseteq \Ppoly$.
\end{enumerate}
We prove (\emph{i})--(\emph{v}) by establishing a fundamental  connection between \FO-, \PE- and \NDL-rewritings, on the one hand, and, respectively, formulas,  monotone formulas and monotone circuits computing certain monotone Boolean functions, on the other.
These functions are associated with hypergraph representations of the tree-witness rewritings~\cite{DBLP:conf/kr/KikotKZ12}, reflecting  possible homomorphisms of the given CQ to the labelled nulls of the chases for the given ontology.
In particular, any hypergraph $H$ of degree~2 (every vertex in which belongs to 2 hyperedges) corresponds to a CQ $\q_H$ and an ontology $\T_H$ of depth 1 such that answering $\q_H$ over $\T_H$ and single-individual data instances amounts to computing the hypergraph function for $H$. We show that representing Boolean functions as hypergraphs of degree 2 is polynomially equivalent to representing their duals as nondeterministic branching programs (NBPs)~\cite{Jukna12}. This correspondence and known results on NBPs~\cite{Razborov91,Karchmer88} give (\emph{i}), (\emph{ii}) and (\emph{iv}) above. To prove (\emph{v}), we observe that hypergraphs of degree $3$ are computationally as powerful as nondeterministic Boolean circuits ($\textsc{NP}/\text{poly}$) and encode the function $\cli_{n,k}(\vec{e})$ (graph $\vec{e}$ with $n$ vertices has a $k$-clique) as  CQs over ontologies of depth~2. 
It also follows that there exist polynomial-size \FO-rewritings for all CQs and ontologies (of depth $2$) with polynomially-many tree witnesses iff  all functions in $\NP/\poly$ are com\-puted by polynomial-size formulas, that is, iff $\NP/\poly \subseteq \NCo$ (which is a well-known open problem).
Finally, we show that any tree-shaped CQ $\q$ and ontology $\T$ have a \PE-rewriting of size $O(|\T|^2 \cdot |\q|^{1 + \log d})$, where $d$ is a parameter related to the number of tree witnesses sharing a common variable. This gives (\emph{iii}) since $d = 2$ for ontologies of depth~1. We also note that the problem `$\T,\A \models \q$?'\!, for tree-shaped Boolean CQs and any $\T$, is fixed-parameter tractable (recall that the problem `$\A \models \q$?'\!, for tree-shaped $\q$, is known to be tractable~\cite{DBLP:conf/vldb/Yannakakis81}, while `$\T,\A \models \q$?' is \NP-hard~\cite{KKZ-DL11}).

As shown in~\cite{DBLP:conf/kr/GottlobS12}, exponential rewritings can be made polynomial at the expense of polynomially-many additional existential quantifiers over a domain with \emph{two constants} not necessarily occurring in the CQs; cf.~\cite{Avigad01}. Intuitively, \mbox{given $\q$,} $\T$ and $\A$, the extra quantifiers guess a homomorphism from $\q$ to the chase for $(\T,\A)$, whereas the standard rewritings (without extra constants) represent such homomorphisms explicitly (likewise NFAs are  exponentially more succinct than DFAs, and $\exists$-QBFs are exponentially more succinct than SAT).
A more practical utilisation of additional constants was suggested in the combined approach to OBDA~\cite{DBLP:conf/semweb/LutzSTW13}, where they are used to construct a polynomial-size encoding of the chase for the given ontology and data over which the original CQ is evaluated. This encoding may introduce (exponentially-many in the worst case) spurious answers that are eliminated by a special polynomial-time filtering procedure.


\section{The Tree-Witness Rewriting}\label{sec:tr-wit}

In this paper, we assume that an \emph{ontology}, $\T$, is a finite set of \emph{tuple-generating dependencies} (\emph{tgds}) of the form
\begin{align}\label{tgd1}
\forall \vec{x} \, \big( \varphi(\vec{x}) \to \exists \vec{y}  \bigwedge \psi_i (\vec{x},\vec{y}) \big),
\end{align}
where $\varphi$ and the $\psi_i$ are unary or binary atoms without constants and $|\vec{x} \cup \vec{y}| \le 2$. These tgds are expressible via tgds in~\eqref{tgd} using fresh binary predicates, whereas disjointness constraints in~\eqref{tgd} do not contribute to the size of  rewritings. Although the language given by~\eqref{tgd} is slightly different from  \OWLQL{}, all the results obtained here  are applicable to \OWLQL{} ontologies as well. When writing tgds, we will omit the universal quantifiers.  The size, $|\T|$, of $\T$ is the number of predicate occurrences in $\T$.
A \emph{data instance}, $\A$, is a finite set of ground atoms. The set of individual constants in $\A$ is denoted by $\ind(\A)$. Taken together, $\mathcal{T}$ and $\A$ form the \emph{knowledge base} (KB) $(\mathcal{T},\mathcal{A})$. To simplify notation, we will assume that the data instances in all KBs are \emph{complete} in the following sense: for any ground atom $S(\vec{a})$ with $\vec{a}\subseteq\ind(\A)$, if $\T,\A \models S(\vec{a})$ then $S(\vec{a}) \in \A$ (see Lemma~\ref{lemma:complete-data} below).

A \emph{conjunctive query} (CQ) $\q(\vec{x})$ is a formula $\exists \vec{y}\, \varphi(\vec{x}, \vec{y})$, where $\varphi$ is a conjunction of unary or binary atoms $S(\vec{z})$ with $\vec{z} \subseteq \vec{x} \cup \vec{y}$ (without loss of generality, we assume that CQs do not contain constants). A tuple $\vec{a}\subseteq \ind (\A)$ is a \emph{certain answer to $\q(\vec{x})$ over} $(\T,\A)$ if  $\mathcal{I} \models \q(\vec{a})$ for all models $\mathcal{I}$ of $\T$ and $\A$; in this case we write $\T,\A \models \q(\vec{a})$. If $\vec{x} = \emptyset$ then the CQ $\q$ is called \emph{Boolean}; a certain answer to such a $\q$ over $(\T,\A)$ is `yes' if $\T,\A \models \q$ and `no' otherwise. Where convenient, we regard a CQ as the set of its atoms.

Given a CQ $\q(\vec{x})$ and an ontology $\T$, an \FO-formula $\q'(\vec{x})$
\emph{without constants} is called an \emph{FO-rewriting of $\q(\vec{x})$ and $\T$} if, for any (complete) data instance $\A$ and any $\vec{a}\subseteq \ind(\A)$, we have \mbox{$(\T, \A) \models \q(\vec{a})$} iff $\A \models \q'(\vec{a})$.\!\footnote{Thus, we do not allow the  rewriting from~\cite{DBLP:conf/kr/GottlobS12} since it contains  constants.} If $\q'$ is a positive existential formula, we call it a \emph{PE-rewriting of $\q$ and $\T$}. We also consider rewritings in the form of nonrecursive datalog  queries.
Recall~\cite{Abitebouletal95} that a \emph{datalog program}, $\Pi$, is a finite set of Horn clauses
$\forall \vec{x}\, (\gamma_1 \land \dots \land \gamma_m \to \gamma_0)$,
where each $\gamma_i$ is an atom of the form $P(x_1,\dots,x_l)$ with  $x_i \in \vec{x}$. The atom $\gamma_0$ is the \emph{head} of the clause, and $\gamma_1,\dots,\gamma_m$ its \emph{body}. All variables in the head must also occur in the body.
A predicate $P$ \emph{depends} on $Q$ in $\Pi$ if $\Pi$ has a clause with $P$ in the head and $Q$ in the body; $\Pi$ is  \emph{nonrecursive} if this dependence relation is acyclic.
For a nonrecursive program $\Pi$ and an atom $\q'(\vec{x})$,
$(\Pi,\q')$ is called an \emph{\NDL-rewriting of $\q(\vec{x})$ and $\T$} in case $\T,\A\models \q(\vec{a})$ iff  $\Pi,\mathcal{A} \models \q'(\vec{a})$, for any (complete) $\A$ and $\vec{a} \subseteq \ind (\mathcal{A})$.
Rewritings over \emph{arbitrary} data are defined without stipulating that the data instances in KBs are complete.
\begin{lemma}\label{lemma:complete-data}
\textup{(}i\textup{)} For any  \textup{(}\PE-\textup{)} \FO-rewriting $\q'$ of $\q$ and $\T$ over complete data, there is a \textup{(}\PE-\textup{)} \FO-rewriting $\q''$ over arbitrary data with $|\q''| \leq O(|\q'| \cdot |\T|)$.

\textup{(}ii\textup{)} For any \NDL-rewriting $(\Pi,\q')$ of $\q$ and $\T$ over complete data, there is an \NDL-rewriting $(\Pi',\q')$ over arbitrary data with $|\Pi'| \leq |\Pi| + O(|\T|)$.
\end{lemma}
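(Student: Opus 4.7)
The plan is to derive both parts from a single observation: for any data instance $\A$, the \emph{completion} $\A^+$ --- the set of ground atoms $S(\vec{a})$ with $\vec{a}\subseteq\ind(\A)$ such that $\T,\A \models S(\vec{a})$ --- is obtained by iteratively applying only the \emph{simple} tgds of $\T$, namely those of the form $\varphi(\vec{x}) \to \psi(\vec{x})$ whose head introduces no new existentials. This is because the chase keeps labelled nulls disjoint from named individuals, so any entailed atom on $\ind(\A)$ is obtainable by a chain of atomic inclusions from some atom of $\A$. Writing $\T \models R(\vec{x}) \to S(\vec{x})$ for the transitive atomic-inclusion relation derivable from $\T$ (pre-computable in polynomial time, with at most $O(|\T|)$ distinct $R$ for each $S$, since the number of atomic shapes over the predicates of $\T$ is linear in $|\T|$), both (i) and (ii) become matters of injecting the computation of $\A^+$ into the given rewriting.

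For (i), I first define, for each unary or binary predicate $S$, the PE formula
\[
\text{compl}_S(\vec{x}) \ = \ \bigvee \{\, R(\vec{x}) \ : \ \T \models R(\vec{x}) \to S(\vec{x}) \,\},
\]
whose size is $O(|\T|)$. I then obtain $\q''$ from $\q'$ by replacing every atom $S(\vec{z})$ occurring in $\q'$ by $\text{compl}_S(\vec{z})$. Because the substitution is performed at the atomic level, $\q''$ remains positive-existential (resp.\ first-order) whenever $\q'$ is, and the bound $|\q''| \le O(|\q'| \cdot |\T|)$ is immediate. Correctness follows from the equivalence $\A \models \text{compl}_S(\vec{a})$ iff $\A^+ \models S(\vec{a})$, which yields $\A \models \q''(\vec{a})$ iff $\A^+ \models \q'(\vec{a})$ iff $\T,\A \models \q(\vec{a})$, using that $\A^+$ is complete and $\q'$ is a rewriting over complete data.

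For (ii), for each predicate $S$ occurring in $\Pi$ or $\T$ I introduce a fresh predicate $\tilde{S}$. I rename every body occurrence of $S$ in the clauses of $\Pi$ to $\tilde{S}$, leaving the head $\q'$ and any IDB predicates of $\Pi$ unchanged, and I add $O(|\T|)$ new clauses: a \emph{base} clause $\tilde{S}(\vec{x}) \leftarrow S(\vec{x})$ for each predicate $S$, and a \emph{completion} clause $\tilde{S}(\vec{x}) \leftarrow \tilde{R}(\vec{x})$ for each simple tgd $R(\vec{x}) \to S(\vec{x})$ of $\T$. The bound $|\Pi'| \le |\Pi| + O(|\T|)$ is then immediate. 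The main obstacle here is keeping $\Pi'$ \emph{nonrecursive}: if the atomic-inclusion relation of $\T$ has cycles (for example $A \to B$ and $B \to A$), then so do the completion clauses among the $\tilde{S}$. I handle this at compile time by computing the strongly connected components of the atomic-inclusion graph of $\T$ and using a single auxiliary symbol per SCC; the completion clauses then strictly descend through the DAG of SCCs, and since every $\tilde{S}$ lies below the IDB predicates of $\Pi$ in the dependence order, the resulting program is nonrecursive. Correctness follows because the new clauses compute exactly $\A^+$ in the tilded predicates.
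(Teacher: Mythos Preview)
Your proof is correct and follows essentially the same approach as the paper: for (i) you replace each atom $S(\vec{z})$ by the disjunction of atoms that $\T$-imply it (the paper spells out the admissible shapes $S(x)$, $S(x,x)$, $\exists y\,S(x,y)$, $\exists y\,S(y,x)$, $S(x_1,x_2)$, $S(x_2,x_1)$ explicitly, which your notation $R(\vec{x})$ leaves implicit), and for (ii) you introduce fresh predicates computing the completion and handle cycles via SCCs, which is exactly the paper's use of $\equiv$-classes and immediate predecessors in the quotient order. The only cosmetic difference is that the paper renames \emph{all} occurrences in $\Pi$ of predicates from $\T$ (assuming $\q'$ is not one of them), whereas you rename only body occurrences; both work.
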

\begin{proof}
Given $\T$ and $\q$, we define a partial order $\leq$ on the formulas $\varrho(x)$ of the form $S(x)$, $S(x,x)$, $\exists y\,S(x,y)$ and  $\exists y\,S(y,x)$, where $S$ is a predicate in $\T$ or $\q$, to be the transitive and reflexive closure of the following relation $\le_1$:
\begin{equation*}
\varrho_1(x) \leq_1 \varrho_2(x)\quad\text{ iff }\quad \varrho_1(x) \to \varrho_2(x) \in \T.
\end{equation*}
This partial order $\leq$ is extended to the formulas $\varrho(x_1,x_2)$ of the form $S(x_1,x_2)$ and $S(x_2,x_1)$, where $S$ is a binary predicate in $\T$ or $\q$, by adding the transitive and reflexive closure of the following relation $\le_2$:
\begin{equation*}
\varrho_1(\vec{x}) \leq_2\varrho_2(\vec{x})\quad\text{ iff }\quad \varrho_1(\vec{x}) \to \varrho_2(\vec{x}) \in \T.
\end{equation*}
We also define an equivalence relation on such $\varrho(\vec{x})$ by taking $\varrho_1(\vec{x}) \equiv \varrho_2(\vec{x})$ iff $\varrho_1(\vec{x})\leq \varrho_2(\vec{x})$  and $\varrho_2(\vec{x})\leq \varrho_1(\vec{x})$. In each equivalence class, we fix a representative, denoted  $\varrho(\vec{x})/_\equiv$.

\smallskip

(\textit{i}) For \PE- and \FO-rewritings, we replace each $S(\vec{x})$ in  $\q'$ with a disjunction of all $\varrho(\vec{x})$ such that $\varrho(\vec{x}) \leq S(\vec{x})$. The size of the resulting rewriting $\q''$ increases linearly in $|\T|$.

\smallskip

(\textit{ii}) We assume without loss of generality that $\q'$ is not a predicate name in $\T$. Let $\Pi^*$ be the result of replacing each predicate name $S$ in $\Pi$ that occurs in $\T$  with a fresh predicate name $S^*$. Define $\Pi'$ to be the union of $\Pi^*$ and the following clauses:
\begin{align*}
S^*(\vec{x}) & \leftarrow \varrho(\vec{x}), && \text{ for all } \varrho(\vec{x}) \text{ with } \varrho(\vec{x}) \equiv S(\vec{x}),\\
S^*(\vec{x})  & \leftarrow \varrho^*(\vec{x}), && \text{ for all  } \varrho(\vec{x})/_\equiv \text{ such that } \varrho(\vec{x}) \text{ is an immediate predecessor of } S(\vec{x}) \text{ in } \leq,
\end{align*}
where $\varrho^*(\vec{x})$ is the result of replacing the predicate,  $S_1$, in $\varrho$ with $S_1^*$.
It should be clear that $(\Pi',\q')$ is an \NDL-rewriting of $\q$ and $\T$ over arbitrary data and that the size of the additional clauses in $\Pi'$ is linear in $|\T|$ (and does not depend on $\q$).
\end{proof}

We now define an improved version of the tree-witness PE-rewriting~\cite{DBLP:conf/kr/KikotKZ12} that will be used to establish links with formulas and circuits computing certain monotone Boolean functions.

As is well-known~\cite{Abitebouletal95}, for any KB $(\T,\A)$, there is a \emph{canonical model} (or \emph{chase}) $\mathcal{C}_{\T,\A}$ such that \mbox{$\T,\A \models \q(\vec{a})$} iff $\mathcal{C}_{\T,\A} \models \q(\vec{a})$, for all CQs $\q(\vec{x})$ and $\vec{a} \subseteq \ind(\A)$. The domain of $\C_{\T,\A}$ consists of $\ind(\A)$ and the witnesses, or \emph{labelled nulls}, introduced by the existential quantifiers in $\T$.

For any formula $\varrho(x)$ of the form $S(x)$, $S(x,x)$, $\exists y\, S(x,y)$ or $\exists y\, S(y,x)$, where $S$ is a  predicate in $\T$, we denote by $\C_\T^{\varrho(a)}$ the canonical model of the KB $(\T \cup \{ A(x) \to \varrho(x)\} , \{ A(a) \})$, where $A$ is a fresh unary predicate. We  say that $\T$ is \emph{of depth $k$}, $1 \le k < \omega$, if one of the $\C_\T^{\varrho(a)}$ contains a chain of the form $R_0(w_0,w_1)\dots R_{k-1}(w_{k-1},w_k)$, with not necessarily distinct $w_i$, but none of the $\C_\T^{\varrho(a)}$ has such a chain of greater length.

Suppose we are given a CQ $\q(\vec{x}) = \exists \vec{y}\, \varphi(\vec{x}, \vec{y})$ and an ontology $\T$. For a pair $\t = (\tr, \ti)$ of disjoint sets of variables in $\q$, with $\ti\subseteq \vec{y}$ and $\ti \ne\emptyset$ ($\tr$ can be empty), set
\begin{equation*}
\q_\t \ = \ \{\, S(\vec{z}) \in \q \mid \vec{z} \subseteq \tr\cup \ti \text{ and } \vec{z}\not\subseteq \tr\,\}.
\end{equation*}
We call $\t = (\tr, \ti)$ a \emph{tree witness for $\q$ and $\T$ generated by $\varrho$} if $\q_\t$ is a minimal subset of $\q$ for which there exists a homomorphism $h \colon \q_\t  \to \C_\T^{\varrho(a)}$ such that $\tr = h^{-1}(a)$ and $\q_\t$ contains all atoms of $\q$ with at least one variable from $\ti$ (cf.~aggregated unifiers from \cite{DBLP:conf/rr/KonigLMT13}).
Note that the same tree witness $\t = (\tr, \ti)$ can be generated by different $\varrho$.
Now, we set
\begin{equation}\label{tw-formula}
\tw_{\t}(\tr) \ \ =  \bigvee_{\t \text{ generated by } \varrho} \hspace*{-1em} \exists z\,\bigl(\varrho(z) \ \ \land \ \ \bigwedge_{x \in \tr} (x=z) \bigr).
\end{equation}
The variables in $\ti$ do not occur in $\tw_{\t}$ and are called \emph{internal}. The length, $|\tw_\t|$, of $\tw_\t$ is $O(|\q| \cdot |\T|)$.
Tree witnesses $\t$ and $\t'$ are \emph{conflicting} if $\q_\t \cap \q_{\t'} \ne \emptyset$. Denote by $\Theta^{\q}_{\T}$ the set of tree witnesses for $\q$ and $\T$. A subset $\Theta\subseteq \Theta^{\q}_{\T}$ is  \emph{independent} if no pair of distinct tree witnesses in it is conflicting. Let $\q_\Theta = \bigcup_{\t\in\Theta} \q_\t$.
The following PE-formula $\qtw$ is called the \emph{tree-witness rewriting of $\q$ and $\T$}:
\begin{equation}\label{rewriting0}
\qtw(\vec{x}) \ \ \ =  \hspace*{0em}  \bigvee_{\Theta \subseteq \Theta^{\q}_{\T} \text{ independent}} \hspace*{-0.4em} \exists\vec{y}\  \bigl(\hspace*{-1em}
\bigwedge_{S(\vec{z}) \in \q \setminus \q_\Theta}\hspace*{-1.2em} S(\vec{z})
 \ \land \ \bigwedge_{\t\in\Theta} \tw_\t(\tr) \,\bigr).
\end{equation}
\begin{example}\label{ex:conf}\em
Consider the following ontology and CQ:
\begin{align*}
\T & = \bigl\{ \  A_1(x) \to \exists y\, \bigl(R_1(x,y) \land Q(x,y)\bigr), \ \ A_2(x) \to \exists y\, \bigl(R_2(x,y) \land Q(y,x)\bigr) \ \bigr\}, \\
\q(x_1, x_2) & = \exists y_1y_2\, \bigl(R_1(x_1,y_1)\land Q(y_2,y_1)\land R_2(x_2,y_2)\bigr).
\end{align*}

\begin{figure}[ht]
\centerline{%
\begin{tikzpicture}[>=latex, point/.style={circle,draw=black,thick,minimum size=1.5mm,inner sep=0pt}, wiggly/.style={thick,decorate,decoration={snake,amplitude=0.3mm,segment length=2mm,post length=1mm}},
query/.style={thick},
tw/.style={shorten <= 0.1cm, shorten >= 0.1cm,dashed},yscale=1,xscale=0.9]\footnotesize
\coordinate (c1) at (0,0);
\coordinate (c2) at (1.5,1);
\coordinate (c3) at (3,0);
\coordinate (c4) at (4.5,1);
\node [fill=gray!2,thin,rounded corners,inner xsep=4mm,inner ysep=7mm,fit=(c2) (c3) (c4), label = below right : {}]{};
\node [draw,fill=gray!15,thin,rounded corners,inner xsep=4mm,inner ysep=5mm,fit=(c1) (c2) (c3), label = below right : {}]{};
\node [draw,thin,rounded corners,inner xsep=4mm,inner ysep=7mm,fit=(c2) (c3) (c4), label = below right : {}]{};
\draw[ultra thin, dashed] (-1.8,1) -- (6.3,1);
\draw[ultra thin, dashed] (-1.8,0) -- (6.3,0);
\node at (0,0.7) {\large $\t^1$};
\node at (4.5,0.3) {\large $\t^2$};
\node (t1) at (c1) [point, fill=black, label=below:{$x_1$}]{};
\node (t2) at (c2) [point, fill=white, label=above:{$y_1$}]{};
\node (t3) at (c3) [point, fill=white, label=below:{$y_2$}]{};
\node (t4) at (c4) [point, fill=black, label=above:{$x_2$}]{};
\draw[->,query] (t1)  to node [above, sloped]{$R_1$} (t2);
\draw[->,query] (t3)  to node [above, sloped]{$Q$} (t2);
\draw[->,query] (t4)  to node [pos=0.3, above, sloped]{$R_2$} (t3);
\node[fill=black] (c1) at (6.3,1) [point, label=above:{$A$}, label=right:{$a$}]{};
\node[fill=white] (c2) at (6.4,0) [point]{};
\draw[->,wiggly] (c1)  to node [below,sloped]{\scriptsize $R_2$} node [above,sloped]{\scriptsize $Q^-$} (c2);
\node at (6.5,-0.7) {$\C_{\T}^{A_2(a)}$};
\node[fill=black] (cc1) at (-1.8,0) [point, label=below:{$A$}, label=left:{$a$}]{};
\node[fill=white] (cc2) at (-1.8,1) [point]{};
\draw[->,wiggly] (cc1)  to node [below,sloped]{\scriptsize $R_1$} node [above,sloped]{\scriptsize $Q$} (cc2);
\node at (-1.6,1.7) {$\C_{\T}^{A_1(a)}$};
\end{tikzpicture}}%
\caption{Query $\q(x_1,x_2)$ and  canonical models $\C_{\T}^{A_1(a)}$ and $\C_{\T}^{A_2(a)}$ from Example~\ref{ex:conf}.}\label{fig:tws}
\end{figure}
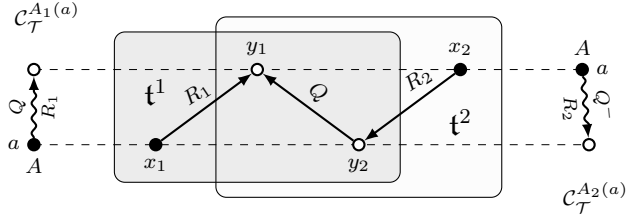

\noindent The CQ $\q$ is shown in Fig.~\ref{fig:tws} alongside the $\C_{\T}^{A_k(a)}$, $k = 1,2$.
There are two tree witnesses, $\t^1$ and $\t^2$, for $\q$ and $\T$ with
\begin{equation*}
\q_{\t^1} = \{\, R_1(x_1,y_1), Q(y_2,y_1)\,\}  \qquad\text{ and }\qquad  \q_{\t^2} = \{\, Q(y_2,y_1), R_2(x_2,y_2)\,\}
\end{equation*}
(they are shown as shaded rectangles in Fig.~\ref{fig:tws}); the tree witness $\t^k = (\tr^k, \ti^k)$, for $k = 1, 2$, is generated by $A_k(x)$
with $\tr^k = \{x_k,y_{3-k}\}$ and $\ti^k =\{y_k\}$,  which gives
\begin{equation*}
\tw_{\t^k}(x_k,y_{3-k})  \ \ = \ \ \exists z \, \bigl(A_k(z)\land(x_k = z) \land (y_{3-k} =z)\bigr).
\end{equation*}
As $\t^1$ and $\t^2$ are conflicting, we obtain the following rewriting:
\begin{equation*}
\exists y_1y_2 \, \big[\bigl(R_1(x_1,y_1) \land Q(y_2,y_1) \land R_2(x_2,y_2)\bigr) \ \ \lor\ \ 
\bigl(R_2(x_2,y_2) \land \tw_{\t^1}\bigr) \ \  \lor \ \ \bigl(R_1(x_1,y_1) \land \tw_{\t^2}\bigr) \big].
\end{equation*}
\end{example}
\begin{theorem}[\cite{DBLP:conf/kr/KikotKZ12}]
For any complete data instance $\A$ and any $\vec{a} \subseteq \ind(\A)$, we have $\T,\A \models \q(\vec{a})$ iff $\A \models \qtw(\vec{a})$.
\end{theorem}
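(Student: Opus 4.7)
The plan is to prove the two directions separately, leaning on the structure of the canonical model $\C_{\T,\A}$. Recall that the labelled nulls of $\C_{\T,\A}$ form a forest hanging off $\ind(\A)$: each null is created by a single tgd application and is linked to its creator (either an individual or another null). For any null $w$, tracing this creation chain gives a unique root individual $a \in \ind(\A)$ and a unique first-level null below $a$ whose subtree contains $w$; moreover, this subtree is isomorphic to the corresponding subtree of $\C_\T^{\varrho(a)}$, where $\varrho$ is read off the first tgd applied at $a$.

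For soundness ($\Leftarrow$), suppose $\A$ satisfies the disjunct of $\qtw(\vec{a})$ for some independent $\Theta \subseteq \Theta^{\q}_{\T}$ under a valuation of the remaining existentially quantified variables. I would build a homomorphism $h \colon \q \to \C_{\T,\A}$ with $h(\vec{x}) = \vec{a}$ as follows: every atom of $\q \setminus \q_\Theta$ is already satisfied in $\A \subseteq \C_{\T,\A}$, fixing $h$ on its variables; and for each $\t \in \Theta$, the satisfied $\tw_\t(\tr)$ supplies an individual $c \in \ind(\A)$ and a generator $\varrho$ with $\A \models \varrho(c)$ (using completeness of $\A$), which in turn yields a homomorphism from $\q_\t$ into the subtree of $\C_{\T,\A}$ rooted at $c$ and isomorphic to $\C_\T^{\varrho(c)}$, sending all of $\tr$ to $c$. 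Independence of $\Theta$ guarantees that these partial homomorphisms do not clash on shared variables, so together they give $h$, and hence $\T,\A \models \q(\vec{a})$.

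For completeness ($\Rightarrow$), assume $\T,\A \models \q(\vec{a})$, so there is a homomorphism $h \colon \q \to \C_{\T,\A}$ with $h(\vec{x}) = \vec{a}$. Let $V_n$ be the set of variables $y$ with $h(y)$ a labelled null, and introduce the equivalence on $V_n$ under which $y \sim y'$ iff $h(y)$ and $h(y')$ lie in the same first-level subtree rooted at a common individual. For each equivalence class $Y$, put $\ti := Y$ and let $\tr$ consist of those variables $z$ with $h(z) \in \ind(\A)$ sharing a $\q$-atom with some variable in $Y$; since any binary atom of $\C_{\T,\A}$ joining an individual and a null occurs only at the root of a subtree, all such $z$ must be mapped to the same root individual $a$. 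Composing $h$ with the isomorphism between that subtree and $\C_\T^{\varrho(a)}$, for the $\varrho$ coming from the first tgd applied at $a$, shows that $(\tr,\ti)$ satisfies the homomorphism condition in the definition of a tree witness; if the resulting $\q_\t$ is not minimal, I refine to a minimal sub-pair, which still defines a legitimate tree witness. Collecting these into $\Theta$, distinct classes come from distinct first-level subtrees, so any shared atom would need its two variables to map into one subtree and into another simultaneously, which is impossible for binary atoms in $\C_{\T,\A}$; hence $\Theta$ is independent. Finally, atoms in $\q \setminus \q_\Theta$ have all variables mapped into $\ind(\A)$ and therefore lie in $\A$ by completeness, each $\tw_\t(\tr)$ is satisfied by instantiating the existential variable $z$ with the corresponding root individual, and thus $\A \models \qtw(\vec{a})$.

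The main obstacle lies in the completeness direction, specifically in the minimality condition built into the definition of a tree witness: the natural $(\tr,\ti)$ extracted from $h$ need not yield a minimal $\q_\t$, so one must argue that a minimal refinement is itself a tree witness generated by some (possibly different) $\varrho'$. This relies on the fact that any homomorphic sub-image of $\q_\t$ in $\C_\T^{\varrho(a)}$ lifts back to the subtree of $\C_{\T,\A}$ via the isomorphism set up in the first paragraph, which makes the refinement argument routine once the bookkeeping of subtrees and generators is in place.
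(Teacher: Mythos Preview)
The paper does not give its own proof of this theorem: it is imported verbatim from~\cite{DBLP:conf/kr/KikotKZ12} and stated without argument, so there is nothing in the present paper to compare your proposal against. Your outline is the standard two-direction argument and is essentially the one used in the cited source.

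One point in your completeness direction deserves more care. When the natural pair $(\tr,\ti)$ you extract from an equivalence class fails the minimality clause, you propose to ``refine to a minimal sub-pair''. But shrinking $\ti$ leaves some null-mapped variables unaccounted for; those variables must then be absorbed into \emph{other} tree witnesses living in the same first-level subtree, and you need to argue that the resulting collection is still independent (i.e., the corresponding $\q_\t$'s are pairwise disjoint) even though all of them map into the same subtree of $\C_{\T,\A}$. This is true, but it is not automatic from ``distinct classes come from distinct subtrees'', since after refinement several tree witnesses may come from the \emph{same} subtree. The cleaner route, and the one taken in~\cite{DBLP:conf/kr/KikotKZ12}, is to show directly that the pair extracted from a connected component of $h^{-1}(\text{nulls})$ in the Gaifman graph of $\q$ already satisfies minimality, because any proper subset of $\q_\t$ closed under the ``contains all atoms with a variable from $\ti$'' condition would disconnect the component.
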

The number of tree witnesses, $|\Theta^{\q}_{\T}|$, is bounded by $3^{|\q|}$. On the other hand, there is a sequence of queries $\q_n$ and ontologies $\T_n$ with exponentially many (in $|\q_n|$)
tree witnesses~\cite{DBLP:conf/kr/KikotKZ12}.
The length of $\qtw$ is $O(2^{|\Theta^{\q}_{\T}|} \cdot |\q| \cdot |\T|)$. If any two tree-witnesses for $\q$ and $\T$ are \emph{compatible}---that is, they are either non-conflicting or one is included in the other---then $\qtw$ can be equivalently transformed into the \PE-rewriting
\begin{equation*}
\qtw'(\vec{x}) \ \ \ = \ \ \exists \vec{y}\,\bigwedge_{S(\vec{z})\in \q} \bigl(\, S(\vec{z})  \ \ \lor \bigvee_{\t\in\Theta^{\q}_{\T} \text{ with } S(\vec{z})\in\q_{\t}} \hspace*{-2em}\tw_\t(\tr)\,\bigr)
\end{equation*}
of size $O(|\Theta^{\q}_\T| \cdot |\q|^2 \cdot |\T|)$.
Our aim now is to investigate transformations of this kind in the more abstract setting of Boolean functions. In Section~\ref{Representability:Degree:2}, we shall see an example of $\q$ and $\T$ with only $|\q|$-many tree witnesses any PE-rewriting of which is of superpolynomial size because of multiple combinations of incompatible tree witnesses.


\section{Hypergraph Functions}\label{sec:hyper-functions}

The rewriting $\qtw$ gives rise to monotone Boolean functions we call hypergraph functions.
For the complexity theory of monotone Boolean functions, the reader is referred to~\cite{Arora&Barak09,Jukna12}.
Let $H = (V,E)$ be a hypergraph with \emph{vertices} $v \in V$ and \emph{hyperedges} $e \in E$, $E \subseteq 2^V$. A subset $X \subseteq E$ is \emph{independent} if $e \cap e' = \emptyset$, for any distinct $e,e' \in X$. Denote by $V_X$ the set of vertices occurring in the hyperedges of $X$. With each $v \in V$ and $e \in E$ we associate propositional variables $p_v$ and $p_e$, respectively. The \emph{hypergraph function} $f_H$ for $H$ is given by the Boolean formula
\begin{equation}\label{hyper}
f_H \ \ = \ \bigvee_{
X \subseteq E \text{ independent}}
\Big( \bigwedge_{v \in V \setminus V_X}
\hspace*{-0.5em} p_v \ \land \ \bigwedge_{e \in X} p_e
\Big).
\end{equation}
The rewriting $\qtw$ of $\q$ and $\T$ defines a hypergraph $H^{\q}_\T$ whose vertices are the atoms of $\q$ and hyperedges are the sets $\q_\t$, for $\t\in\Theta^{\q}_{\T}$.
Formula~\eqref{hyper} for $H^{\q}_\T$ is the same as rewriting~\eqref{rewriting0} with the atoms $S(\vec{z})\in \q$ and the tree witness formulas $\tw_\t$, for $\t\in\Theta^\q_\T$, treated as propositional variables, $p_{S(\vec{z})}$ and $p_{\t}$, respectively.

\begin{example}\em
For $\q$ and $\T$ from Example~\ref{ex:conf}, the hypergraph $H^{\q}_\T$ is shown in Fig.~\ref{fig:ex:conf} and 
\begin{equation*}
f_{H^{\q}_\T} =  (p_{R_1(x_1,y_1)} \land p_{Q(y_2,y_1)} \land p_{R_2(x_2,y_2)}) \lor
(p_{R_2(x_2,y_2)} \land p_{\t^1}) \lor (p_{R_1(x_1,y_1)} \land p_{\t^2}).
\end{equation*}

\begin{figure}[ht]
\centerline{%
\begin{tikzpicture}[>=latex, point/.style={circle,draw=black,thick,minimum size=1.5mm,thick,inner sep=0pt},label distance=-2pt]
\coordinate (c1) at (0,0);
\coordinate (c2) at (1,0.7);
\coordinate (c3) at (2,0);
\fill[fill=gray!15,rounded corners=10] (-2.1,-0.2) -- (0.5,1.2) -- (2.7,1.2) -- (0.1,-0.2) -- cycle;
\draw[fill=gray!5,rounded corners=10] (4.1,-0.2) -- (1.5,1.2) -- (-0.7,1.2) -- (1.9,-0.2) -- cycle;
\draw[rounded corners=10] (-2.1,-0.2) -- (0.5,1.2) -- (2.7,1.2) -- (0.1,-0.2) -- cycle;
\node at (c1) [point,fill=white,label=left:{\scriptsize $R_1(x_1,y_1)$}] {};
\node at (c2) [point,fill=white,label=above:{\scriptsize $Q(y_2,y_1)$}] {};
\node at (c3) [point,fill=white,label=right:{\scriptsize $R_2(y_2,x_2)$}] {};
\node at (-0.1,0.5) {$\t^1$};
\node at (2.1,0.5) {$\t^2$};
\end{tikzpicture}}%
\caption{Hypergraph $H^\q_\T$ for $\q$ and $\T$ from Example~\ref{ex:conf}.}\label{fig:ex:conf}
\end{figure}
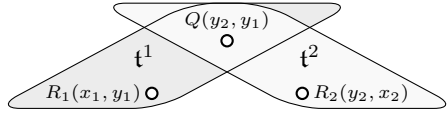
\end{example}

Suppose the function $f_{H^{\q}_\T}$ is computed by some Boolean formula $\chi$.  Consider the FO-formula obtained by adding the prefix $\exists \vec{y}$ to $\chi$ and replacing each $p_{S(\vec{z})}$ in it with $S(\vec{z})$ and each $p_{\t}$ with the formula $\tw_\t(\tr)$ of length $O(|\q| \cdot |\T|)$. By comparing~\eqref{hyper} and~\eqref{rewriting0}, we see that the resulting FO-formula is a  rewriting of $\q$ and $\T$. This gives the first claim in the following theorem; the second one requires some basic skills in datalog programming. (Recall~\cite{Arora&Barak09} that \emph{monotone} Boolean formulas and circuits contain only $\land$ and $\lor$.)

\begin{theorem}\label{NDL}
If $f_{H^{\q}_\T}$ is computed by a \textup{(}monotone\textup{)} Boolean formula $\chi$ then
there is a \textup{(}\PE-\textup{)} \FO-rewriting of $\q$ and $\T$ of size $O(|\chi| \cdot |\q| \cdot |\T|)$.

If $f_{H^{\q}_\T}$ is computed by a monotone Boolean circuit $\Cir$ then there is an \NDL-rewriting of $\q$ and $\T$ of size $O(|\Cir|\cdot |\q| \cdot |\T|)$.
\end{theorem}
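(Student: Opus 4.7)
The plan for both claims is to substitute each propositional variable in the given representation of $f_{H^{\q}_\T}$ under the dictionary $p_{S(\vec{z})} \mapsto S(\vec{z})$ and $p_\t \mapsto \tw_\t(\tr)$, then existentially quantify over $\vec{y}$. Since the Boolean formula~\eqref{hyper} for $H^{\q}_\T$ is structurally identical to rewriting~\eqref{rewriting0} once atoms and tree-witness formulas are treated as propositional variables, any representation of $f_{H^{\q}_\T}$ yields, after the substitution, a formula semantically equivalent to $\qtw$ on any complete data $\A$; the theorem on the correctness of $\qtw$ then certifies it as an FO-rewriting of $\q$ and $\T$.

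For the first claim, I would prefix $\chi$ with $\exists \vec{y}$ and perform the substitution in place. Each propositional leaf expands to a subformula of length $O(|\q| \cdot |\T|)$, giving an FO-formula of size $O(|\chi| \cdot |\q| \cdot |\T|)$; monotonicity of $\chi$ rules out negations, so in that case the result is a PE-rewriting.

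For the second claim, I would build a nonrecursive datalog program $\Pi$ with a fresh predicate $P_g$ for each gate $g$ of $\Cir$ and set the goal atom to be $P_{g^*}(\vec{x})$ at the output gate $g^*$. Let $V_g$ be the tuple of CQ variables appearing in the inputs feeding into $g$, taken as the arguments of $P_g$. The translation is: each input $p_{S(\vec{z})}$ gives the rule $P_g(\vec{z}) \leftarrow S(\vec{z})$; each input $p_\t$ expands into one rule per generator of $\t$ by unfolding $\tw_\t$, collapsing all positions of $\tr$ in the head to a single variable $z$ (mirroring the equalities $x = z$ in $\tw_\t$); each $\land$-gate $g = g_1 \land g_2$ gives the rule $P_g(V_g) \leftarrow P_{g_1}(V_{g_1}), P_{g_2}(V_{g_2})$; and each $\lor$-gate gives the two rules $P_g(V_g) \leftarrow P_{g_i}(V_{g_i})$ for $i = 1, 2$. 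Every rule has constant size apart from its argument tuples of length at most $|\q|$, and the tree-witness expansions contribute $O(|\q| \cdot |\T|)$ rules in total, so $|\Pi| = O(|\Cir| \cdot |\q| \cdot |\T|)$.

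The main obstacle is variable safety for $\lor$-gate rules when $V_{g_1} \ne V_{g_2}$: variables in $V_g \setminus V_{g_i}$ then fail to appear in the body. I plan to address this by augmenting each such rule with atoms of an auxiliary active-domain predicate $\mathsf{Dom}$, populated from the data by $O(|\T|)$ short clauses of the form $\mathsf{Dom}(x) \leftarrow S(\ldots, x, \ldots)$ for each argument position in the signature; this adds at most $|\q|$ atoms per $\lor$-rule and preserves the asymptotic size bound. Correctness then follows by induction on the depth of $g$ in $\Cir$: $P_g(\vec{v})$ is derivable from $\A$ exactly when the substituted subcircuit at $g$ evaluates to true with $V_g$ instantiated to $\vec{v}$, which at $g^*$ recovers the desired equivalence $P_{g^*}(\vec{a}) \Leftrightarrow \T, \A \models \q(\vec{a})$.
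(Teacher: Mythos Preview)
Your approach is essentially the paper's. For the first claim it is identical: prefix with $\exists\vec{y}$ and substitute $p_{S(\vec{z})}\mapsto S(\vec{z})$, $p_\t\mapsto\tw_\t(\tr)$. For the second claim both you and the paper translate the circuit gate-by-gate into datalog predicates and repair safety with an auxiliary domain predicate; the only difference is bookkeeping. The paper gives \emph{every} gate predicate the uniform arity $|\vec{x}\cup\vec{y}|$ and appends a single guard $D(\vec{z})\leftarrow\bigwedge_{z\in\vec{z}}D_0(z)$ to \emph{every} rule body, whereas you carry a gate-specific tuple $V_g$ and add $\mathsf{Dom}$-atoms only at $\lor$-gates. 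The paper's choice makes the correctness argument a one-liner (unfold to the PE formula of the first claim conjoined with $\bigwedge_z D_0(z)$); your version requires the per-gate invariant you state, but it goes through.

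One genuine slip: your goal atom ``$P_{g^*}(\vec{x})$'' does not type-check. Since every atom of $\q$ is an input to $f_{H^\q_\T}$, the output gate sees $\vec{y}$-variables as well, so $V_{g^*}\supseteq\vec{x}\cup\vec{y}$ in general. You need a final projection clause $\q'(\vec{x})\leftarrow P_{g^*}(V_{g^*})$ (this is where the $\exists\vec{y}$ of the first claim reappears). A minor miscount: the tree-witness input rules number $O(|\Cir|\cdot|\T|)$, not $O(|\q|\cdot|\T|)$, because the number of tree-witness inputs is bounded by the number of circuit inputs, not by $|\q|$; this does not affect the stated $O(|\Cir|\cdot|\q|\cdot|\T|)$ bound.
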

\begin{proof}
The first claim was explained above, so we only prove the second one. Let $\q(\vec{x}) = \exists \vec{y}\,\varphi(\vec{x},\vec{y})$. First, we define a unary predicate $D_0$ by taking the following clauses
\begin{equation}\label{d1}
D_0(z) \leftarrow \varrho(z),
\end{equation}
for every formula $\varrho(z)$ of the form $S(z)$, $\exists y\,S(z,y)$ and $\exists y\,S(y,z)$, where $S$ is a predicate occurring in $\T$ or $\q$ (we say that such a $\varrho$ is \emph{in the signature of $\T$ and $\q$}). Intuitively, the interpretation of $D_0$ contains all the individual constants of the given data instance. We then set $\vec{z} = \vec{x} \cup\vec{y}$ and define a $|\vec{z}|$-ary predicate $D$ by the following clause
\begin{equation}
D(\vec{z}) \leftarrow \bigwedge_{z \in \vec{z}} D_0(z).
\end{equation}
We need the predicate $D$ to ensure that all the clauses in our \NDL-program are safe (that is, every variable in the head of a clause also occurs in the body).

Suppose $H^\q_\T$ has $m$ vertices and $l$ hyperedges.
Let $g_1, \dots, g_n$ be the nodes of $\Cir$ ordered in such a way that $g_1, \dots, g_m$ correspond to the atoms $S_1(\vec{z}_1), \dots, S_m(\vec{z}_m)$ of $\q$, $g_{m+1}, \dots, g_{m+l}$ correspond to the tree witnesses $\t^1, \dots, \t^l$ and $g_{m+l+1},\dots,g_n$ correspond to the gates of $\Cir$ with $g_n$ its output. For $1 \le i \le m$, we take the clauses
\begin{equation}\label{d2}
G_i(\vec{z}) \leftarrow S_i(\vec{z}_i) \land D(\vec{z}).
\end{equation}
For $m < i \le m+l$,
take the clauses
\begin{equation}\label{d3}
G_i(\vec{z}) \leftarrow \varrho(z_0) \land \bigwedge_{y \in \tr^{i-m}} (z_0=y) \land D(\vec{z}), \quad \text{ for all } \varrho(z) \text{ with } \t^{i-m} \text{ is generated by } \varrho(z).
\end{equation}
where $z_0$ is a fresh variable.
For $i > m+l$, we take the clauses
\begin{align}\label{d4}
G_i(\vec{z}) \leftarrow  G_j(\vec{z}) \land G_{j'}(\vec{z}) \land  D(\vec{z}),  & \quad\text{ if } g_i = g_j \land g_{j'},\\
\label{d6}
\left.\begin{array}{ll}%
G_i(\vec{z}) \leftarrow  G_j(\vec{z}) \land D(\vec{z}),\\[3pt]
G_i(\vec{z}) \leftarrow  G_{j'}(\vec{z}) \land D(\vec{z}),
\end{array}\right\}
 &\quad \text{ if } g_i = g_j \lor g_{j'}.
\end{align}
Denote the resulting set of clauses~\eqref{d1}--\eqref{d6} by $\Pi$.  We claim that $(\Pi, G_n)$ is an \NDL-rewriting of $\q$ and $\T$ over complete data.
To see this, we can transform $(\Pi, G_n)$ to a \PE-formula of the form
\begin{equation*}
\exists \vec{y} \, \Big[\psi(\vec{x}, \vec{y}) \land \bigwedge_{z \in \vec{x} \cup \vec{y}} \  \Bigl(
\bigvee_{\varrho \text{ in signature of $\T, \q$}} \hspace*{-1em}\varrho(z)
\Bigr)
 \Big],
\end{equation*}
where $\exists \vec{y} \,\psi(\vec{x}, \vec{y})$ can be constructed by taking the Boolean formula representing $\Cir$ and replacing $p_{S(\vec{z})}$ with $S(\vec{z})$ and $p_{\t}$ with $\tw_\t(\tr)$. It follows from the first claim of the theorem that $\exists \vec{y} \,\psi(\vec{x}, \vec{y})$ is a rewriting of $\q$ and $\T$ over complete data. It should be clear that the big conjunction does not change this fact.
\end{proof}

Thus, the problem of constructing short rewritings is reducible to the problem of finding short (monotone) Boolean formulas or circuits computing the hypergraph functions.

In the next section, we consider hypergraphs as programs for computing Boolean functions and compare them with the well-known  formalisms of nondeterministic branching programs (NBPs) and nondeterministic Boolean circuits~\cite{Arora&Barak09,Jukna12}.


\section{Hypergraphs, NBPs and Nondeterministic Boolean Circuits}\label{NBP}

Let $p_1,\dots,p_n$ be propositional variables. An \emph{input} to a hypergraph program or an NBP is a vector $\vec{\alpha}\in \{0,1\}^n$ assigning the truth-value $\vec{\alpha}(p_i)$ to each of the $p_i$.  We extend this notation to negated variables and constants by setting $\vec{\alpha}(\neg p_i) = \neg  \vec{\alpha}(p_i)$,  $\vec{\alpha}(0) = 0$ and $\vec{\alpha}(1) = 1$.
	
A \emph{hypergraph program} (HGP) is a hypergraph $H = (V,E)$ in which  every vertex is labelled with $0$, $1$, $p_i$ or $\neg p_i$. We say that the hypergraph program $H$ \emph{computes} a Boolean function $f$ in case, for any input $\vec{\alpha}$, we have $f(\vec{\alpha})=1$ iff there is an independent subset in $E$ that \emph{covers all zeros}---that is, contains all the vertices in $V$ labelled with 0 under $\vec{\alpha}$. A hypergraph program  is \emph{monotone} if there are no negated variables among its vertex labels. The \emph{size}, $|H|$, of a hypergraph program $H$ is the number of hyperedges in it.

We say that a hypergraph (program) $H$ is of \emph{degree} $\leq n$ if every vertex in it belongs to at most $n$ hyperedges; $H$ is of \emph{degree} $n$ if every vertex in it belongs to exactly $n$ hyperedges. We denote by $\text{HGP}(f)$ ($\HGPn(f)$) the minimal size of hypergraph programs (of degree $\leq n$) computing $f$; $\text{HGP}_{\!\!\scriptscriptstyle+}(f)$ and $\mHGPn(f)$ are used for the size of monotone programs.

We show first that monotone hypergraph programs of degree $\le 2$ capture the computational power of hypergraph functions for hypergraphs of degree $\le 2$. On the one hand, a monotone hypergraph program $H$ computes the  subfunction of $f_H$ obtained by setting $p_e = 1$, for all $e \in E$, and setting $p_v$ to be equal to the label of $v$.
On the other hand, any hypergraph function $f_H$ can be computed by a monotone hypergraph program of degree~2 and size $O(|H|)$.
\begin{lemma}\label{lemma:hyper:program}
For any hypergraph $H$ of degree~\mbox{$\le n$}, there is a monotone HGP of degree~$\le \max(2,n)$ and
size $2|H|$ computing the function $f_H$.
\end{lemma}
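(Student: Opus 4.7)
My plan is to construct $H'$ by duplicating each hyperedge of $H$ and introducing two auxiliary vertices per hyperedge so that the Boolean variables $p_e$ of $f_H$ can be encoded and the two duplicates are placed in conflict. Concretely, I take the vertex set of $H'$ to be $V \cup \{v_e, w_e : e \in E\}$, labelling each $v \in V$ with $p_v$, each $v_e$ with $p_e$, and each $w_e$ with the constant $1$; for every $e \in E$ I add two hyperedges, $A_e = e \cup \{w_e\}$ and $B_e = \{v_e, w_e\}$. This gives $H'$ exactly $2|E|$ hyperedges, and a simple degree count (each $v \in V$ appears only in those $A_e$ with $v \in e$, each $v_e$ appears only in $B_e$, and each $w_e$ appears only in $A_e$ and $B_e$) shows that the maximum degree of $H'$ is at most $\max(2,n)$. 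Monotonicity is immediate since no label is a negated variable.

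For correctness, I would observe that because $A_e$ and $B_e$ share the vertex $w_e$, every independent set $X' \subseteq E(H')$ decomposes uniquely as $\{A_e : e \in X\} \cup \{B_e : e \in Y\}$ with $X, Y \subseteq E$ disjoint and $X$ independent in $H$. Given an assignment $\alpha$, the vertices labelled $0$ are exactly $\{v \in V : \alpha(p_v)=0\} \cup \{v_e : \alpha(p_e)=0\}$ (the vertices $w_e$ carry the constant $1$), whereas the vertices of $H'$ covered by $X'$ are $V_X \cup \{v_e : e \in Y\} \cup \{w_e : e \in X \cup Y\}$. Hence $X'$ covers all zeros iff $\alpha(p_v)=1$ for every $v \notin V_X$ and $\{e : \alpha(p_e)=0\} \subseteq Y$; and the second condition, combined with $X \cap Y = \emptyset$, is equivalent to $\alpha(p_e)=1$ for every $e \in X$. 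These are precisely the conditions under which the disjunct of $f_H$ indexed by the independent subset $X$ of $E$ is satisfied, so $H'$ computes $f_H$.

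The main obstacle is identifying the right "conflict" mechanism: the naive attempt with hyperedges $e$ and $\{v_e\}$ fails, because the singleton $\{v_e\}$ can be added to any independent set without constraining the choice of $X$, so the condition $\alpha(p_e)=1$ for $e \in X$ would be silently dropped. Introducing the shared witness vertex $w_e$, labelled with the constant $1$ so that it never needs to be covered, simultaneously forces $A_e$ and $B_e$ into mutual exclusion and removes $v_e$ from the neighbourhood of $A_e$; covering $v_e$ is therefore only possible via $B_e$, which in turn requires $e \notin X$, exactly enforcing the edge-variable condition of $f_H$.
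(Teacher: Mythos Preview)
Your construction is exactly the paper's construction with renamed objects: your $w_e$ and $v_e$ are the paper's $a_e$ and $b_e$, and your hyperedges $A_e=e\cup\{w_e\}$ and $B_e=\{v_e,w_e\}$ are precisely the paper's modified hyperedge $e$ (with $a_e$ adjoined) and the new hyperedge $e'=\{a_e,b_e\}$. Your correctness argument is in fact more detailed than the paper's, spelling out the decomposition of independent sets in $H'$ and the equivalence of the covering conditions with the disjuncts of $f_H$; the paper merely sketches the two directions.
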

\begin{proof}
Given a hypergraph $H = (V,E)$, we label each $v \in V$ by a variable $p_v$. For each $e \in E$, we add  a fresh vertex $a_e$ labelled with $1$ and a fresh vertex $b_e$ labelled with $p_e$; then we create a new hyperedge $e' = \{a_e, b_e\}$ and add $a_e$ to the hyperedge $e$. We claim that the resulting hypergraph program $H'$
computes $f_H$. Indeed, for any input $\vec{\alpha}$ with $\vec{\alpha}(p_e) = 0$, we have to include the edge $e'$ into the cover, and so cannot include the edge $e$ itself. Thus, the program returns $1$ iff there is an independent set $X$ of hyperedges with $\vec{\alpha}(p_e)=1$, for all $e\in X$, covering all zeros of the variables $p_v$. It follows that $H'$ computes $f_H$.
\end{proof}
\begin{lemma}\label{lemma:hyper:degree2}
If $f$ is computable by a \textup{(}monotone\textup{)} HGP $H$ of degree $\leq 2$, then it can also be
computed by a \textup{(}monotone\textup{)} HGP  of degree~2 and size $|H| + 3$.
\end{lemma}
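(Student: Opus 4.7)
The plan is to add exactly three new hyperedges together with three fresh vertices arranged so that the added structure is \emph{forced} into a single configuration which neither alters the feasible covers by old edges nor introduces spurious new ones.

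First I would classify the vertices of $H = (V,E)$ by their current degree: $V_0, V_1, V_2$. I then introduce fresh vertices $a, b, c$, labelling $a$ and $b$ with the constant $0$ and $c$ with the constant $1$, and take the three new hyperedges
\begin{align*}
e_1 \ = \ \{a, b\}, \qquad
e_2 \ = \ \{a, c\} \cup V_0 \cup V_1, \qquad
e_3 \ = \ \{b, c\} \cup V_0.
\end{align*}
A direct count shows that $a, b, c$ each occur in exactly two of the new edges, each $v \in V_0$ gains two incidences (from $e_2, e_3$), each $v \in V_1$ gains one (from $e_2$), and vertices in $V_2$ are untouched; so the resulting $H'$ has degree exactly $2$ and size $|H|+3$.

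The core observation is that $e_1, e_2, e_3$ pairwise intersect (at $a$, $b$, $c$ respectively), so at most one of them can lie in any independent subset of the new hyperedges. Since $a$ and $b$ are constants $0$ that belong only to new edges, every cover of $H'$ must include at least one new edge; but $e_2$ alone leaves $b$ uncovered, $e_3$ alone leaves $a$ uncovered, and independence forbids including a second new edge. Hence every independent cover of $H'$ must contain $e_1$ and must avoid $e_2$ and $e_3$. Because $e_1 = \{a,b\}$ is vertex-disjoint from all old edges, the independent covers of $H'$ under any input $\vec\alpha$ are exactly the sets $\{e_1\} \cup X$ where $X$ is an independent cover of $H$ by old edges under $\vec\alpha$; thus $f_{H'} = f_H$.

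Finally, I only introduce the constant labels $0$ and $1$ (never negated variables), so the construction preserves monotonicity. The main (small) obstacle is the case analysis showing that the fresh zeros $a$ and $b$ force $e_1$ into every independent cover while locking out $e_2$ and $e_3$; once that is established, the vertex-disjointness of $e_1$ from the old part of $H'$ makes the equality $f_{H'} = f_H$ immediate.
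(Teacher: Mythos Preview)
Your proposal is correct and is essentially the paper's own proof, up to a relabelling: your $(a,b,c)$ correspond to the paper's $(y,z,x)$ (with the same labels $0,0,1$), and your $(e_1,e_2,e_3)$ correspond to the paper's $(e_3,e_1,e_2)$. The forcing argument and the degree count are identical in substance.
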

\begin{proof}
Let $v_1, \dots, v_k$ be vertices of degree $0$ and $v_{k+1}, \dots, v_l$ vertices of degree $1$ in $H$.
It suffices to extend $H$ with vertices, $x$, $y$, $z$ labelled with $1$, $0$, $0$, respectively, and  hyperedges $e_1 = \{v_1, \dots, v_l, x, y\}$, $e_2 = \{v_1, \dots, v_k, x, z\}$ and $e_3 = \{y, z\}$. 
It is easy to see that each cover should contain $e_3$ but cannot contain $e_1, e_2$.
Indeed, $y$ and $z$ should both be covered. However,  $e_1$ and $e_2$ intersect and cannot be both in the same cover. Thus, $y$ and $z$ should be covered by $e_3$, while $e_1$ and $e_2$, intersecting $e_3$, are not in the cover. After these choices we are left to deal with the original hypergraph.
Clearly, this construction preserves monotonicity.
\end{proof}

Our next result in this section establishes a link between  hypergraph programs of degree $\le 2$ and NBPs.
Recall~\cite{Jukna12} that an NBP is a directed multigraph with two distinguished vertices, $s$ and $t$, and the arcs labelled with 0, 1, $p_i$ or $\neg p_i$ (the arcs of the first type have no effect, the arcs of the second type are called \emph{rectifiers}, and those of the third and fourth types \emph{contacts}). We assume that $s$ has no incoming and $t$ no outgoing arcs, and note that NBPs may have multiple parallel arcs (with distinct labels) connecting two nodes.
We write $v \to_{\vec{\alpha}} v'$
if there is a directed path from $v$ to $v'$ labelled with 1 under  $\vec{\alpha}$.
An NBP{}  \emph{computes} a Boolean function $f$ if $f(\vec{\alpha}) = 1$ just in case $s\to_{\vec{\alpha}} t$. The \emph{size} of an NBP is the number of arcs in it.
An NBP is \emph{monotone} if it has no negated variables among its labels.
We denote by $\NBP(f)$ (respectively, $\mNBP(f)$) the minimal size of (monotone) NBPs computing $f$.
$\text{NBP}(\poly)$ is the class of Boolean functions computable by polynomial-size NBPs.
As usual, $f^*$ is the Boolean function dual to $f$.

\begin{theorem}\label{thm:deg_2}
{\rm (\emph{i})} For any Boolean function $f$,
$\HGP(f)$ and $\NBP(\neg f)$ are polynomially related.

\noindent {\rm (\emph{ii})} For any monotone Boolean function $f$,
$\mHGP(f)$ and $\mNBP(f^{\ast})$ are polynomially related.
\end{theorem}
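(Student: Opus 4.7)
The proof proceeds by giving polynomial-size translations in both directions between HGPs of degree~$\le 2$ and NBPs. By Lemma~\ref{lemma:hyper:degree2} we may assume all HGPs have degree exactly $2$; such an HGP $H$ can then be identified with a graph $G_H$ whose nodes are the hyperedges of $H$ and whose edges are the vertices of $H$, each edge carrying the corresponding vertex's label. Under input $\vec{\alpha}$, the HGP $H$ accepts iff there is a subset $U \subseteq V(G_H)$ with (a)~no edge of $G_H$ has both endpoints in $U$ (independence of the chosen hyperedges), and (b)~every edge whose label evaluates to $0$ under $\vec{\alpha}$ has at least one endpoint in $U$ (coverage). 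This is a 2-CNF satisfiability problem on Boolean variables $x_v$, $v \in V(G_H)$, with clauses $\neg x_v \lor \neg x_w$ for every edge of $G_H$ and $x_v \lor x_w$ for every edge whose label is $0$ under~$\vec{\alpha}$.

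For the direction $\NBP(\neg f) \le \poly(\HGP(f))$ in (i), apply the standard implication-graph reduction: the 2-CNF above is unsatisfiable iff, in its implication graph (of size $O(|H|)$ with arcs depending on $\vec{\alpha}$ through the 0-labeled edges of $G_H$), some literal reaches its negation. A polynomial-size NBP guesses the offending literal and tests the two reachabilities, hence computes $\neg f$.

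For the direction $\HGP(f) \le \poly(\NBP(\neg f))$ in (i), start with an NBP $B$ for $\neg f$. Then $f(\vec{\alpha}) = 1$ iff the $0$-labeled arcs of $B$ form an $s$--$t$ cut, i.e.\ there is a partition $V(B) = S \sqcup T$ with $s \in S$, $t \in T$ and no $1$-labeled arc from $S$ to $T$. This separation condition is a polynomial-size 2-CNF over variables $x_u$ (meaning $u \in S$), with unit clauses $x_s$ and $\neg x_t$ and, for each arc $(u,v)$ labeled $\ell$, a clause $\neg x_u \lor x_v$ activated precisely when $\ell$ evaluates to $1$ under $\vec{\alpha}$. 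This 2-CNF is then realised as a degree-$2$ HGP by placing two nodes per Boolean variable of the 2-CNF, joining each pair by a $0$-labeled edge to enforce the "exactly one truth value" semantics, and realising each input-dependent clause by an edge of $G_H$ carrying the appropriate literal as its label, with auxiliary $1$-labeled vertices used to keep all degrees $\le 2$.

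For the monotone case (ii), replacing $\neg f$ with $f^*$ reflects the Menger-style path/cut duality in the NBP. A monotone HGP for $f$ gives rise to a 2-CNF whose coverage clauses are activated when positive literals evaluate to $0$; under the substitution $\vec{\alpha} \leftrightarrow \neg\vec{\alpha}$ these become clauses activated when positive literals evaluate to $1$, so the implication-graph NBP obtained as in paragraph two computes $\neg f(\neg \vec{\alpha}) = f^*(\vec{\alpha})$ using only positive literals on its arcs. In the opposite direction, a monotone NBP for $f^*$ represents paths of positive literals; its dual cut condition encodes $f$ and is realised as a monotone 2-CNF and hence a monotone degree-$2$ HGP. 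The main obstacle is exactly this monotonicity preservation: the encoding of the "exactly one of each pair" constraint and of clause activation must use only $1$-labels and positive literals, and the non-monotone short-cut via $\NLogSpace = \coNLogSpace$ used in (i) has to be replaced by a direct construction leveraging the path/cut duality so that no negated literal is ever introduced on either the HGP or the NBP side.
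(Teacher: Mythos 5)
Your first direction (from a degree-$2$ HGP to an NBP for $\neg f$) is essentially the paper's own argument: encode "independent set covering all zeros" as a 2-CNF over one variable per hyperedge, with always-active clauses $\neg x_v\lor\neg x_w$ for intersecting hyperedges and input-activated clauses $x_v\lor x_w$ for $0$-labelled shared vertices, and then use the Aspvall--Plass--Tarjan implication graph. One imprecision: unsatisfiability of a 2-CNF is \emph{not} "some literal reaches its negation" but "some variable and its negation reach \emph{each other}"; your NBP does test both reachabilities, so the construction itself is right. Your aside that part~(i) uses a "short-cut via $\NLogSpace=\coNLogSpace$" is spurious --- neither your argument nor the paper's does.

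The converse direction (from an NBP for $\neg f$ to a degree-$2$ HGP for $f$) has a genuine gap. In a hypergraph program, \emph{any} vertex shared by two hyperedges unconditionally forbids both from being chosen; this cannot be switched off by the vertex's label. So when you realise the cut clause $\neg x_u\lor x_v$ as a vertex in $u^-\cap v^+$ (double rail: $p_{u^-}\equiv\neg x_u$, $p_{v^+}\equiv x_v$), you also impose the unintended always-active clause $\neg p_{u^-}\lor\neg p_{v^+}$, i.e.\ $x_v\to x_u$: the side $S$ of your cut is forced to be closed under predecessors along \emph{every} arc, active or not. This changes the function. Concretely, take the NBP with nodes $s,a,t$ and arcs $(a,s)$ and $(a,t)$, both labelled $1$. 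There is no $s$--$t$ path, so $f=1$; but your HGP forces $p_{s^+}=1$ and $p_{t^-}=1$ (unit clauses), hence $p_{t^+}=0$, the coverage clause of the $(a,t)$-vertex then forces $p_{a^-}=1$, and $a^-$ intersects $s^+$ via the $(a,s)$-vertex, so no independent cover exists and the HGP outputs $0$. Your "auxiliary $1$-labelled vertices" are introduced only to control degrees, which is not where the problem lies (degrees are already fine). The paper sidesteps this with a single-rail encoding: one hyperedge per NBP node (chosen $\approx$ "not reachable from $s$") plus one auxiliary hyperedge per arc, so that node-hyperedges never intersect one another directly and the only unconditional exclusions are exactly the intended implications. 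Since your treatment of (ii) reuses both directions, it inherits this gap; the polarity bookkeeping you describe for the monotone case is otherwise sound.
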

\begin{proof}
We only prove {\rm (\emph{i})}; {\rm (\emph{ii})} is proved by the same argument.
Suppose $\neg f$ is computed by an NBP $G$. We construct a hypergraph program $H$ of degree $\leq 2$ as follows.
For each arc $e$ in $G$, $H$ has two vertices $e^0$ and $e^1$, which represent the beginning and the end of $e$. The vertex $e^0$ is labelled with the \emph{negated} label of $e$ in $G$ and $e^1$  with $1$. We also add to $H$ a vertex $t$ labelled with $0$.
For each arc $e$ in $G$, $H$ has an \emph{$e$-hyperedge} $\{e^0, e^1\}$.
For each vertex $v$ in $G$ but $s$ and $t$, $H$ has a \emph{$v$-hyperedge} that consists of all vertices $e^1$, for the arcs $e$ leading to $v$, and all vertices $e^0$, for the arcs $e$ leaving $v$. For the vertex $t$, $H$ contains a hyperedge that consists of $t$ and all vertices $e^1$, for the arcs $e$ leading to $t$.
We claim that the constructed hypergraph program $H$ computes $f$.
Indeed, if $s \not\to_{\vec{\alpha}} t$ in $G$ then the following subset of hyperedges is independent and covers all zeros: all $e$-hyperedges, for the arcs $e$ reachable from $s$ and labelled with 1 under $\vec{\alpha}$, and all $v$-hyperedges with $s\not\to_{\vec{\alpha}} v$.
Conversely, if $s\to_{\vec{\alpha}} t$ then it can be shown by induction that, for each arc $e_i$ of the path, the $e_i$-hyperedge must be in the cover of all zeros. Thus, no independent set can cover $t$, which is labelled with 0.

Suppose $f$ is computed by a hypergraph program $H$ of degree 2 with hyperedges $e_1, \dots, e_k$. We first provide a graph-theoretic characterisation of independent sets covering all zeros based on the implication graph~\cite{AspvallPlassTarjan79} (or the chain criterion of Lemma~8.3.1~\cite{Borgeretal97}). With any hyperedge $e_i$ we associate a propositional variable $p_{e_i}$ and with an input $\vec{\alpha}$ we associate the following set $\Phi_{\vec{\alpha}}$ of  binary clauses:
\begin{itemize}
\item[--] $\neg p_{e_i}  \lor \neg p_{e_j}$, if $e_i\cap e_j \ne \emptyset$ (informally: intersecting hyperedges cannot be chosen at the same time),
\item[--] $p_{e_i} \lor p_{e_j}$, if there is $v\in e_i\cap e_j$ such that $\vec{\alpha}(v) = 0$ (informally: all zeros must be covered; note that all vertices have at most two incident edges).
\end{itemize}
By definition, $X$ is an independent set covering all zeros iff  $X = \{ e_i \mid  \vec{\beta}(p_{e_i}) = 1\}$, for some assignment $\vec{\beta}$ satisfying $\Phi_{\vec{\alpha}}$. Let $B_{\vec{\alpha}} = (V, E_{\vec{\alpha}})$ be a directed graph with
\begin{align*}
V & ~=~ \bigl\{ e_i^+, e_i^- \mid 1\leq i \leq k\bigr\},\\
E_{\vec{\alpha}} & ~=~ \bigl\{ (e_i^+, e_j^-) \mid e_i \cap e_j \ne \emptyset \bigr\} \ \cup
\bigl\{ (e_i^-,e_j^+) \mid v\in e_i\cap e_j \text{ and } \vec{\alpha}(v) = 0 \bigr\}.
\end{align*}
($V$ is the set of all `literals' for the variables of $\Phi_{\vec{\alpha}}$ and $E_{\vec{\alpha}}$ is the arcs for the implicational form of the clauses of $\Phi_{\vec{\alpha}}$; note that $\neg p_{e_i} \lor \neg p_{e_j}$ gives rise to two implications, $p_{e_i} \to \neg p_{e_j}$ and $p_{e_j} \to \neg p_{e_i}$, and so to two arcs in the graph).
By~Lemma~8.3.1 in~\cite{Borgeretal97}, $\Phi_{\vec{\alpha}}$ is satisfiable iff there is no $e_i$ with a (directed) cycle going through $e^+_i$ and $e^-_i$.
It will be convenient for us to regard the $B_{\vec{\alpha}}$, for assignments $\vec{\alpha}$, as a single labelled directed graph $B$ with arcs of the from $(e_i^+, e_j^-)$ labelled with $1$ and arcs of the form $(e_i^-, e_j^+)$ labelled with $\neg v$, for $v\in e_i\cap e_j$. It should be clear that $B_{\vec{\alpha}}$ has a cycle going through $e_i^+$ and $e_i^-$ iff $e_i^- \to_{\vec{\alpha}} e_i^+$ and $e_i^+ \to_{\vec{\alpha}} e_i^-$ in $B$.

The required NBP will contain two distinguished vertices, $s$ and $t$, and, for each hyperedge $e_i$, two copies, $B_i^+$ and $B_i^-$, of $B$ with arcs from $s$ to the $e_i^-$ vertex of $B_i^+$, from the $e_i^+$ vertex of $B_i^+$ to the $e_i^+$ vertex of $B_i^-$ and from the $e_i^-$ vertex of $B_i^-$ to $t$. This construction  guarantees that $s\to_{\vec{\alpha}} t$ iff there is $e_i$ such that $B_{\vec{\alpha}}$ contains a cycle going through $e_i^+$ and $e_i^-$.
\end{proof}

In terms of expressive power, polynomial-size NBPs are a nonuniform analogue of the class $\NLogSpace$; in symbols: $\text{NBP}(\poly) = \NLogSpace/\poly$.
Compared to other nonuniform computational models, (monotone) NBPs sit between (monotone) Boolean formulas and Boolean circuits~\cite{Razborov91}.
As shown above, a (monotone) Boolean function $f$ is computable by a polynomial-size (monotone) HGP of degree $\le 2$ iff its dual $f^*$ is computable by a polynomial-size (monotone) NBP. (The problem whether $f^*$ can be replaced with $f$ is open; a negative solution would give a solution to the open problem~5 from~\cite{Razborov91}.)
Thus, (monotone) HGPs of degree $\le 2$ also sit between (monotone) Boolean formulas and Boolean circuits.
However, (monotone) hypergraphs of degree $\le 3$
turn out to be much more powerful than (monotone) hypergraphs of degree $\le 2$: we show now that polynomial-size (monotone) HGPs of degree $\le 3$ can compute \NP-hard Boolean functions.

A function $f \colon \{0,1\}^n \to \{0,1\}$ is computed by a \emph{nondeterministic Boolean circuit} $\Cir(\vec{x},\vec{y})$, with $|\vec{x}| = n$,
if for any $\vec{\alpha} \in \{0,1\}^n$, we have $f(\vec{\alpha})=1$ iff there is $\vec{\beta} \in \{0,1\}^m$ with $\Cir(\vec{\alpha},\vec{\beta})=1$.
The variables in $\vec{y}$ are called \emph{advice variables}.
We say that a nondeterministic circuit $\Cir(\vec{x},\vec{y})$ is \emph{monotone} if the negations in $\Cir$ are only applied to variables in $\vec{y}$. Denote by $\NBC(f)$ (respectively, $\mNBC(f)$) the minimal size of (monotone) nondeterministic Boolean circuits computing $f$.

\begin{theorem}\label{NBC}
\textup{(}i\textup{)} For any Boolean function $f$, $\textup{HGP}(f)$,  $\HGPt(f)$  and $\NBC(f)$ are polynomially related.

\noindent \textup{(}ii\textup{)} For any monotone Boolean function $f$, $\textup{HGP}_{\!\!\scriptscriptstyle +}(f)$, $\mHGPt(f)$ and $\mNBC(f)$ are polynomially related.
\end{theorem}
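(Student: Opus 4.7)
The plan is to establish three inequalities (and their monotone analogues): $\HGP(f) \le \HGPt(f)$, which is immediate; $\NBC(f) \le \poly(\HGP(f))$, by a guess-and-verify reduction; and $\HGPt(f) \le \poly(\NBC(f))$, the main direction, by a gadget-based encoding of circuit satisfaction into a degree-$3$ hypergraph covering problem. Combining these yields the polynomial equivalence of $\HGP(f)$, $\HGPt(f)$ and $\NBC(f)$; the monotone claim~(ii) follows by running the same constructions while keeping all vertex labels in $\{0,1,p_i\}$ and all negations in the NBC restricted to the advice variables.

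For $\NBC(f) \le \poly(\HGP(f))$, given an HGP $H = (V,E)$ computing $f$, I would build a nondeterministic Boolean circuit with one advice variable $y_e$ per $e \in E$, representing the guessed independent set $X = \{e : y_e = 1\}$. The circuit verifies \emph{independence}, $\neg(y_e \land y_{e'})$ for every pair $e \ne e'$ with $e \cap e' \ne \emptyset$, and \emph{coverage}, $\bigvee_{v \in e} y_e$ for every vertex $v$ whose label evaluates to $0$ under $\vec{\alpha}$. This is a circuit of size $O(|V|\cdot|E| + |E|^2)$. In the monotone case the vertex labels have no negations, so the coverage check uses only positive $\vec{p}$-literals, and the negations in the independence check only touch the advice variables, which is allowed for monotone NBCs.

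For $\HGPt(f) \le \poly(\NBC(f))$, given an NBC $\Cir(\vec{x},\vec{y})$ of size $s$, I would first apply Tseitin's transformation to obtain a 3-CNF $\psi(\vec{x},\vec{y},\vec{z})$ of size $O(s)$ satisfying $f(\vec{\alpha}) = 1$ iff $\exists \vec{y}\, \exists\vec{z}\, \psi(\vec{\alpha},\vec{y},\vec{z})$; when $\Cir$ is monotone in $\vec{x}$, Tseitin preserves positivity of the $\vec{x}$-literals. I would then encode satisfiability of $\psi$ as a degree-$\le 3$ HGP via three kinds of gadgets: (a) a \emph{choice gadget} for each auxiliary variable $u \in \vec{y}\cup\vec{z}$, consisting of a zero-labelled ``switch'' vertex $w_u$ contained in exactly two hyperedges $e_u^T,e_u^F$, so that independence and coverage together force selection of exactly one of them; (b) a \emph{clause gadget} for each $C = \ell_1\lor\ell_2\lor\ell_3$, consisting of a zero-labelled vertex $c_C$ coverable by one of three ``literal edges'' $e_{C,1},e_{C,2},e_{C,3}$; and (c) \emph{connector} gadgets that link each $e_{C,j}$ to the side of the choice gadget corresponding to $\ell_j$ (for auxiliary literals) or to an input-literal vertex labelled $p_i$ (for input literals), designed so that $e_{C,j}$ is compatible with the remaining independent selection precisely when $\ell_j$ is true.

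The main obstacle is keeping the vertex degree bounded by $3$: the switch vertex $w_u$ and the clause vertex $c_C$ would naively accumulate high degree from many clause-occurrences of $u$ and from many literals per clause. Clause vertices are safe (there are only three literals per clause), but a variable's fan-out to all of its clause occurrences must be routed through a balanced binary tree of \emph{copy gadgets}, each propagating a single truth value to two consumers using only degree-$\le 3$ vertices; this incurs at most a logarithmic factor of blowup, so the total HGP size remains polynomial in $s$. The monotone version of this direction then follows by inspection of the gadgets: when $\psi$ has only positive $\vec{x}$-literals, no vertex in the construction requires a negated label, so the HGP is monotone.
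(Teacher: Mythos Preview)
Your first two directions match the paper exactly. For the hard direction $\HGPt(f) \le \poly(\NBC(f))$ you take a genuinely different route: the paper encodes the circuit gate-by-gate, introducing for each node $g_i$ a zero-labelled vertex covered by two edges $e_{g_i},\bar e_{g_i}$ (encoding the bit of $g_i$), and then wires up $\neg$, $\lor$, $\land$ with small local gadgets, each using vertices of degree at most~$3$. You instead pass through Tseitin and a 3-SAT encoding. Both work, but two points deserve attention.

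First, the fan-out worry is misplaced. In your encoding the switch vertex $w_u$ has degree $2$ and the clause vertex $c_C$ degree $3$; a connector between a literal edge $e_{C,j}$ for a positive auxiliary literal $u$ and the choice gadget is just a single $1$-labelled vertex shared by $e_{C,j}$ and $e_u^F$, and that vertex has degree $2$. Many clause occurrences of $u$ simply enlarge the hyperedge $e_u^F$, which is allowed: the degree constraint is on \emph{vertices}, not on hyperedge sizes. No copy-tree is needed, so the construction is already linear in the CNF size. (The paper's direct encoding exploits the same point: fan-out of a gate $g_j$ only makes the edges $e_{g_j},\bar e_{g_j}$ larger.)

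Second, the monotone step has a genuine gap. Standard Tseitin for a gate $z=x_i\land z_k$ produces the clause $z\lor\neg x_i\lor\neg z_k$, so the claim ``Tseitin preserves positivity of the $\vec{x}$-literals'' is false as stated. You can repair this in one of two ways: (a) use only the \emph{backward} Tseitin implications $z_i\to\text{gate}(\text{inputs})$ together with the unit clause for the output; for a circuit monotone in $\vec{x}$ this CNF contains only positive $\vec{x}$-literals and is equisatisfiable (over $\exists\vec{y}\vec{z}$) with ``circuit outputs $1$''; or (b) keep full Tseitin, build the HGP with $\neg x_i$-labelled vertices, then delete those vertices and argue, as the paper does, that the resulting monotone HGP returns $1$ on $\vec\alpha$ iff $\exists\vec\alpha'\le\vec\alpha:\ f(\vec\alpha')=1$, which equals $f(\vec\alpha)$ by monotonicity. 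Either fix is easy, but one of them must be made explicit for part~(ii) to go through.
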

\begin{proof}
Clearly, $\text{HGP}(f) \leq \HGPt(f)$.

Now, given a  \textup{(}monotone\textup{)}  HGP of size $m$, we construct a \textup{(}monotone\textup{)}  nondeterministic circuit $\Cir(\vec{x},\vec{y})$ of size $\poly(m)$.
Its $\vec{x}$-variables are the variables of the program, and its advice variables correspond to the edges of the program.
The circuit $\Cir$ will return 1 on $(\vec{\alpha},\vec{\beta})$ iff the family $\{e_i \mid \vec{\beta}(e_i) = 1\}$ of edges of the hypergraph
forms an independent set covering all zeros under $\vec{\alpha}$.
It is easy to construct a polynomial-size circuit checking this property.
Indeed, for each pair of intersecting edges $e_i, e_j$, it is enough  to take disjunction $\neg e_i \vee \neg e_j$, 
and for each vertex of the hypergraph labelled with $p$ and adjacent to edges $e_{i_1}, \ldots, e_{i_k}$
to take disjunction $p \vee  e_{i_1} \lor \dots \lor e_{i_k}$. (Note that applications of $\neg$ to advice variables in the monotone case are allowed.)
It then remains to take a conjunction of these disjunctions.
Finally, it is easy to see that the resulting nondeterministic circuit is monotone if the hypergraph program is monotone.

Conversely, suppose $f$ is computed by a nondeterministic circuit $\Cir(\vec{x},\vec{y})$. Let $g_1,\dots,g_n$ be the nodes of $\Cir$ (including the inputs $\vec{x}$ and $\vec{y}$). We construct an HGP of degree $\le 3$ computing $f$ by taking, for each $i$, a vertex $g_i$ labelled with $0$ and a pair of hyperedges $\bar{e}_{g_i}$ and $e_{g_i}$, both containing $g_i$. No other edge contains $g_i$, and so either $\bar{e}_{g_i}$ or $e_{g_i}$ should be present in any cover of zeros. (Intuitively, if the node $g_i$ is positive then $e_{g_i}$ belongs to the cover; otherwise, $\bar{e}_{g_i}$ is there.) To ensure this property, for each input variable $x_i$, we add a vertex labelled with $\neg x_i$ to $e_{x_i}$  and a fresh vertex labelled with $x_i$ to $\bar{e}_{x_i}$.
For each gate $g_i$, we consider three cases.
\begin{itemize}
\item[--] If $g_i = \neg g_j$ then we add a vertex labelled with $1$ to $e_{g_i}$ and $\bar{e}_{g_j}$, and a vertex labelled with $1$ to $\bar{e}_{g_i}$ and $e_{g_j}$.

\item[--] If $g_i = g_j \vee g_{j'}$ then we add a vertex labelled with $1$ to $e_{g_j}$ and $\bar{e}_{g_i}$,
add a vertex labelled with $1$ to $e_{g_{j'}}$ and $\bar{e}_{g_i}$;
then, we add vertices $h_j$ and $h_{j'}$ labelled with $1$ to $\bar{e}_{g_j}$ and $\bar{e}_{g_{j'}}$, respectively, and a vertex $u_{i}$ labeled with $0$ to $\bar{e}_{g_i}$; finally, we add hyperedges $\{h_j, u_i\}$ and $\{h_{j'}, u_i\}$.

\item[--] If $g_i = g_j \wedge g_{j'}$ then we use the dual  construction.
\end{itemize}
It is not hard to see that $e_{g_i}$ is in the cover iff it contains $\bar{e}_{g_j}$ in the first case, and $e_{g_i}$ is in the cover iff it contains at least one of $e_{g_j}$ and $e_{g_{j'}}$ in the second one.
Indeed, in the second case if, say, the cover contains $e_{g_j}$  then it cannot contain $\bar{e}_{g_i}$, and so it contains $e_{g_i}$. The vertex $u_i$ in this case can be covered by the hyperedge $\{h_j, u_i\}$ since $\bar{e}_{g_j}$ is not in the cover.
Conversely, if neither $e_{g_j}$ nor $e_{g_{j'}}$ is in the cover, then it must contain both $\bar{e}_{g_j}$ and
$\bar{e}_{g_{j'}}$ and so, neither $\{h_j, u_i\}$ nor $\{h_{j'}, u_i\}$ can belong to the cover
and we will have to include $\bar{e}_{g_i}$ to the cover.
Finally, we add one more vertex labelled with $0$ to $e_{g}$ for the output gate $g$ of $\Cir$. By induction on the structure of $\Cir$ one can show that, for each $\vec{\alpha}$, there is $\vec{\beta}$ such that $\Cir(\vec{\alpha},\vec{\beta})=1$ iff the constructed HGP returns 1 on $\vec{\alpha}$.

If $\Cir$ is monotone, we remove all vertices labelled with $\neg x_i$. Then, for an input $\vec{\alpha}$, there is a cover of zeros  in the resulting HGP iff there are $\vec{\beta}$ and $\vec{\alpha}' \leq \vec{\alpha}$ with $\Cir(\vec{\alpha}',\vec{\beta})=1$.
\end{proof}

Now, we use the developed machinery to investigate the size of rewritings over ontologies of depth~1 and~2.


\section{Rewritings over Ontologies of Depth 1}\label{Representability:Degree:2}

\begin{theorem}\label{depth1}
For any ontology $\T$ of depth~$1$ and any CQ $\q$, the hypergraph $H^{\q}_{\T}$ is of degree $\leq 2$ and $|\Theta_{\T}^{\q}| \leq |\q|$.
\end{theorem}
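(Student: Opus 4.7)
The plan is to exploit the depth-$1$ restriction to show that every tree witness $\t=(\tr,\ti)$ has $|\ti|=1$; both claims of the theorem then follow easily.

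First, I analyse the shape of the chases. Since $\T$ is of depth~$1$, no $\C_\T^{\varrho(a)}$ contains a directed chain of length~$2$; equivalently, no vertex of any chase is simultaneously the target of some edge and the source of some edge. From this I conclude that each chase is a directed ``star'': the edges between $a$ and the labelled nulls are uniformly oriented (all outgoing from $a$, or all incoming to $a$), there are no loops anywhere, and there are no edges between two distinct nulls.

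Second, for a tree witness $\t=(\tr,\ti)$ with a homomorphism $h\colon\q_\t\to\C_\T^{\varrho(a)}$, I observe that no atom of $\q_\t$ has two variables in $\ti$. Indeed, if $R(z_1,z_2)\in\q_\t$ had $z_1,z_2\in\ti$, then $h(z_1)$ and $h(z_2)$ would both be labelled nulls, but by the analysis above no chase edge (whether a loop or between two distinct nulls) connects two null vertices. Since atoms with $\{z_1,z_2\}\subseteq\tr$ are excluded from $\q_\t$ by definition, every binary atom in $\q_\t$ has exactly one variable in $\tr$ and one in $\ti$; unary atoms in $\q_\t$ have their variable in $\ti$; and no self-loop $R(z,z)$ appears in $\q_\t$.

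Third, I show $|\ti|=1$ using the minimality built into the tree-witness definition. Assume $|\ti|\ge 2$ and pick $y\in\ti$. By the previous step, atoms of $\q_\t$ containing $y$ are disjoint from atoms of $\q_\t$ containing any $y'\in\ti\setminus\{y\}$. Set $\ti'=\{y\}$ and let $\tr'\subseteq\tr$ consist of the variables other than $y$ that occur in atoms of $\q_\t$ containing $y$. Then $\q_{\t'}$ equals the set of atoms of $\q_\t$ containing $y$ and is a proper subset of $\q_\t$, because any $y'\in\ti\setminus\{y\}$ appears in at least one atom of $\q_\t$, and that atom does not contain $y$. Moreover, $h'$ defined by $h'(y)=h(y)$ and $h'(z)=a$ for $z\in\tr'$ is a homomorphism $\q_{\t'}\to\C_\T^{\varrho(a)}$ with $h'^{-1}(a)=\tr'$, and $\q_{\t'}$ trivially contains all atoms of $\q$ with a variable from $\ti'$. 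Hence $(\tr',\ti')$ is a strictly smaller valid pair, contradicting the minimality of $(\tr,\ti)$.

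With $|\ti|=1$ in hand, both claims follow. Each tree witness is determined by its single internal variable $y\in\vec y$ (the root set $\tr$ is then forced to be the set of non-$y$ variables co-occurring with $y$ in the atoms of $\q$), so $|\Theta^\q_\T|\le|\vec y|\le|\q|$. For the degree bound, a unary atom $A(y)$ lies in $\q_\t$ only when $\ti=\{y\}$, and a binary atom $R(z_1,z_2)$ with $z_1\ne z_2$ lies in $\q_\t$ only when $\ti=\{z_1\}$ or $\ti=\{z_2\}$, giving at most two tree witnesses per atom. The main obstacle will be executing Step~3 in full agreement with the tree-witness definition, which minimises the atom set $\q_\t$ rather than the pair $(\tr,\ti)$ directly.
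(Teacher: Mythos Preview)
Your approach—show that every tree witness has $|\ti|=1$ and then read off both claims—is exactly the paper's. The paper, however, simply asserts ``since $\T$ is of depth~1, $\ti=\{y\}$'' without argument; your Step~3 supplies that missing justification via the minimality clause in the tree-witness definition, and the argument there is correct.

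Step~1, though, overreads the depth-$1$ condition. Your claim that the edges between $a$ and the labelled nulls are uniformly oriented (and that there are no loops at $a$) is false: look at the ontology $\T_H$ of Section~\ref{Representability:Degree:2}, explicitly declared to be of depth~1, whose tgds simultaneously produce atoms $R_v(y,x)$ and $R_{v'}(x,y)$ for the same null $y$. In the paper's intended sense, ``depth~1'' means the chase \emph{tree} has height~1 (every labelled null is created by a tgd fired at $a$, and no null fires a null-creating tgd); this only gives you that no binary atom in the chase joins two labelled-null vertices. Fortunately that single fact is all your Step~2 actually uses, so if you replace the star description in Step~1 by that observation, the remainder of your argument goes through unchanged and matches the paper's line.
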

\begin{proof}
We have to show that every atom in $\q$ belongs to at most two $\q_\t$, $\t\in \Theta_{\T}^{\q}$.
Suppose $\t = (\tr,\ti)$ is a tree witness and $y \in \ti$. Since $\T$ is of depth~1, $\ti = \{y\}$ and $\tr$ consists of all those variables $z$ in $\q$ for which $S(y,z) \in \q$ or $S(z,y) \in \q$, for some $S$.  Thus, different tree witnesses have different internal variables $y$. An atom of the form $A(u)\in\q$ is in $\q_\t$ iff $u = y$. An atom of the form $P(u,v)\in\q$ is in $\q_\t$ iff either $u = y$ or $v=y$. Therefore, $P(u,v)\in\q$ can only be covered by the tree witness with internal $u$ and by the tree witness with internal $v$.
\end{proof}

\begin{theorem}\label{polyNDL}
Any CQ $\q$ and ontology $\T$ of depth $1$ have a polynomial-size $\NDL$-rewriting.
\end{theorem}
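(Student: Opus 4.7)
The plan is to reduce the problem to a Boolean-complexity fact via the pipeline developed in Sections~\ref{sec:hyper-functions}--\ref{NBP}. By Theorem~\ref{NDL}, it suffices to exhibit a polynomial-size \emph{monotone} Boolean circuit $\Cir$ computing $f_{H^\q_\T}$: such a $\Cir$ yields an \NDL-rewriting of size $O(|\Cir|\cdot |\q|\cdot|\T|)$, which is polynomial in $|\q|$ and $|\T|$ as required.

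I would build $\Cir$ in four chained steps. First, Theorem~\ref{depth1} exhibits $H^\q_\T$ as a hypergraph of degree $\leq 2$ with $O(|\q|)$ vertices and $O(|\q|)$ hyperedges, so Lemma~\ref{lemma:hyper:program} supplies a monotone HGP of degree $\leq 2$ and size $O(|\q|)$ computing $f_{H^\q_\T}$. Second---and this is the only place where non-trivial Boolean complexity is invoked---Theorem~\ref{thm:deg_2}(ii) turns this HGP into a monotone \NBP{} of polynomial size computing the \emph{dual} function $f^*_{H^\q_\T}$.

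Third, I convert the monotone \NBP{} into a monotone circuit by the standard iterated-reachability simulation. For each pair of \NBP-nodes $u,v$ and each stage $0 \le k \le N$, where $N$ is the \NBP{} size, introduce a gate $R_k(u,v)$ satisfying
\[
R_{k+1}(u,v) \ = \ R_k(u,v) \ \lor \ \bigvee_{w} \bigl(R_k(u,w)\,\land\,\ell(w,v)\bigr),
\]
where $\ell(w,v)$ is the $\lor$ of the (un-negated) labels of arcs from $w$ to $v$. Monotonicity of the \NBP{} ensures that only $\land, \lor$ gates appear; the output $R_N(s,t)$ then computes $f^*_{H^\q_\T}$ in size $\poly(N)$. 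Fourth, dualize by swapping every $\land$ with every $\lor$ in this circuit: the resulting monotone circuit $\Cir$ has the same size and, by a routine induction on gate depth, computes $(f^*_{H^\q_\T})^* = f_{H^\q_\T}$. Applying Theorem~\ref{NDL} to $\Cir$ delivers the desired \NDL-rewriting.

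The only step that required genuine work is the second one, and that work is precisely the content of Theorem~\ref{thm:deg_2}(ii) proved in Section~\ref{NBP}; the remaining conversions (monotone \NBP{} to monotone circuit, and dualization) are classical Boolean-complexity manipulations. In particular, no further combinatorial analysis of depth-$1$ ontologies is needed beyond the structural degree bound already established in Theorem~\ref{depth1}.
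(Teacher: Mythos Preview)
Your proof is correct and follows essentially the same route as the paper: apply Theorem~\ref{depth1} to get a degree-$\le 2$ hypergraph, use Lemma~\ref{lemma:hyper:program} and Theorem~\ref{thm:deg_2}(ii) to obtain a polynomial-size monotone \NBP{} for $f^*_{H^\q_\T}$, convert it to a monotone circuit, swap $\land$/$\lor$ to pass from $f^*_{H^\q_\T}$ to $f_{H^\q_\T}$, and finish via Theorem~\ref{NDL}. The only difference is that you spell out the \NBP-to-circuit reachability simulation explicitly, whereas the paper cites~\cite{Razborov91} for this standard fact.
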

\begin{proof}
By Theorem~\ref{depth1},  the hypergraph $H^\q_\T$ is of degree $\leq 2$, and so, by Lemma~\ref{lemma:hyper:program}, there is a polynomial-size HGP of degree~$\leq 2$ computing $f_{H^\q_\T}$. By Theorem~\ref{thm:deg_2}, we have a polynomial-size monotone NBP computing $f^*_{H^\q_\T}$. But then we also have a polynomial-size monotone Boolean circuit that  computes $f^*_{H^\q_\T}$ (see, e.g., \cite{Razborov91}). By swapping $\land$ and $\lor$ in this circuit, we obtain a polynomial-size monotone circuit computing $f_{H^\q_\T}$. It remains to apply Theorem~\ref{NDL}.
\end{proof}

We show next that any hypergraph $H$ of degree 2 is representable by means of a CQ $\q_H$ and an ontology $\T_H$ of depth~1 in the sense that $H$ is isomorphic to $H^{\q_H}_{\T_H}$.
We can assume that $H = (V, E)$ comes with two fixed maps $i_1,i_2\colon V \to E$ such that $i_1(v) \ne i_2(v)$, $v \in i_1(v)$ and $v \in i_2(v)$, for any  $v\in V$.
For each hyperedge $e\in E$, we take an individual variable $z_e$ and let $\vec{z}$ be the vector of these variables. For every vertex $v \in V$, we take a binary predicate $R_v$ and set:
\begin{equation*}
\q_H ~=~ \exists \vec{z}\, \bigwedge_{v\in V} R_v(z_{i_1(v)}, z_{i_2(v)}).
\end{equation*}
Let $\T_H$ be an ontology with the following tgds, for $e \in E$:
\begin{equation}\label{eq:TH:exists}
 A_e(x) \ \ \to \ \ \exists y\, \Bigl[ \bigwedge_{\begin{subarray}{c}v\in V\\i_1(v) = e\end{subarray}} R_v(y,x) \ \ \land \bigwedge_{\begin{subarray}{c}v\in V\\i_2(v) = e\end{subarray}} R_v(x,y)\Bigr].
\end{equation}
\begin{example}\label{example1}\em
Consider $H = (V, E)$ with $V = \{v_1, v_2, v_3, v_4\}$ and $E = \{e_1, e_2, e_3\}$, where 
\begin{equation*}
e_1 = \{v_1, v_2, v_3\},\qquad e_2 = \{v_3, v_4\},\qquad e_3 = \{v_1, v_2, v_4\}, 
\end{equation*}
and assume that
\begin{align*}
& i_1\colon v_1\mapsto e_1, \ \ v_2\mapsto e_3, \ \ v_3 \mapsto e_1, \ \ v_4 \mapsto e_2,\\
& i_2\colon v_1\mapsto e_3, \ \ v_2 \mapsto e_1, \ \ v_3 \mapsto e_2, \ \  v_4 \mapsto e_3.
\end{align*}
The hypergraph $H$ is shown in Fig.~\ref{fig:hyper-witness}, where each $v_k$ is represented by an edge, $i_1(v_k)$ is indicated by the circle-shaped end of the edge and $i_2(v_k)$ by the diamond-shaped end of the edge; the $e_j$ are shown as large grey squares.

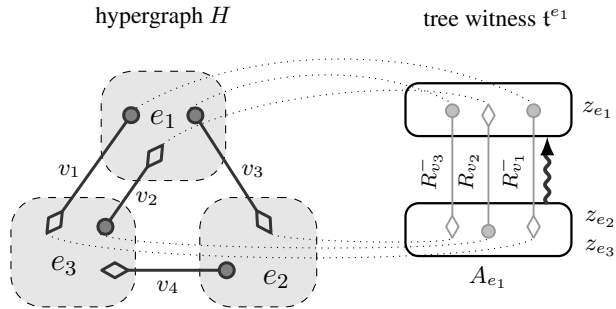
\begin{figure}[ht]
\centerline{%
\begin{tikzpicture}[>=latex, tpoint/.style={circle,draw=black,line width=0.7mm,minimum size=1.5mm,inner sep=0pt},query/.style={open diamond-*,line width=0.4mm,fill=gray,draw=black!80},tquery/.style={line width=0.25mm,fill=gray!50,draw=black!40},scale=1.2]
\coordinate (v12) at (1.7,0.3); 
\coordinate (v22) at (3,1.3); 
\coordinate (v42) at (2.3,-0.1); 
\coordinate (v11) at (2.7,1.7); 
\coordinate (v21) at (2.3,0.3); 
\coordinate (v31) at (3.3,1.7); 
\coordinate (v32) at (4.2,0.3); 
\coordinate (v41) at (3.8,-0.1); 
\coordinate (v61) at (4.3,-0.1); 
\node[rectangle,draw=black,fill=gray!20,dashed,rounded corners=12,fit=(v41) (v61) (v32),inner sep=15]  (e2) {\large \phantom{$e_2$}};
\node at ($(e2)+(0.2,-0.3)$) {\large $e_2$};
\node[rectangle,draw=black,fill=gray!20,dashed,rounded corners=12,fit=(v11) (v22) (v31),inner sep=15] (e1) {\large $e_1$};
\node[rectangle,draw=black,fill=gray!20,dashed,rounded corners=12,fit=(v12) (v21) (v42),inner sep=15] (e3) {\large \phantom{$e_3$}};
\node at ($(e3)+(-0.1,-0.15)$) {\large $e_3$};
\draw[->,query] (v12) -- (v11) node[left, midway] {\small $v_1$};
\draw[->,query] (v22) -- (v21)  node[right, pos=0.6] {\small $v_2$};
\draw[->,query] (v42) -- (v41) node[below, midway] {\small $v_4$};
\draw[->,query] (v32) -- (v31) node[right, midway] {\small $v_3$};
\node[rotate=0] at (3,2.7) {\small hypergraph $H$};
\node[rotate=0] at (6.7,2.7) {\small tree witness $\t^{e_1}$};
\coordinate (v31c) at (6.2,1.75);
\coordinate (v22c) at (6.6,1.75);
\coordinate (v11c) at (7.1,1.75);
\coordinate (w1c) at (7,1.68);
\coordinate (w1c0) at (6.2,1.68);
\coordinate (v32c) at (6.2,0.25);
\coordinate (v21c) at (6.6,0.25);
\coordinate (v12c) at (7.1,0.25);
\coordinate (a1c) at (7,0.35);
\coordinate (a1c0) at (6.2,0.35);
\node[rectangle,draw=black,line width=0.3mm,rounded corners=5,fit=(w1c0) (w1c),inner xsep=20, inner ysep=10,label=right:{\small $z_{e_1}$}] (w1) {};
\node[rectangle,draw=black,line width=0.3mm,rounded corners=5,fit=(a1c0) (a1c),inner xsep=20, inner ysep=10,label=right:{\hspace*{-0.5em}\small\begin{tabular}{c}$z_{e_2}$\\$z_{e_3}$\end{tabular}},label=below:{\small $A_{e_1}$}] (a1) {};
\draw[->,line width=0.5mm,draw=black!80,decorate,decoration={snake,amplitude=0.3mm,segment length=2mm,post length=1mm}] ($(a1)+(0.65,0.28)$) -- ($(w1)+(0.65,-0.3)$) node[right, midway] {}; 
\draw[open diamond-*,tquery]  (v32c) -- (v31c) node[above, midway,sloped] {\footnotesize $R_{v_3}^-$};
\draw[*-open diamond,tquery] (v21c) -- (v22c)  node[above, midway,sloped] {\footnotesize $R_{v_2}$};
\draw[open diamond-*,tquery] (v12c) -- (v11c) node[above, midway,sloped] {\footnotesize $R_{v_1}^-$};
\draw[dotted,bend left,looseness=0.8] (v31) to  (v31c);
\draw[dotted,bend left,looseness=0.6] (v22) to  (v22c);
\draw[dotted,bend left,looseness=0.8] (v11) to  (v11c);
\draw[dotted,bend right,looseness=0.2] (v32) to  (v32c); %
\draw[dotted,bend right,looseness=0.2] (v21) to  (v21c);
\draw[dotted,bend right,looseness=0.3] (v12) to  (v12c);
\end{tikzpicture}}%
\caption{A hypergraph $H$ and a tree witness for $\q_H$ and $\T_H$.}\label{fig:hyper-witness}
\end{figure}

\noindent In this case,
\begin{equation*}
\q_H = \exists z_{e_1}z_{e_2}z_{e_3} \,\bigl(R_{v_1}(z_{e_1},z_{e_3}) \land R_{v_2}(z_{e_3},z_{e_1}) \land
R_{v_3}(z_{e_1},z_{e_2})\land R_{v_4}(z_{e_2},z_{e_3}) \bigr)
\end{equation*}
and the ontology $\T_H$ consists of the following tgds:
\begin{align*}
A_{e_1}(x) &\to \exists y\,  \bigl[R_{v_1}(y,x)\land R_{v_2}(x,y)\land R_{v_3}(y,x)\bigr],\\
A_{e_2}(x) &\to \exists y\, \bigl[  R_{v_3}(x,y) \land  R_{v_4}(y,x)\bigr], \\
A_{e_3}(x) &\to \exists y\,  \bigl[ R_{v_1}(x,y) \land R_{v_2}(y,x)\land R_{v_4}(x,y)\bigr].
\end{align*}
The canonical model $\C^{A_{e_1}(a)}_{\T_H}$ is shown on the right-hand side of the picture above. Note that each $z_{e}$ determines the tree witness $\t^{e}$ with $\q_{\t^e} = \{R_v(z_{i_1(v)}, z_{i_2(v)}) \mid v \in e\}$; $\t^e$ and $\t^{e'}$ are conflicting iff $e \cap e' \ne \emptyset$. It follows that $H$ is isomorphic to $H_{\T_H}^{\q_H}$. In fact, this example generalises to the following:
\end{example}
\begin{theorem}\label{representable}
Any hypergraph $H$ of degree~2 is isomorphic to $H^{\q_H}_{\T_H}$, with $\T_H$ being an ontology of depth $1$.
\end{theorem}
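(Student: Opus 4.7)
The plan is to produce bijections $v \mapsto R_v(z_{i_1(v)}, z_{i_2(v)})$ on vertices and $e \mapsto \t^e$ on hyperedges and to show that these preserve incidence. The vertex map is manifestly a bijection between $V$ and the atoms of $\q_H$, since the $R_v$ are pairwise distinct binary predicates each appearing in exactly one atom of $\q_H$. For each $e \in E$ I propose the candidate $\t^e = (\tr^e, \ti^e)$ with $\ti^e = \{z_e\}$ and $\tr^e = \{z_{e'} \mid e' \ne e,\ e' \in \{i_1(v), i_2(v)\}\text{ for some } v \in e\}$, so that $\q_{\t^e}$ coincides with $\{R_v(z_{i_1(v)}, z_{i_2(v)}) \mid v \in e\}$.

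To verify that $\t^e \in \Theta^{\q_H}_{\T_H}$, I would show that it is generated by $A_e$. The canonical model $\C_{\T_H}^{A_e(a)}$ consists of the individual $a$ (on which $A_e$ holds) together with a single labelled null $y$ introduced by~\eqref{eq:TH:exists}, with edges $R_v(y,a)$ for $i_1(v) = e$ and $R_v(a,y)$ for $i_2(v) = e$; no further tgd applies because no $A_{e'}$ holds at $y$. The map $h$ sending $z_e \mapsto y$ and $z_{e'} \mapsto a$ for $e' \ne e$ is then a homomorphism $\q_{\t^e} \to \C_{\T_H}^{A_e(a)}$ with $h^{-1}(a) = \tr^e$: for each $v \in e$, the condition $i_1(v) \ne i_2(v)$ ensures that exactly one of $z_{i_1(v)}, z_{i_2(v)}$ equals $z_e$, so the atom maps either to $R_v(y,a)$ or to $R_v(a,y)$, both present in the canonical model. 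Minimality of $\q_{\t^e}$ is then forced because $\ti^e$ is a singleton and the defining condition $\vec{z} \subseteq \tr \cup \ti$ forces $\tr$ to contain every variable appearing alongside $z_e$ in some atom of $\q_H$.

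For the converse I would invoke Theorem~\ref{depth1}: since $\T_H$ has depth $1$, any tree witness $\t = (\tr, \ti)$ satisfies $|\ti| = 1$, say $\ti = \{z_e\}$ for some $e \in E$. The definition of $\q_\t$ forces it to contain every atom of $\q_H$ whose tuple involves $z_e$, and those atoms are exactly the $R_v(z_{i_1(v)}, z_{i_2(v)})$ with $v \in e$; their remaining variables force $\tr \supseteq \tr^e$, and minimality then gives $\t = \t^e$. Hence $e \mapsto \t^e$ is a bijection $E \to \Theta^{\q_H}_{\T_H}$, and preservation of incidence is immediate: $R_v(z_{i_1(v)}, z_{i_2(v)}) \in \q_{\t^e}$ iff $z_e \in \{z_{i_1(v)}, z_{i_2(v)}\}$ iff $e \in \{i_1(v), i_2(v)\}$ iff $v \in e$. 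The delicate point is the uniqueness argument: tree witnesses might \emph{a priori} be generated by alternative $\varrho$ such as $\exists y\, R_v(x,y)$, but the corresponding $\C_{\T_H}^{\varrho(a)}$ contains only a single binary atom, so any tree witness it generates still has $\ti \subseteq \{z_e \mid e \in E\}$ and coincides with some $\t^e$, contributing no new element to $\Theta^{\q_H}_{\T_H}$.
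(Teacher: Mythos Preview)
Your proof is correct and follows essentially the same approach as the paper's: both set up the vertex bijection $v \mapsto R_v(z_{i_1(v)}, z_{i_2(v)})$, exhibit $\t^e$ as the tree witness generated by $A_e$ with $\ti^e = \{z_e\}$, and argue that every tree witness for $\q_H$ and $\T_H$ must have some $z_e \in \ti$ and hence coincides with $\t^e$. You supply more detail---the explicit description of $\C_{\T_H}^{A_e(a)}$, the appeal to Theorem~\ref{depth1} for the singleton-$\ti$ property, and the treatment of alternative generators $\varrho$---whereas the paper's proof is terser and leaves these verifications implicit (and in fact writes $\tr^e = \{z_{e'} \mid e' \cap e \ne \emptyset\}$, which literally includes $z_e$; your formulation with $e' \ne e$ is the intended one).
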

\begin{proof}
We show that the map $h\colon v\mapsto R_v(z_{i_1(v)}, z_{i_2(v)})$ is an isomorphism between $H$ and $H^{\q_H}_{\T_H}$. By the definition of $\q_H$, $h$ is a bijection between $V$ and the atoms of $\q_H$. For any $e\in E$, there is a tree witness
$\t^e = (\tr^e, \ti^e)$ generated by $A_e(x)$ with
\begin{equation*}
\ti^e = \{z_e\} \ \ \text{ and  } \ \
\tr^e = \{z_{e'} \mid e' \cap e \neq\emptyset\},
\end{equation*}
and $\q_{\t^e}$ consists of the $h(v)$, for $v\in e$. Conversely, every tree witness $\t$ for $\q_H$ and $\T_H$ contains $z_e \in \ti$, for some $e\in E$, and so
$\q_\t = \{ h(v) \mid v \in e\}$.
\end{proof}

We now show that answering $\q_H$ over $\T_H$ and certain single-individual data instances amounts to computing the Boolean function $f_H$.
Let $H = (V,E)$ be a hypergraph of degree~2 with $V = \{v_1,\dots,v_n\}$ and $E  = \{e_1,\dots,e_m\}$. We denote by $\vec{\alpha}(v_i)$ the $i$-th component of $\vec{\alpha}\in \{0,1\}^n$, by $\vec{\beta}(e_j)$ the $j$-th component of $\vec{\beta}\in\{0,1\}^m$, and set
\begin{equation*}
\A_{\vec{\alpha},\vec{\beta}} \  = \  \{\,R_{v_i}(a,a) \mid \vec{\alpha}(v_i) = 1\,\} \ \cup \ \{\,A_{e_j}(a) \mid \vec{\beta}(e_j) = 1\,\}.
\end{equation*}

\begin{theorem}\label{p2}
Let $H = (V,E)$ be a hypergraph of degree~$2$. Then $\T_H, \A_{\vec{\alpha},\vec{\beta}} \models \q_H$ iff $f_H(\vec{\alpha},\vec{\beta}) = 1$, for any $\vec{\alpha}\in \{0,1\}^{|V|}$ and $\vec{\beta}\in \{0,1\}^{|E|}$.
\end{theorem}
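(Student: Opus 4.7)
My plan is to combine the tree-witness rewriting theorem (stated right after Example~\ref{ex:conf}) with the isomorphism of Theorem~\ref{representable}, and then to evaluate the tree-witness rewriting directly on the single-individual data $\A_{\vec{\alpha},\vec{\beta}}$.

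First I would check that $\A_{\vec{\alpha},\vec{\beta}}$ is complete in the sense required by the paper: its only individual is $a$, and every tgd~(\ref{eq:TH:exists}) of $\T_H$ introduces its witness as a \emph{fresh} existential $y$ disjoint from $a$, so no new ground atom of the form $R_v(a,a)$ or $A_e(a)$ can be entailed beyond those already present. This lets me invoke the tree-witness rewriting theorem and reduce the problem to showing
\begin{equation*}
\A_{\vec{\alpha},\vec{\beta}} \models \qtw \quad\text{iff}\quad f_H(\vec{\alpha},\vec{\beta}) = 1,
\end{equation*}
where $\qtw$ is the tree-witness rewriting of $\q_H$ and $\T_H$.

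Next I would apply Theorem~\ref{representable}: under the bijection $v \mapsto R_v(z_{i_1(v)},z_{i_2(v)})$ and $e \mapsto \t^e$, formula~(\ref{rewriting0}) for $\qtw$ has exactly the same syntactic shape as formula~(\ref{hyper}) for $f_H$, the propositional variable $p_v$ being replaced by the atom $R_v(z_{i_1(v)},z_{i_2(v)})$ and $p_e$ by the tree-witness formula $\tw_{\t^e}(\tr^e)$. It then only remains to evaluate $\qtw$ on $\A_{\vec{\alpha},\vec{\beta}}$. Since $\ind(\A_{\vec{\alpha},\vec{\beta}}) = \{a\}$, the outer quantifier $\exists\vec{y}$ and every inner $\exists z$ inside $\tw_{\t^e} = \exists z\,(A_e(z) \land \bigwedge_{x \in \tr^e}(x = z))$ can only assign $a$ to their variables; every equality $x = z$ then degenerates to $a = a$ and holds trivially. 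Consequently, $\tw_{\t^e}$ is true iff $A_e(a) \in \A_{\vec{\alpha},\vec{\beta}}$, that is, iff $\vec{\beta}(e)=1$, and each atom $R_v(z_{i_1(v)},z_{i_2(v)})$ is true iff $R_v(a,a) \in \A_{\vec{\alpha},\vec{\beta}}$, that is, iff $\vec{\alpha}(v)=1$. Substituting these Boolean truth values into~(\ref{rewriting0}) reproduces formula~(\ref{hyper}) for $f_H$ evaluated at $(\vec{\alpha},\vec{\beta})$, finishing the proof.

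The step I expect to be the most delicate is precisely the evaluation. Ordinarily the power of a tree witness is to identify several distinct query variables with a single anonymous null of the chase, and one might worry that the equality constraints hidden inside the $\tw_{\t^e}$ interact non-trivially with the choice of independent $\Theta$, ruling out some disjuncts or forcing additional ones. Here, however, the active domain is the single element $a$, so all such equalities degenerate and the tree-witness rewriting collapses to the purely propositional hypergraph function; it is exactly this collapse that makes the reduction to $f_H$ faithful and justifies representing hypergraph functions by CQs over depth-$1$ ontologies.
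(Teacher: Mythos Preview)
Your approach is correct and differs from the paper's. The paper argues directly on homomorphisms into the canonical model: from an independent $X$ witnessing $f_H(\vec{\alpha},\vec{\beta})=1$ it builds $h\colon\q_H\to\C_{\T_H,\A_{\vec{\alpha},\vec{\beta}}}$ by sending $z_e$ to the labelled null $w_e$ when $e\in X$ and to $a$ otherwise; conversely, from any homomorphism $h$ it recovers $X=\{e\mid h(z_e)\neq a\}$ and checks independence and coverage of zeros by hand. Your route instead factors through the tree-witness rewriting theorem and Theorem~\ref{representable}, collapsing the question to a propositional evaluation over the one-element universe $\{a\}$. This is more conceptual---it explains \emph{why} the result holds (the rewriting literally becomes the hypergraph function once the domain degenerates)---while the paper's argument is self-contained and does not need to verify completeness of $\A_{\vec{\alpha},\vec{\beta}}$ or invoke the rewriting theorem as a black box.

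One point to tighten: you write $\tw_{\t^e}=\exists z\,(A_e(z)\land\bigwedge_{x\in\tr^e}(x=z))$, but by~(\ref{tw-formula}) the formula $\tw_{\t^e}$ is a disjunction over \emph{all} $\varrho$ generating $\t^e$. When $e=\{v\}$ is a singleton hyperedge, $\t^e$ is also generated by $\exists y\,R_v(y,x)$ or $\exists y\,R_v(x,y)$, so $\tw_{\t^e}$ evaluated at $a$ becomes $\vec{\beta}(e)\lor\vec{\alpha}(v)$ rather than $\vec{\beta}(e)$. This does not break your conclusion---if a disjunct of $\qtw$ holds only via such an extra term, drop $e$ from the witnessing set and the newly uncovered $v$ still has $\vec{\alpha}(v)=1$---but the identification of $\qtw$ (evaluated at $a$) with $f_H(\vec{\alpha},\vec{\beta})$ is an equality of truth values, not a literal syntactic match, and you should phrase it that way.
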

\begin{proof}
$(\Leftarrow)$ Let $X$ be an independent subset of $E$ such that $\bigwedge_{v \in V \setminus V_X} p_v \land \bigwedge_{e \in X} p_e$ is true on $\vec{\alpha}$ (for the $p_v$) and $\vec{\beta}$ (for the $p_e$).
Define $h \colon \q_H \to \C_{\T_H, \A_{\vec{\alpha},\vec{\beta}}}$
by taking $h(z_e) = a$ if $e\notin X$ and  \mbox{$h(z_e) = w_e$}  otherwise,
where $w_e$ is the labelled null in the canonical model $\C_{\T_H, \A_{\vec{\alpha},\vec{\beta}}}$ introduced to witness the existential quantifier in~\eqref{eq:TH:exists}.
One can check that $h$ is a homomorphism, and so $\T_H, \A_{\vec{\alpha},\vec{\beta}} \models \q_H$.

\smallskip

\noindent $(\Rightarrow)$ Suppose $h \colon \q_H \to \C_{\T_H, \A_{\vec{\alpha}, \vec{\beta}}}$ is a homomorphism. We show that the set $X = \{e \in E  \mid h(z_e) \neq a\}$ is independent. Indeed, if $e, e' \in X$ and $v \in e \cap e'$, then $h$ sends one variable of the $R_v$-atom to the labelled null $w_e$ and the other end to  $w_{e'}$, which is impossible.  We claim  that $f_H(\vec{\alpha},\vec{\beta}) = 1$. Indeed, for each $v \in V\setminus V_X$, $h$ sends both ends of the $R_v$-atom to $a$, and so $\vec{\alpha}(v) = 1$. For each $e \in X$, we must have $h(z_e) = w_{e}$ because $h(z_e) \neq a$, and so $\vec{\beta}(e) = 1$. It follows that $f_H(\vec{\alpha}, \vec{\beta}) = 1$.
\end{proof}

We are now fully equipped to show that there exist CQs and ontologies of depth 1 without polynomial-size \PE-rewritings:
\begin{theorem}\label{Depth1:PE}
There is a sequence of CQs $\q_n$ and ontologies $\T_n$ of depth $1$, both of polynomial size in $n$, such that any \PE-rewriting of $\q_n$ and $\T_n$ is of size $n^{\Omega(\log n)}$. 
\end{theorem}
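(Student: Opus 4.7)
The plan is to transfer a known monotone Boolean formula lower bound into a lower bound on \PE-rewritings by chaining the correspondences between rewritings, hypergraph functions, and monotone programs developed in Sections~\ref{sec:hyper-functions} and~\ref{NBP}.

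First I would fix a sequence $g_n$ of monotone Boolean functions in polynomial monotone $\NBP$ whose monotone formula complexity is known to be $n^{\Omega(\log n)}$; the canonical choice is directed $s$-$t$ connectivity on $n$-vertex graphs, where the Karchmer--Wigderson $\Omega(\log^2 n)$ monotone depth lower bound yields the size bound via Spira balancing. Swapping $\land$ and $\lor$ preserves monotone formula size, so $g_n^*$ inherits the same bound. By Theorem~\ref{thm:deg_2}(ii), the polynomial-size monotone $\NBP$ for $g_n$ gives a polynomial-size monotone $\text{HGP}$ computing $g_n^*$, which Lemma~\ref{lemma:hyper:degree2} upgrades to an $\text{HGP}$ $H_n$ of degree exactly $2$ and polynomial size. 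Applying Theorem~\ref{representable} to $H_n$ produces the desired $\q_n := \q_{H_n}$ and $\T_n := \T_{H_n}$ of polynomial size with $\T_n$ of depth $1$.

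The core step is then to bound the size of any \PE-rewriting $\q'$ of $\q_n$ and $\T_n$ from below. I would plug in the single-individual data instances $\A_{\vec{\alpha},\vec{\beta}}$ of Theorem~\ref{p2}: because the universe contains only the constant $a$, the existential quantifiers in $\q'$ collapse to substitutions by $a$, all equalities become trivial, and each atom $R_v(a,a)$ or $A_e(a)$ reduces to the truth value $\vec{\alpha}(v)$ or $\vec{\beta}(e)$, respectively. Hence $\q'$ becomes a monotone Boolean formula $F_{\q'}$ of size $O(|\q'|)$ in the propositional variables $p_v, p_e$, which by Theorem~\ref{p2} computes $f_{H_n}$. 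Since $H_n$ computes $g_n^*$ as an $\text{HGP}$ (see Lemma~\ref{lemma:hyper:program}), the function $g_n^*$ is the restriction of $f_{H_n}$ obtained by setting all edge variables to $1$ and identifying vertex variables with their $\text{HGP}$ labels; applying this same restriction to $F_{\q'}$ yields a monotone formula for $g_n^*$ of size at most $|\q'|$, forcing $|\q'| \geq n^{\Omega(\log n)}$ by the Karchmer--Wigderson bound.

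The main obstacle is the reduction of $\q'$ to a monotone Boolean formula on single-individual data: one must verify carefully that the existentials, equalities, and any auxiliary predicates a rewriting may employ cannot evade this collapse or produce a more than linear saving in size. The only genuinely deep ingredient is the Karchmer--Wigderson lower bound; everything else is a mechanical chaining of Lemmas~\ref{lemma:hyper:program} and~\ref{lemma:hyper:degree2} with Theorems~\ref{thm:deg_2},~\ref{representable}, and~\ref{p2}.
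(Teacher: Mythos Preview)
Your proposal is correct and follows essentially the same route as the paper: chain Theorem~\ref{thm:deg_2}(ii), Lemma~\ref{lemma:hyper:degree2}, Theorem~\ref{representable}, and Theorem~\ref{p2}, then collapse an arbitrary \PE-rewriting over the single-individual instances $\A_{\vec{\alpha},\vec{\beta}}$ to a monotone Boolean formula computing (a supfunction of) the hard function. The only cosmetic differences are that the paper uses undirected $s$--$t$ reachability from~\cite{Karchmer88} rather than directed reachability (it remarks afterwards that directed works too), and it performs the $\land/\lor$ swap at the end rather than arguing upfront that $g_n^*$ inherits the monotone formula bound; also, the fact that an HGP computes a subfunction of its hypergraph function is the observation just \emph{before} Lemma~\ref{lemma:hyper:program}, not the lemma itself.
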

\begin{proof}
As shown in~\cite{Karchmer88}, there is a sequence $f_n$ of monotone  Boolean functions that are computable by polynomial-size  monotone NBPs, but any monotone Boolean formulas computing $f_n$ are of size  $n^{\Omega(\log n)}$. In fact, $f_n$ from~\cite{Karchmer88} checks whether two given vertices are connected by a path in a given undirected graph (alternatively, one could use the functions from~\cite{GrigniSipser95}). 

By Theorem~\ref{thm:deg_2}~(\emph{ii}) and Lemma~\ref{lemma:hyper:degree2}, there is a sequence of polynomial-size monotone HGPs $\Pro_n$  of degree~$2$ computing $f_n^{*}$. By applying Theorem~\ref{representable} to the hypergraph $H_n$ of $\Pro_n$, we obtain a sequence of CQs $\q_n$ and ontologies $\T_n$ of depth 1 such that $H_n$ is isomorphic to $H^{\q_n}_{\T_n}$.  We show now that any \PE-rewriting $\q'_n$ of $\q_n$ and $\T_n$ can be transformed to a monotone Boolean formula computing $f_n$ and having size $\le |\q'_n|$.

To define such a formula, we eliminate the quantifiers in $\q'_n$ in the following way:  take a constant $a$ and replace every subformula of the form $\exists x\,\psi(x)$ in $\q'_n$ with $\psi(a)$, repeating this operation as many times as possible. The resulting formula $\q''_n$ is built from atoms of the form $A_e(a)$, $R_v(a,a)$ and $S_e(a,a)$ using $\land$ and $\lor$. For every data instance $\A$ with a single individual $a$, we have $\T_n,\A \models \q_n$ iff $\A \models \q_n''$. Let $\chi_n$ be the result of replacing $S_e(a,a)$ in $\q''_n$ with $\bot$, $A_e(a)$ with $p_e$ and $R_v(a,a)$ with $p_v$. Clearly, $|\chi_n| \le |\q'_n|$.
By the definition of $\A_{\vec{\alpha}, \vec{\beta}}$ and Theorem~\ref{p2}, we obtain:
\begin{equation*}
\chi_n(\vec{\alpha}, \vec{\beta}) = 1 \quad \text{iff} \quad  \A_{\vec{\alpha}, \vec{\beta}} \models \q''_n \quad \text{iff} \quad  
\T_n,\A_{\vec{\alpha}, \vec{\beta}} \models \q_n \quad \text{iff} \quad  f_{H_n}(\vec{\alpha}, \vec{\beta}) = 1.
\end{equation*}
As $\Pro_n$ computes $f_n^{*}$, we can obtain $f_n^*$ from $f_{H_n}$ by replacing each  $p_e$ with 1 and each $p_v$ with the label of $v$ in $\Pro_n$. The same substitution in $\chi_n$ (with $\top$ and $\bot$ in place of 1 and 0) gives a monotone formula that computes $f_n^{*}$. By swapping $\lor$ and $\land$ in it, we obtain a monotone formula $\chi'_n$ computing $f_n$. It remains to recall that $|\q'_n| \ge |\chi'_n| = n^{\Omega(\log n)}$.
\end{proof}

It may be of interest to note that the function $f_n$ in the proof above is in the complexity class $\LogSpace$. The algorithm computing $f_n$ by querying the \NDL-rewriting of Theorem~\ref{polyNDL} over single-individual data instances runs in polynomial time; the algorithm querying any \PE-rewriting to compute $f_n$ requires, by Theorem~\ref{Depth1:PE}, superpolynomial time.

We note further that instead of reachability in undirected graphs in Theorem~\ref{Depth1:PE} we could use reachability in directed graphs. Indeed, since the undirected case reduces to the directed one, we have the same lower bound for computing directed reachability by monotone formulas.
On the other hand, it is known that directed reachability also can be computed by polynomial-size monotone circuits.
As reachability in directed graphs is $\NLogSpace/\poly$-complete under $\NCo$-reductions, the argument in the proof of Theorem~\ref{Depth1:PE} shows that
the existence of short \FO-rewritings of CQs and ontologies of depth~1 is equivalent to a well-known open problem in computational complexity:
\begin{theorem}
There exist polynomial-size \FO-rewritings for all CQs and ontologies of depth $1$ iff  all functions in  $\NLogSpace/\poly$ are computed by polynomial-size Boolean formulas, that is, iff $\NLogSpace/\poly \subseteq \NCo$.
\end{theorem}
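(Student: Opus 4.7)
My plan is to reduce the equivalence to the $\NLogSpace/\poly$-completeness of directed $s$-$t$ reachability under $\NCo$-reductions, using the bridges between hypergraph functions, NBPs and CQ answering already established in Sections~\ref{sec:hyper-functions}--\ref{Representability:Degree:2}.

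For the ($\Leftarrow$) direction, take any CQ $\q$ and any ontology $\T$ of depth~$1$. Theorem~\ref{depth1} gives that $H^{\q}_{\T}$ has degree at most $2$, so Lemma~\ref{lemma:hyper:program} yields a polynomial-size HGP of degree~$\le 2$ computing $f_{H^{\q}_{\T}}$. By Theorem~\ref{thm:deg_2}~(\emph{i}), the function $\neg f_{H^{\q}_{\T}}$ then has a polynomial-size NBP and hence lies in $\text{NBP}(\poly) = \NLogSpace/\poly$. Under the assumption $\NLogSpace/\poly \subseteq \NCo$, there is a polynomial-size Boolean formula for $\neg f_{H^{\q}_{\T}}$, and prefixing it with a negation produces one for $f_{H^{\q}_{\T}}$. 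Theorem~\ref{NDL} converts this into a polynomial-size \FO-rewriting of $\q$ and $\T$.

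For the ($\Rightarrow$) direction, I will show that the assumption implies that directed $s$-$t$ reachability $f_n$ has polynomial-size Boolean formulas; since reachability is $\NLogSpace/\poly$-complete under $\NCo$-reductions, this suffices. As $f_n \in \NLogSpace \subseteq \NLogSpace/\poly$, there is a polynomial NBP for $f_n$, and Theorem~\ref{thm:deg_2}~(\emph{i}) applied to $\neg f_n$ turns this into a polynomial HGP $\Pro_n$ of degree~$\le 2$ computing $\neg f_n$. Let $H_n$ be the underlying hypergraph. Theorem~\ref{representable} produces a CQ $\q_{H_n}$ and a depth-$1$ ontology $\T_{H_n}$ with $H^{\q_{H_n}}_{\T_{H_n}} \cong H_n$, and by hypothesis they admit a polynomial-size \FO-rewriting $\q'_n$. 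Now I imitate the substitution trick of Theorem~\ref{Depth1:PE}: over the single-individual instances $\A_{\vec{\alpha},\vec{\beta}}$ from Theorem~\ref{p2}, every quantifier in $\q'_n$ collapses to the single constant $a$ and every atom becomes $R_v(a,a)$ or $A_e(a)$, so replacing these by propositional variables $p_v$ and $p_e$ yields a polynomial-size Boolean formula $\chi_n$ that by Theorem~\ref{p2} computes $f_{H_n}$. Substituting $\top$ for each $p_e$ and the label of $v$ in $\Pro_n$ for each $p_v$ then turns $\chi_n$ into a formula for the Boolean function computed by $\Pro_n$, namely $\neg f_n$; a top-level negation yields a polynomial-size formula for $f_n$, placing $f_n \in \NCo$.

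The main obstacle is the bookkeeping in the backward direction: one must verify that over $\A_{\vec{\alpha},\vec{\beta}}$ the quantifier elimination really produces a propositional formula equivalent to $\q'_n$ (using that all instantiations collapse to $a$, equalities become trivial, and the only predicates in the signature of $\q_{H_n}$ and $\T_{H_n}$ are the $R_v$ and $A_e$), and that the subsequent substitution of $\Pro_n$'s labels really makes $f_{H_n}$ collapse to the function computed by the hypergraph program $\Pro_n$. Unlike the \PE-rewriting setting of Theorem~\ref{Depth1:PE} we need not dualise via De~Morgan, because a single outer negation suffices; this is why the $\NCo$ bound matches the $\NLogSpace/\poly$ upper bound exactly. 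Once these substitution lemmas are in hand, no extra blow-up is incurred, and the polynomial assumption on \FO-rewritings transfers to polynomial-size formulas for the $\NLogSpace/\poly$-complete function.
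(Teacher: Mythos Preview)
Your proposal is correct and tracks the paper's argument closely; the overall architecture (hypergraph function $\to$ HGP of degree $\le 2$ $\to$ NBP via Theorem~\ref{thm:deg_2}, and conversely CQ/ontology from Theorem~\ref{representable} plus the single-individual substitution trick of Theorem~\ref{Depth1:PE}) is the same.

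There is one small but genuine difference worth recording. In the $(\Rightarrow)$ direction the paper first invokes $\NLogSpace = \coNLogSpace$ to get a polynomial NBP for $\neg f_n$, then obtains an HGP computing $f_n$, so that the substitution yields a formula for $f_n$ directly. You instead take the obvious polynomial NBP for $f_n$, obtain an HGP computing $\neg f_n$, and fix things with a top-level negation at the end (and similarly in the $(\Leftarrow)$ direction you negate rather than appeal to $\coNLogSpace/\poly = \NLogSpace/\poly$). This is a mild simplification: it sidesteps Immerman--Szelepcs\'enyi entirely, at the cost of carrying a negation through. Both routes give the same bound.

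One bookkeeping point you should make explicit: Theorem~\ref{representable} requires a hypergraph of degree \emph{exactly}~$2$, whereas Theorem~\ref{thm:deg_2} only gives degree $\le 2$. You need Lemma~\ref{lemma:hyper:degree2} to bridge this (the paper does invoke it). Also, in the quantifier-elimination step remember that an \FO-rewriting may contain universal quantifiers and negated atoms; over a one-element domain $\forall$ and $\exists$ both collapse to the single instantiation, and negated atoms simply become negated propositional variables, so the resulting $\chi_n$ is a (non-monotone) Boolean formula of the same size, as required.
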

\begin{proof}
$(\Leftarrow)$ Suppose $\NLogSpace/\poly \subseteq \NCo$. Take an arbitrary CQ and an ontology of depth $1$. By Theorem~\ref{depth1}, its hypergraph $H_\T^\q$ is of degree~$\leq 2$ and polynomial size. By Lemma~\ref{lemma:hyper:program}, there is a polynomial-size HGP $H$ computing $f_{H_\T^\q}$, whence, by Theorem~\ref{thm:deg_2}~(\emph{i}), there is a polynomial-size NBP computing $\neg f_{H_\T^\q}$, and so $f_{H_\T^\q}$ is in $\coNLogSpace/\poly = \NLogSpace/\poly$.
Therefore, by our assumption,  it can be computed by a polynomial-size Boolean formula. 
By Theorem~\ref{NDL}, the latter translates into a polynomial-size \FO-rewriting of $\q$ and $\T$.

\smallskip

\noindent $(\Rightarrow)$ Suppose that there exist polynomial-size \FO-rewritings  for all CQs and ontologies of depth~$1$. Consider a sequence of functions $f_n$ that compute the connectivity function in \emph{directed} graphs. Since $f_n \in \NLogSpace$ and $\NLogSpace = \coNLogSpace$, the functions $\neg f_n$ are computable by a sequence of polynomial-size NBPs.
Now we use an argument similar to the one in the proof of Theorem~\ref{Depth1:PE}. We apply Theorem~\ref{thm:deg_2}~(\emph{i}) and Lemma~\ref{lemma:hyper:degree2} to $\neg f_n$ and obtain a polynomial-size HGP $H_n'$ of degree~$2$ that computes $f_n$.
By Theorem~\ref{representable},
there are sequences of CQs $\q_n$ and ontologies $\T_n$ of depth~1 such that $f_n$ is a subfunction of
$f_{H^{\q_n}_{\T_n}}$ in the sense that $f_n$ is the result of replacing each $p_e$ with $1$ and each $p_v$ with the label of $v$ in $H_n'$.
By our assumption, there is a polynomial-size \FO-rewriting $\q_n'$ of $\q_n$ and $\T_n$. We eliminate the quantifiers in $\q_n'$ and apply to the result the substitution giving $f_n$ from  $\smash{f_{H^{\q_n}_{\T_n}}}$ to obtain a polynomial-size propositional Boolean formula that computes $f_n$.  Since $f_n$ is $\NLogSpace/\poly$-complete under $\NCo$ reductions, we then must have $\NLogSpace/\poly \subseteq \NCo$.
\end{proof}

As we shall see in Section~\ref{sec:tree}, \emph{tree-shaped} CQs and ontologies of depth 1 always have polynomial-size PE-rewritings.


\section{Rewritings over Ontologies of Depth 2}\label{sec:sizeof depth1}

Our next aim is to show that CQs and ontologies of depth~2 can compute the \NP-complete function checking whether a graph with $n$ vertices has a $k$-clique. We remind the reader (see, e.g., \cite{Arora&Barak09} for details) that the monotone Boolean function $\cli_{n,k}(\vec{e})$ of $n(n-1)/2$ variables $e_{jj'}$, $1 \leq j < j'\le n$,  returns 1 iff the graph with  vertices $\{1,\dots,n\}$ and  edges $\{ \{j,j'\} \mid e_{jj'}=1\}$ contains a $k$-clique.
A series of papers,  started by Razborov's~\cite{Razborov85}, gave an exponential lower bound for the size of monotone circuits computing $\cli_{n,k}$: $2^{\Omega(\sqrt{k})}$ for $k \leq \frac{1}{4} (n/ \log n)^{2/3}$~\cite{AlonB87}. For monotone formulas, an even better lower bound is known: $2^{\Omega(k)}$ for $k = 2n/3$~\cite{RazW92}.

We first construct a monotone HGP  computing $\cli_{n,k}$ and then use the intuition behind the construction to encode $\cli_{n,k}$ by means of a Boolean CQ $\q_{n,k}$ and an ontology $\T_{n,k}$ of depth 2 and polynomial size. As a consequence, any \PE- or \NDL-rewriting of $\q_{n,k}$ and $\T_{n,k}$ is of exponential size, while any \FO-rewriting is of superpolynomial size unless $\NP \subseteq \Ppoly$.

Given $n$ and $k$, let $H_{n,k}$ be a monotone HGP with vertices
\begin{align*}
& w_{jj'} \text{ labelled with } e_{jj'}, && (1 \le j < j' \le n),\\
& u_{jj'} \text{ and } u_{j'j} \text{ labelled with } 1,  && (1 \le j < j'\le n),\\
& v_i \text{  labelled with } 0 && (1 \le i \le k), 
\end{align*}
and  hyperedges
\begin{align*}
& h^{jj'} \!=\! \{w_{jj'}, u_{jj'} \} \ \ \ \text{ and } \ \ \ h^{j'j} \!=\! \{w_{jj'}, u_{j'j} \} &&  (1 \le j < j' \le n),\\
& f^{ij}  = \{v_i \} \cup \{ u_{jj'} \mid j' \ne j \} && (1 \le i \le k,  1 \le j \le n).
\end{align*}
Informally, the $w_{jj'}$ represent the edges of the complete graph with  $n$ vertices; they can be turned `on' or `off' by means of the variables $e_{jj'}$.
The vertex $u_{jj'}$ together with the hyperedge $h^{jj'}$ represent  the `half' of the edge connecting $j$ and $j'$ that is adjacent to $j$; the other `half' is represented by $u_{j'j}$ and $h^{j'j}$. The vertices $v_i$ represent a $k$-clique and the edge $f^{ij}$ corresponds to the choice of the vertex $j$ of the graph as the $i$th element of the clique.
The hypergraph of $H_{4,2}$ is shown in Fig.~\ref{fig:H42}.

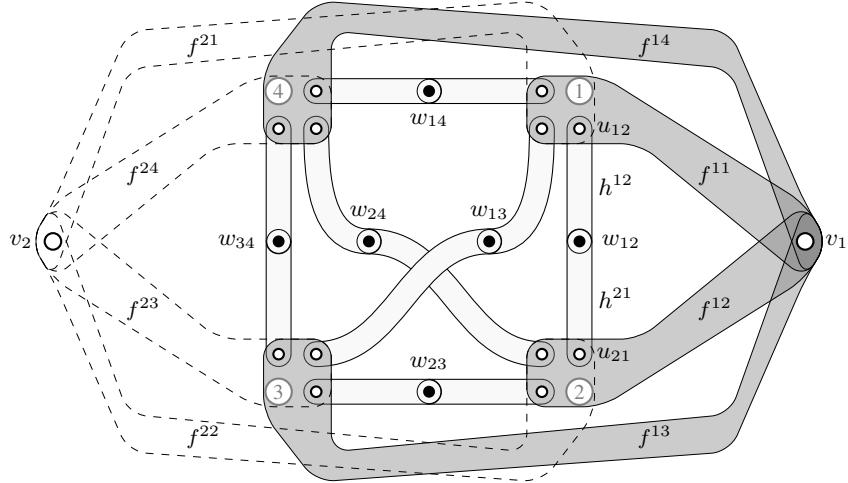
\begin{figure}[ht]
\centerline{\begin{tikzpicture}[>=latex, point/.style={circle,draw=black,fill=white,thick,minimum size=1.4mm,inner sep=0pt},
npoint/.style={circle,fill=white,draw=gray,thick,minimum size=3.5mm,inner sep=0pt},
wpoint/.style={circle,draw=black,fill=black,thick,minimum size=1.3mm,inner sep=0pt},
eline/.style={line cap=round,double=gray!5,double distance=9pt},
enode/.style={minimum size=9pt,draw=black,inner sep=0pt,ultra thin,circle}]\small
\coordinate (w12) at (4,2);
\coordinate (w23) at (2,0);
\coordinate (w34) at (0,2);
\coordinate (w14) at (2,4);
\coordinate (w24) at (1.2,2);
\coordinate (w13) at (2.8,2);
\coordinate (u12) at (4,3.5);
\coordinate (u21) at (4,0.5);
\coordinate (u23) at (3.5,0);
\coordinate (u32) at (0.5,0);
\coordinate (u34) at (0,0.5);
\coordinate (u43) at (0,3.5);
\coordinate (u41) at (0.5,4);
\coordinate (u14) at (3.5,4);
\coordinate (u24) at (3.5,0.5);
\coordinate (u42) at (0.5,3.5);
\coordinate (u13) at (3.5,3.5);
\coordinate (u31) at (0.5,0.5);
\begin{scope}
\draw[eline] (u12) -- (w12)  node[right=4pt,midway] {$h^{12}$};
\draw[eline] (u21) -- (w12)  node[right=4pt,midway] {$h^{21}$};
\node[enode] at (w12) {};
\draw[eline] (u23) -- (w23); 
\draw[eline] (u32) -- (w23); 
\node[enode] at (w23) {};
\draw[eline] (u34) -- (w34); 
\draw[eline] (u43) -- (w34); 
\node[enode] at (w34) {};
\draw[eline] (u14) -- (w14); 
\draw[eline] (u41) -- (w14); 
\node[enode] at (w14) {};
\draw[eline,out=180,in=0] (u24) to (w24); 
\draw[eline,out=-90,in=180] (u42) to (w24); 
\node[enode] at (w24) {};
\draw[eline,eline,out=-90,in=0] (u13) to (w13); 
\draw[eline,out=0,in=180] (u31) to (w13); 
\node[enode] at (w13) {};
\end{scope}
\begin{scope}[fill opacity=0.4,fill=gray,rounded corners=7,ultra thin]
\fill ($(u13)+(-0.2,-0.2)$) -- ($(u14)+(-0.2,0.2)$) -- ++(1,0) -- (7,2.5) -- (7.3,2) -- (7,1.5) -- ($(u12)+(0.8,-0.2)$) -- cycle;
\fill ($(u24)+(-0.2,0.2)$) -- ($(u23)+(-0.2,-0.2)$) -- ++(1,0) -- (7,1.5) -- (7.3,2) --(7,2.5) -- ($(u21)+(0.8,0.2)$) -- cycle;
\fill ($(u31)+(0.2,0.2)$) -- ($(u34)+(-0.2,0.2)$) -- ++(0,-1) -- (0.5,-1.2) -- (6,-0.8) -- (7,1.5) -- (7.3,2) --(7,2.5) -- (6,-0.3)  -- ($(u32)+(0.2,-0.8)$) -- cycle;
\fill ($(u42)+(0.2,-0.2)$) -- ($(u43)+(-0.2,-0.2)$) -- ++(0,1) -- (0.5,5.2) -- (6,4.8) -- (7,2.5) --(7.3,2) -- (7,1.5) -- (6,4.3)  -- ($(u41)+(0.2,0.8)$) -- cycle;
\end{scope}
\begin{scope}[rounded corners=7,ultra thin]
\draw ($(u13)+(-0.2,-0.2)$) -- ($(u14)+(-0.2,0.2)$) -- ++(1,0) -- (7,2.5) -- (7.3,2) -- (7,1.5) -- ($(u12)+(0.8,-0.2)$) -- cycle;
\draw ($(u24)+(-0.2,0.2)$) -- ($(u23)+(-0.2,-0.2)$) -- ++(1,0) -- (7,1.5) -- (7.3,2) --(7,2.5) -- ($(u21)+(0.8,0.2)$) -- cycle;
\draw ($(u31)+(0.2,0.2)$) -- ($(u34)+(-0.2,0.2)$) -- ++(0,-1) -- (0.5,-1.2) -- (6,-0.8) -- (7,1.5) -- (7.3,2) --(7,2.5) -- (6,-0.3)  -- ($(u32)+(0.2,-0.8)$) -- cycle;
\draw ($(u42)+(0.2,-0.2)$) -- ($(u43)+(-0.2,-0.2)$) -- ++(0,1) -- (0.5,5.2) -- (6,4.8) -- (7,2.5) --(7.3,2) -- (7,1.5) -- (6,4.3)  -- ($(u41)+(0.2,0.8)$) -- cycle;
\end{scope}
\begin{scope}[rounded corners=7,ultra thin,dashed]
\draw ($(u42)+(0.2,-0.2)$) -- ($(u41)+(0.2,0.2)$) -- ++(-1,0) -- (-3,2.5) -- (-3.3,2) -- (-3,1.5) -- ($(u43)+(-0.8,-0.2)$) -- cycle;
\draw ($(u31)+(0.2,0.2)$) -- ($(u32)+(0.2,-0.2)$) -- ++(-1,0) -- (-3,1.5) -- (-3.3,2) -- (-3,2.5) -- ($(u34)+(-0.8,0.2)$) -- cycle;
\draw ($(u24)+(-0.2,0.2)$) -- ($(u21)+(0.2,0.2)$) -- ++(0,-1) -- (3.5,-1.2) -- (-2,-0.8) -- (-3,1.5) -- (-3.3,2) --(-3,2.5) -- (-2,-0.3)  -- ($(u23)+(-0.2,-0.8)$) -- cycle;
\draw ($(u13)+(-0.2,-0.2)$) -- ($(u12)+(0.2,-0.2)$) -- ++(0,1) -- (3.5,5.2) -- (-2,4.8) -- (-3,2.5) -- (-3.3,2) --(-3,1.5) -- (-2,4.3)  -- ($(u14)+(-0.2,0.8)$) -- cycle;
\end{scope}
\begin{scope}\footnotesize
\node at (5,-0.6) {$f^{13}$};
\node at (5,4.6) {$f^{14}$};
\node at (5.8,2.9) {$f^{11}$};
\node at (5.8,1.1) {$f^{12}$};
\node at (-1,-0.6) {$f^{22}$};
\node at (-1,4.6) {$f^{21}$};
\node at (-1.8,2.9) {$f^{24}$};
\node at (-1.8,1.1) {$f^{23}$};
\end{scope}
\begin{scope}[label distance=1pt]
\node[point,minimum size=2.2mm,label=right:{$v_1$}] at (7,2) {};
\node[point,minimum size=2.2mm,label=left:{$v_2$}] at (-3,2) {};
\end{scope}
\node (n1) at (4,4) [npoint] {\footnotesize\textcolor{gray}{1}};
\node (n2) at (4,0) [npoint] {\footnotesize\textcolor{gray}{2}};
\node (n3) at (0,0) [npoint] {\footnotesize\textcolor{gray}{3}};
\node (n4) at (0,4) [npoint] {\footnotesize\textcolor{gray}{4}};
\begin{scope}[label distance=3pt]\small
\node at (w12) [wpoint,label=right:{$w_{12}$}] {};
\node at (w23)  [wpoint,label=above:{$w_{23}$}] {};
\node at (w34)  [wpoint,label=left:{$w_{34}$}] {};
\node at (w14)  [wpoint,label=below:{$w_{14}$}] {};
\node at (w24)  [wpoint,label=above:{$w_{24}$}] {};
\node at (w13) [wpoint,label=above:{$w_{13}$}] {};
\end{scope}
\begin{scope}[label distance=1pt]
\node at (u12)  [point,label=right:{$u_{12}$}] {};
\node at (u21)  [point,label=right:{$u_{21}$}] {};
\node at (u23)  [point] {}; 
\node at (u32)  [point] {}; 
\node at (u34)  [point] {}; 
\node at (u43)  [point] {}; 
\node at  (u41) [point] {}; 
\node at (u14)  [point] {}; 
\node at (u24)  [point] {}; 
\node at (u42)  [point] {}; 
\node at (u13) [point] {}; 
\node at (u31)  [point] {}; 
\end{scope}
\end{tikzpicture}}%
\caption{The hypergraph of $H_{4,2}$.}\label{fig:H42}
\end{figure}

\begin{theorem}\label{thm:clique}
The HGP $H_{n,k}$ computes $\cli_{n,k}$.
\end{theorem}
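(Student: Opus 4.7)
Fix an input $\vec{\alpha}\in\{0,1\}^{n(n-1)/2}$. Under $\vec{\alpha}$, the vertices labelled $0$ are precisely the $v_i$ (for $1\le i\le k$) and those $w_{jj'}$ with $\vec{\alpha}(e_{jj'})=0$; all $u$-vertices are labelled $1$. The plan is to show both directions of the equivalence between the existence of an independent cover of these zeros and the existence of a $k$-clique in the graph encoded by $\vec{\alpha}$.

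For the ($\Leftarrow$) direction, suppose $C=\{c_1,\dots,c_k\}$ is a $k$-clique. Choose $f^{ic_i}$ to cover each $v_i$, and for every unordered pair $\{j,j'\}$ with $j<j'$ and $\vec{\alpha}(e_{jj'})=0$ choose one of the $h$-edges $h^{jj'}$ or $h^{j'j}$ to cover $w_{jj'}$, as follows: if neither $j$ nor $j'$ lies in $C$, take $h^{jj'}$; if exactly one lies in $C$, take the $h$-edge whose $u$-vertex is indexed by the non-clique endpoint first (so its $u$-vertex is not used by any $f$-edge); the case that both lie in $C$ cannot occur since $C$ is a clique and hence $\vec{\alpha}(e_{jj'})=1$. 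A direct check shows pairwise disjointness: the $f^{ic_i}$ are disjoint because $c_i\ne c_{i'}$ for $i\ne i'$ and their $v$-vertices differ; the chosen $h$-edges are mutually disjoint because distinct $\{j,j'\}$ yield distinct $w$-vertices and distinct ordered $u$-vertices; and by construction, each chosen $h$-edge avoids the $u$-vertices used by the $f$-edges.

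For the ($\Rightarrow$) direction, let $X$ be an independent set of hyperedges covering all zeros. Each $v_i$ must be covered, and the only hyperedges containing $v_i$ are those of the form $f^{ij}$; thus there exists $c_i$ with $f^{ic_i}\in X$. Observe that $f^{ic}$ and $f^{i'c}$ share all $u$-vertices $u_{cj'}$ with $j'\ne c$, so independence forces $c_i\ne c_{i'}$ whenever $i\ne i'$; hence $|C|=k$ for $C=\{c_1,\dots,c_k\}$. To show $C$ is a clique, take any $i\ne i'$ and put $j=c_i$, $j'=c_{i'}$. The vertex $w_{\min(j,j')\max(j,j')}$ occurs only in the two hyperedges $h^{jj'}$ and $h^{j'j}$; since $f^{ij}\in X$ contains $u_{jj'}$ and $f^{i'j'}\in X$ contains $u_{j'j}$, independence of $X$ excludes both $h$-edges from $X$. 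Thus $w_{\min(j,j')\max(j,j')}$ can be covered by $X$ only if it is labelled $1$, i.e.\ $\vec{\alpha}(e_{c_ic_{i'}})=1$. Hence $C$ is a $k$-clique, completing the proof.

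The main subtlety is keeping the two-layer bookkeeping straight: the $f$-edges chosen for the clique necessarily consume the ``inside'' $u$-vertex of each pair $(c_i,j')$, so the argument hinges on the fact that for any pair of vertices $\{j,j'\}$ the vertex $w_{jj'}$ sits in exactly two $h$-edges, one for each orientation of the pair, and these orientations are independently usable precisely when the corresponding endpoint is not part of the guessed clique. Once this is noticed, both directions reduce to straightforward disjointness checks.
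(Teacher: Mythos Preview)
Your proof is correct and follows essentially the same approach as the paper's: in both directions you and the authors pick $f^{ic_i}$ to cover the $v_i$ and use the two $h$-edges on each $w_{jj'}$ to argue about non-edges, with independence doing the work. Your ($\Leftarrow$) covers only the $w_{jj'}$ that are actually labelled $0$ while the paper throws in all $h^{jj'}$ with $j\notin C$, but this is a cosmetic difference; and your ($\Rightarrow$) is in fact slightly more careful than the paper's, since you explicitly verify $|C|=k$ (observing that $f^{ic}$ and $f^{i'c}$ share $u$-vertices), a point the paper leaves implicit.
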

\begin{proof}
We show that, for each $\vec{e} \in \{0,1\}^{n(n-1)/2}$,  there is an independent set $X$ of hyperedges covering all zeros in $H_{n,k}$ under $\vec{e}$ iff $\cli_{n,k}(\vec{e}) = 1$.

\smallskip

\noindent $(\Leftarrow)$ Let $\lambda\colon \{1, \dots, k\} \to \{1, \dots, n\}$ be such that $C = \{\lambda(i) \mid 1 \leq i \leq k \}$ is a $k$-clique in the graph $G$ given by $\vec{e}$. Then
\begin{equation*}
X \ \ = \ \ \bigl\{f^{i\lambda(i)} \mid 1 \le i \le k\bigr\} \ \ \cup \ \ 
\bigl\{h^{jj'} \mid j \notin C, j' \in C\bigr\} \ \ \cup \ \
\bigl\{h^{jj'} \mid  j,j' \notin C \text{ and } j < j' \bigr\}
\end{equation*}
is independent and covers all zeros in $H_{n,k}$ under $\vec{e}$.
Indeed, $X$ is independent because, in every $h^{jj'}\in X$, the index $j$ does not belong to $C$. By definition, each $f^{i\lambda(i)}$ covers $v_i$, for $1\le i \le k$. Thus, it remains to  show that any $w_{jj'}$ with $e_{jj'}=0$ (that is, the edge $\{j,j'\}$ belongs to the complement of $G$) is covered by some hyperedge.
All edges of the complement of $G$ can be divided into two groups: those that are adjacent to $C$,
and those that are not. The $w_{jj'}$ that correspond to the edges of the former group are
covered by the $h^{jj'}$ from the middle disjunct of $X$,
where $j$ corresponds to the end of the edge $\{j,j'\}$ that is not $C$.
To cover $w_{jj'}$ of the latter group, take
$h^{jj'}$ from the last disjunct of $X$.

\smallskip

\noindent $(\Rightarrow)$ Suppose $X$ is an independent set covering all zeros labelling the vertices of $H_{n,k}$, for an input $\vec{e}$. The vertex $v_i$, $1\le i \le k$, is labelled with 0, and so  there is $\lambda(i)$ such that $f^{i\lambda(i)} \in X$.  We claim that  $C = \{ \lambda(i) \mid 1\leq i \leq k\}$ is a $k$-clique in the graph given by $\vec{e}$.
Indeed, suppose that the graph has no edge between some vertices $j,j'\in C$, that is, $e_{jj'} = 0$ for $j < j'$. Since $w_{jj'}$ is labelled with 0, it must be covered by a hyperedge in $X$, which can only be either  $h^{jj'}$ or $h^{j'j}$ (see the picture above). But $h^{jj'}$ intersects $f^{\lambda^{-1}(j)j}$ and $h^{j'j}$ intersects $f^{\lambda^{-1}(j')j'}$, which is a contradiction.
\end{proof}

\begin{figure}[ht]
\centerline{\begin{tikzpicture}[>=latex, 
query/.style={thick},
dquery/.style={densely dotted,thick},
point/.style={circle,draw=black,fill=white,thick,minimum size=1.4mm,inner sep=0pt},
zpoint/.style={circle,draw=black,fill=white,thick,minimum size=1.2mm,inner sep=0pt},
upoint/.style={circle,draw=black,fill=white,thick,minimum size=2.2mm,inner sep=0pt},
xpoint/.style={rectangle,draw=black,fill=white,thick,minimum size=1.5mm,inner sep=0pt},
npoint/.style={circle,fill=white,draw=gray,thick,minimum size=3.5mm,inner sep=0pt},
wpoint/.style={circle,draw=black,fill=white,thick,minimum size=2.2mm,inner sep=0pt}, 
eline/.style={line cap=round,draw=gray!15,double=gray!15,double distance=9pt},
enode/.style={circle,draw=black,fill=white,thick,minimum size=2.2mm,inner sep=0pt}, 
scale=1.3]\small
\coordinate (w12) at (4,2);
\coordinate (w23) at (2,0);
\coordinate (w34) at (0,2);
\coordinate (w14) at (2,4);
\coordinate (w24) at (1.2,2);
\coordinate (w13) at (2.8,2);
\coordinate (u12) at (4,3.5);
\coordinate (u21) at (4,0.5);
\coordinate (u23) at (3.5,0);
\coordinate (u32) at (0.5,0);
\coordinate (u34) at (0,0.5);
\coordinate (u43) at (0,3.5);
\coordinate (u41) at (0.5,4);
\coordinate (u14) at (3.5,4);
\coordinate (u24) at (3.5,0.5);
\coordinate (u42) at (0.5,3.5);
\coordinate (u13) at (3.5,3.5);
\coordinate (u31) at (0.5,0.5);
\begin{scope}
\draw[eline] (u12) -- (w12); 
\draw[eline] (u21) -- (w12); 
\node[enode] at (w12) {};
\draw[eline] (u23) -- (w23); 
\draw[eline] (u32) -- (w23); 
\node[enode] at (w23) {};
\draw[eline] (u34) -- (w34); 
\draw[eline] (u43) -- (w34); 
\node[enode] at (w34) {};
\draw[eline] (u14) -- (w14); 
\draw[eline] (u41) -- (w14); 
\node[enode] at (w14) {};
\draw[eline,out=180,in=0] (u24) to (w24); 
\draw[eline,out=-90,in=180] (u42) to (w24); 
\node[enode] at (w24) {};
\draw[eline,eline,out=-90,in=0] (u13) to (w13); 
\draw[eline,out=0,in=180] (u31) to (w13); 
\node[enode] at (w13) {};
\end{scope}
\begin{scope}[fill opacity=0.4,fill=gray,rounded corners=7]
\fill ($(u13)+(-0.2,-0.2)$) -- ($(u14)+(-0.2,0.2)$) -- ++(1,0) -- (7,2.5) -- (7.3,2) -- (7,1.5) -- ($(u12)+(0.8,-0.2)$) -- cycle;
\fill ($(u24)+(-0.2,0.2)$) -- ($(u23)+(-0.2,-0.2)$) -- ++(1,0) -- (7,1.5) -- (7.3,2) --(7,2.5) -- ($(u21)+(0.8,0.2)$) -- cycle;
\fill ($(u31)+(0.2,0.2)$) -- ($(u34)+(-0.2,0.2)$) -- ++(0,-1) -- (0.5,-1.2) -- (6,-0.8) -- (7,1.5) -- (7.3,2) --(7,2.5) -- (6,-0.3)  -- ($(u32)+(0.2,-0.8)$) -- cycle;
\fill ($(u42)+(0.2,-0.2)$) -- ($(u43)+(-0.2,-0.2)$) -- ++(0,1) -- (0.5,5.2) -- (6,4.8) -- (7,2.5) --(7.3,2) -- (7,1.5) -- (6,4.3)  -- ($(u41)+(0.2,0.8)$) -- cycle;
\end{scope}
\begin{scope}[rounded corners=7,ultra thin,dashed,gray!80]
\draw ($(u42)+(0.2,-0.2)$) -- ($(u41)+(0.2,0.2)$) -- ++(-1,0) -- (-3,2.5) -- (-3.3,2) -- (-3,1.5) -- ($(u43)+(-0.8,-0.2)$) -- cycle;
\draw ($(u31)+(0.2,0.2)$) -- ($(u32)+(0.2,-0.2)$) -- ++(-1,0) -- (-3,1.5) -- (-3.3,2) -- (-3,2.5) -- ($(u34)+(-0.8,0.2)$) -- cycle;
\draw ($(u24)+(-0.2,0.2)$) -- ($(u21)+(0.2,0.2)$) -- ++(0,-1) -- (3.5,-1.2) -- (-2,-0.8) -- (-3,1.5) -- (-3.3,2) --(-3,2.5) -- (-2,-0.3)  -- ($(u23)+(-0.2,-0.8)$) -- cycle;
\draw ($(u13)+(-0.2,-0.2)$) -- ($(u12)+(0.2,-0.2)$) -- ++(0,1) -- (3.5,5.2) -- (-2,4.8) -- (-3,2.5) -- (-3.3,2) --(-3,1.5) -- (-2,4.3)  -- ($(u14)+(-0.2,0.8)$) -- cycle;
\end{scope}
\begin{scope}[label distance=5pt]
\node[point,minimum size=2.2mm,label=right:{$v_1$}] (v1) at (7,2) {};
\node[point,minimum size=2.2mm,label=left:{$v_2$}] (v2) at (-3,2) {};
\end{scope}
\node (n1) at (3,3.5) [npoint] {\normalsize\textcolor{gray}{1}};
\node (n2) at (3.5,1) [npoint] {\normalsize\textcolor{gray}{2}};
\node (n3) at (0.5,1) [npoint] {\normalsize\textcolor{gray}{3}};
\node (n4) at (1,3.5) [npoint] {\normalsize\textcolor{gray}{4}};
\begin{scope}[label distance=3pt]\small
\node (nw12) at (w12) [wpoint,label=right:{$w_{12}$}] {};
\node (nw23) at (w23)  [wpoint,label=above:{$w_{23}$}] {};
\node (nw34) at (w34)  [wpoint,label=left:{$w_{34}$}] {};
\node (nw14) at (w14)  [wpoint,label=below:{$w_{14}$}] {};
\node (nw24) at (w24)  [wpoint,label=above:{$w_{24}$}] {};
\node (nw13) at (w13) [wpoint,label=above:{$w_{13}$}] {};
\end{scope}
\begin{scope}[label distance=2pt]
\node (nu12) at (u12)  [upoint,label=right:{$u_{12}$}] {};
\node (nu21) at (u21)  [upoint,label=right:{$u_{21}$}] {};
\end{scope}
\node[xpoint, label=right:{$x_{12}$}] (x12) at ($(u12)!.5!(w12)$) {};
\node[xpoint, label=right:{$x_{21}$}] (x21) at ($(u21)!.5!(w12)$) {};
\draw[->,query] (nu21) to node[midway, right] {\small $Q$} (x21);
\draw[->,query] (nw12) to node[midway, right] {\small $P_{21}$} (x21);
\draw[->,query] (nu12) to node[midway, right] {\small $Q$} (x12);
\draw[->,query] (nw12) to node[midway, right] {\small $P_{12}$} (x12);
\foreach \j/\jp in {2/3,3/4,1/4} {
\node (nu\j\jp) at (u\j\jp)  [upoint] {}; 
\node (nu\jp\j) at (u\jp\j)  [upoint] {}; 
\node[xpoint] (x\j\jp) at ($(u\j\jp)!.5!(w\j\jp)$) {};
\node[xpoint] (x\jp\j) at ($(u\jp\j)!.5!(w\j\jp)$) {};
\draw[->,query] (nu\jp\j) to (x\jp\j);
\draw[->,query] (nw\j\jp) to  (x\jp\j);
\draw[->,query] (nu\j\jp) to (x\j\jp);
\draw[->,query] (nw\j\jp) to  (x\j\jp);
}
\node (nu24) at (u24)  [upoint] {}; 
\node (nu42) at (u42)  [upoint] {}; 
\node[xpoint] (x24) at ($(u24)!.5!(w24)$) {};
\node[xpoint] (x42) at ($(u42)!.5!(w24) - (0.3,0)$) {};
\draw[->,query] (nu42) to (x42);
\draw[->,query,out=180,in=-80] (nw24) to  (x42);
\draw[->,query,out=180,in=-45] (nu24) to (x24);
\draw[->,query,out=0,in=135] (nw24) to  (x24);
\node (nu13) at (u13)  [upoint] {}; 
\node (nu31) at (u31)  [upoint] {}; 
\node[xpoint] (x13) at ($(u13)!.5!(w13) + (0.3,0)$) {};
\node[xpoint] (x31) at ($(u31)!.5!(w13)$) {};
\draw[->,query,out=0,in=-135] (nu31) to (x31);
\draw[->,query,out=180,in=45] (nw13) to  (x31);
\draw[->,query] (nu13) to (x13);
\draw[->,query,out=0,in=-100] (nw13) to  (x13);
\node[zpoint,label=above:{$z_{12}$}] (z12) at (5.5,0.9) {};
\draw[->,query,in=-160,out=-45] (nu21) to node[midway, above] {\small $U$} (z12);
\draw[->,query,rounded corners=7pt] (nu24) -- ++(0.8,-0.4) -- (z12);
\draw[->,query,rounded corners=7pt] (nu23) -- ++(1,0) -- (z12);
\draw[->,query] (v1) to node[midway,above,sloped] {\small $T_{12}$} (z12);
\node[zpoint,label=below:{$z_{11}$}] (z11) at (5.5,3.1) {};
\draw[->,query,in=160,out=45] (nu12) to node[midway, below] {\small $U$} (z11);
\draw[->,query,rounded corners=7pt] (nu13) -- ++(0.8,0.4) -- (z11);
\draw[->,query,rounded corners=7pt] (nu14) -- ++(1,0) -- (z11);
\draw[->,query] (v1) to node[midway,below,sloped] {\small $T_{11}$} (z11);
\node[zpoint,label=above left:{$z_{14}$}] (z14) at (6,4.5) {};
\draw[->,query,rounded corners=7pt] (nu41) -- ++(0.2,0.9) -- (z14);
\draw[->,query,rounded corners=7pt] (nu42) -- ++(-0.2,0.5) -- ++(0.2,1) -- (z14);
\draw[->,query,rounded corners=7pt] (nu43) -- ++(0.4,1.6) -- (z14);
\draw[->,query] (v1) to (z14);
\node[zpoint,label=below left:{$z_{13}$}] (z13) at (6,-0.5) {};
\draw[->,query,rounded corners=7pt] (nu32) -- ++(0.2,-0.9) -- (z13);
\draw[->,query,rounded corners=7pt] (nu31) -- ++(-0.2,-0.5) -- ++(0.2,-1) -- (z13);
\draw[->,query,rounded corners=7pt] (nu34) -- ++(0.4,-1.6) -- (z13);
\draw[->,query] (v1) to (z13);
\node[zpoint,label=above:{$z_{23}$}] (z23) at (-1.5,0.9) {};
\draw[->,dquery,in=-20,out=-135] (nu34) to  (z23);
\draw[->,dquery,rounded corners=7pt] (nu31) -- ++(-0.8,-0.4) -- (z23);
\draw[->,dquery,rounded corners=7pt] (nu32) -- ++(-1,0) -- (z23);
\draw[->,dquery] (v2) to (z23);
\node[zpoint,label=below:{$z_{24}$}] (z24) at (-1.5,3.1) {};
\draw[->,dquery,in=20,out=135] (nu43) to  (z24);
\draw[->,dquery,rounded corners=7pt] (nu42) -- ++(-0.8,0.4) -- (z24);
\draw[->,dquery,rounded corners=7pt] (nu41) -- ++(-1,0) -- (z24);
\draw[->,dquery] (v2) to  (z24);
\node[zpoint,label=above right:{$z_{21}$}] (z21) at (-2,4.5) {};
\draw[->,dquery,rounded corners=7pt] (nu14) -- ++(-0.2,0.9) -- (z21);
\draw[->,dquery,rounded corners=7pt] (nu13) -- ++(0.2,0.5) -- ++(-0.2,1) -- (z21);
\draw[->,dquery,rounded corners=7pt] (nu12) -- ++(-0.4,1.6) -- (z21);
\draw[->,dquery] (v2) to (z21);
\node[zpoint,label=below right:{$z_{22}$}] (z22) at (-2,-0.5) {};
\draw[->,dquery,rounded corners=7pt] (nu23) -- ++(-0.2,-0.9) -- (z22);
\draw[->,dquery,rounded corners=7pt] (nu24) -- ++(0.2,-0.5) -- ++(-0.2,-1) -- (z22);
\draw[->,dquery,rounded corners=7pt] (nu21) -- ++(-0.4,-1.6) -- (z22);
\draw[->,dquery] (v2) to (z22);
\end{tikzpicture}}%
\caption{The CQ $\q_{4,2}$ for $H_{4,2}$.}\label{fig:H42:query}
\end{figure}
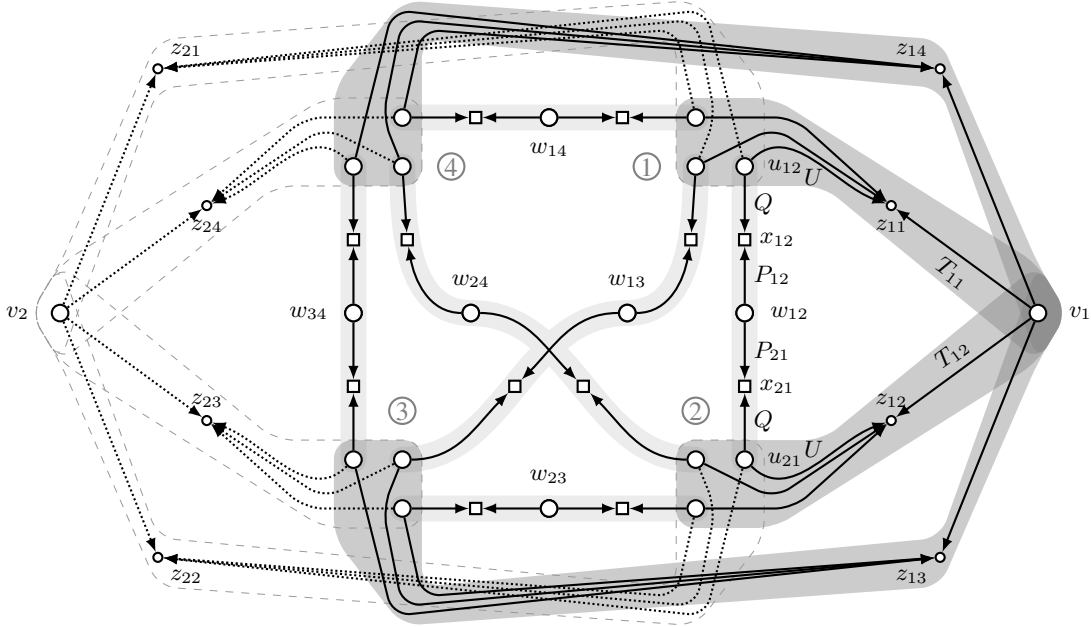

We are now in a position to define $\T_{n,k}$ of depth~2 and $\q_{n,k}$, both of polynomial size in $n$, that can compute $\cli_{n,k}$. Let $\q_{n,k}$ contain the following atoms (all variables are quantified):
\begin{align*}
& T_{ij}(v_i, z_{ij}) && (1 \le i \le k,  \ \ 1 \le j \le n),\\
& P_{jj'}(w_{jj'}, x_{jj'}), \ \ P_{j'j}(w_{jj'}, x_{j'j}) &&  (1 \le j < j' \le n),\\
& Q(u_{jj'}, x_{jj'}), \  U(u_{jj'}, z_{ij}) && (1 \le j\neq j' \le n, \ \ 1 \le i \le k).
\end{align*}
Figs.~\ref{fig:H42:query} and~\ref{fig:H42bis} show two different views of the CQ $\q_{4,2}$ for $H_{4,2}$.
Fig.~\ref{fig:qnk} illustrates the fragments of $\q_{n,k}$ centred in each variable of the form $z_{ij}$ and $x_{jj'}$ (the fragment centred in $x_{j'j}$ is similar to that of $x_{jj'}$ except the index of the $w_{jj'}$).

\begin{figure}[t]
\centerline{%
\hbox to 0pt{\includegraphics[height=60mm]{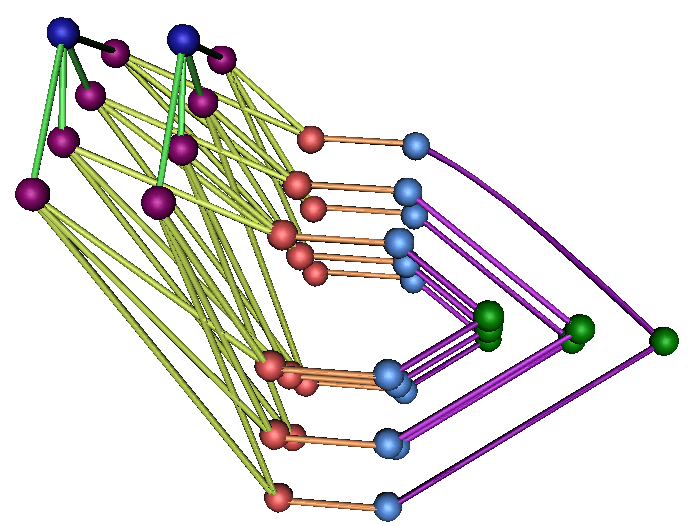}}
\hspace*{-8mm}\begin{tikzpicture}\small
\draw[white] (-1,0) rectangle (9,6);
\node at (-0.5,4) {$z_{ij}$};
\node at (-0.2,5.6) {$v_i$};
\node at (3.5,5) {$u_{jj'}$};
\node at (4.5,5) {$x_{jj'}$};
\node at (6.5,3.5) {$w_{jj'}$};
\end{tikzpicture}}%
\caption{The CQ $\q_{4,2}$ for $H_{4,2}$.}\label{fig:H42bis}
\end{figure}

\begin{figure}[ht]
\centerline{%
\begin{tikzpicture}[>=latex, point/.style={circle,draw=black,thick,minimum size=1.2mm,inner sep=0pt,fill=white},
spoint/.style={rectangle,draw=black,thick,minimum size=1.2mm,inner sep=0pt,fill=white},
ipoint/.style={circle,draw=black,thick,minimum size=2.2mm,inner sep=0pt,fill=white},
wiggly/.style={thick,decorate,decoration={snake,amplitude=0.3mm,segment length=2mm,post length=1mm}},
query/.style={thick}]\small
\draw[ultra thin,rounded corners=8,fill=gray!40] (-0.7,-0.5) rectangle +(4.9,2.9);
\draw[ultra thin,dashed] (0.5,2) -- (-1.6,2);
\draw[ultra thin,dashed] (3,0) -- (-1.6,0);
\draw[ultra thin,dashed] (3,-2) -- (-1.6,-2);
\node (vi) at (-0.2,0) [ipoint, label=right:{\raisebox{12pt}{$v_i$}}] {};
\node (fin) at (0.5,2) [point, label=right:{$z_{ij}$}] {};
\node (ujj) at (2.2,0) [ipoint] {};
\node at (2.6,0) [ipoint] {};
\node (ujje) at (3,0) [ipoint, label=right:{$u_{jj'}$}, label=above:{\footnotesize\hspace*{2em} for $j' \ne j$}] {};
\node (xjj) at (2.2,-2) [spoint] {};
\node at (2.6,-2) [spoint] {};
\node (xjje) at (3,-2) [spoint, label=right:{$x_{jj'}$}] {};
\node (zijof) at (-0.55,-2) [point] {};
\node at (-0.3,-2) [point, label=below:{$\begin{array}{c}z_{ij''}\\[-1pt]\scriptstyle\text{for }j'' \ne j\end{array}$}] {};
\node (zijof2) at (-0.05,-2) [point] {};
\node (zijo) at (0.7,-2) [point] {};
\node at (0.95,-2) [point, label=below:{$\begin{array}{c}z_{i'j}\\[-2pt]\scriptstyle \text{for }i'\ne i\end{array}$}] {};
\node (zijoe) at (1.2,-2) [point] {};
\draw[->,thick] (vi) to node[midway, sloped, above] {\scriptsize $T_{ij}$} (fin);
\draw[->,thick] (ujje) to node[midway, sloped, above] {\scriptsize $U$} (fin);
\draw[->,thick, densely dotted] (ujj) to   (fin);
\draw[->,thick] (ujje) to node[midway, sloped, above] {\scriptsize $Q$} (xjje);
\draw[->,thick, densely dotted] (ujj) to   (xjj);
\draw[->,thick] (ujje) to  (zijoe);
\draw[->,thick] (ujje) to  (zijo);
\draw[->,thick,densely dotted] (ujj) to node[midway, sloped, above] {\scriptsize $U$}  (zijo);
\draw[->,thick,densely dotted] (ujj) to (zijoe);
\draw[->,thick,dashed] (vi) to (zijof);
\draw[->,thick,dashed] (vi) to node[midway, sloped, below, rotate=180] {\scriptsize $T_{ij''}$}  (zijof2);
\node (a) at (-1.6,-2) [point,minimum size=1.2mm,fill=black, label=below:{ $\C_{\T_{n,k}}^{A_{ij}(a)}$}] {};
\node (c) at (-1.6,0) [point,minimum size=2.2mm,thick,fill=white,label=left:{\small $c_{ij}$}] {};
\node (cp) at (-1.6,2) [point,minimum size=1.2mm,thick,fill=white,label=left:{\small $c'_{ij}$}] {};
\draw[->,wiggly] (a) to node[midway, above,sloped] {\scriptsize$T_{ij''}^-$} node[midway, below, sloped] {\scriptsize $U^-, Q^-$} (c) ;
\draw[->,wiggly] (c) to node[midway, above,sloped] {\scriptsize$T_{ij}, U$} (cp) ;
\begin{scope}[xshift=-5mm]
\draw[ultra thin,rounded corners=8,fill=gray!5] (4.9,-0.5) rectangle +(3.2,2.9);
\draw[ultra thin,dashed] (6.5,2) -- (9,2);
\draw[ultra thin,dashed] (6,0) -- (9,0);
\draw[ultra thin,dashed] (5.5,-2) -- (9,-2);
\node (wjj) at (6,0) [ipoint,  label=left:{$w_{jj'}$}] {};
\node (hjj) at (6.5, 2) [spoint, label=left:{\footnotesize $x_{jj'}$}] {};
\node (ujj) at (7, 0) [ipoint,  label=right:{\raisebox{14pt}{$u_{jj'}$}}] {};
\node (xjj) at (5.5, -2) [spoint, label=below:{$x_{j'j}$}] {};
\node (zjj) at (7,-2) [point] {};
\node at (7.25,-2) [point, label=below:{$\begin{array}{c}z_{ij}\\[-2pt]\scriptstyle\text{for all } i\end{array}$}] {};
\node (zjje) at (7.5,-2) [point] {};
\draw[->, thick] (wjj) to node[midway, sloped, above] {\scriptsize $P_{jj'}$} (hjj);
\draw[->, thick] (wjj) to node[midway, sloped, above] {\scriptsize $P_{j'j}$} (xjj);
\draw[->, thick] (ujj) to node[midway, sloped, above] {\scriptsize $Q$} (hjj);
\draw[->, thick] (ujj) to node[midway, sloped, above] {\scriptsize $U$} (zjje);
\draw[->, thick] (ujj) to   (zjj);
\node (a) at (9,-2) [point,minimum size=1.2mm,fill=black, label=below:{$\C_{\T_{n,k}}^{B_{jj'}(a)}$}] {};
\node (c) at (9,0) [point,minimum size=2.2mm,thick,fill=white,label=right:{\small $d_{jj'}$}] {};
\node (cp) at (9,2) [spoint,minimum size=1.2mm,thick,fill=white,label=right:{\small $d'_{jj'}$}] {};
\draw[->,wiggly] (a) to node[midway, below,sloped] {\scriptsize $P_{j'j}^-$} node[midway, above,sloped] {\scriptsize $U^-$} (c);
\draw[->,wiggly] (c) to node[midway, below,sloped, rotate=180] {\scriptsize $P_{jj'}, Q$} (cp);
\end{scope}
\end{tikzpicture}%
}%
\caption{Fragments of $\q_{n,k}$ and the canonical models $\C_{\T_{n,k}}^{A_{ij}(a)}$ and $\C_{\T_{n,k}}^{B_{jj'}(a)}$.}\label{fig:qnk}
\end{figure}
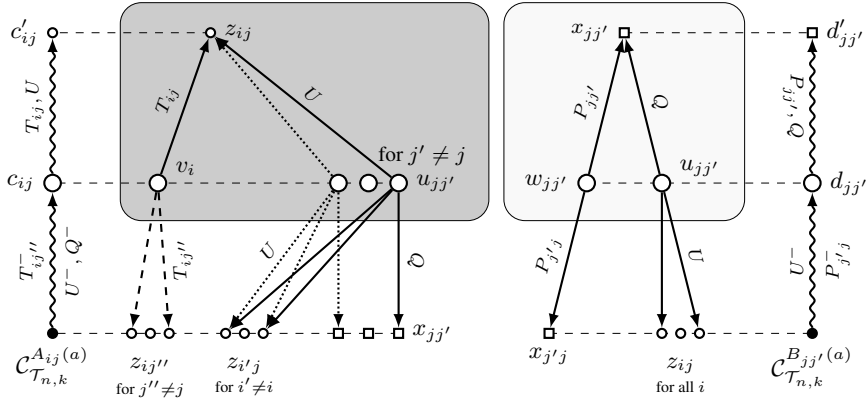

The ontology $\T_{n,k}$ mimics the arrangement of atoms in the layers depicted in Fig.~\ref{fig:qnk}
and contains the following tgds, where $1 \le i \le k$ and $1 \le j\neq j' \le n$,
\begin{align*}
A_{ij}(x) &\to  \exists y\, \bigl[\bigwedge_{j'' \ne j} T_{ij''}(y,x) \land U (y,x) \land Q(y,x) \land A'_{ij}(y)\bigr],\\
A'_{ij}(x) &\to  \exists y \, \bigl[  T_{ij}(x,y) \land  U(x,y)\bigr],\\
B_{jj'}(x) &\to \exists y \,\bigl[ P_{j'j}(y,x) \land  U(y,x) \land B'_{jj'}(y)\bigr],\\
B'_{jj'}(x) &\to  \exists y\,\bigl[ P_{jj'}(x,y) \land Q(x,y)\bigr].
\end{align*}
The canonical models $\C_{\T_{n,k}}^{A_{ij}(a)}$ and $\C_{\T_{n,k}}^{B_{jj'}(a)}$ are also illustrated in Fig.~\ref{fig:qnk} with the horizontal dashed lines showing possible ways of embedding the fragments of $\q_{n,k}$ into them.
These embeddings give rise to the following tree witnesses:
\begin{itemize}
\item[--] $\t^{ij} = (\tr^{ij}, \ti^{ij})$ generated by $A_{ij}(x)$, for $1 \le i \le k$ and  $1\le j \le n$, where
\begin{align*}
\tr^{ij}  \ \ &= \ \ \{ z_{ij'}, x_{jj'} \mid 1\le j'\le n, \ j'\neq j\} \ \ \cup \ \  \{z_{i'j} \mid 1\le i'\le k, \ i\ne i' \},\\
\ti^{ij} \ \ &= \ \ \{v_i, z_{ij} \} \cup \{u_{jj'}\mid 1\le j'\le n, \ j'\neq j\};
\end{align*}
\item[--] $\s^{jj'} = (\sr^{jj'}, \si^{jj'})$ and $\s^{j'j} = (\sr^{j'j}, \si^{j'j})$, generated by $B_{jj'}(x)$ and $B_{j'j}(x)$,
respectively,  for  $1 \le j < j' \le n$,
where
\begin{align*}
& \sr^{jj'} =  \{x_{j'j}\}\cup\{ z_{ij} \mid 1\le i\le k\},
& \si^{jj'} = \{ w_{jj'}, u_{jj'}, x_{jj'}\},\\
& \sr^{j'j} = \{x_{jj'}\}\cup\{ z_{ij'} \mid 1\le i\le k\} ,
& \si^{j'j} = \{ w_{jj'}, u_{j'j}, x_{j'j}\}.
\end{align*}
\end{itemize}
The tree witnesses $\t^{ij}$, $\s^{jj'}$ and $\s^{j'j}$ are uniquely determined by their most remote (from the root) variables, $z_{ij}$, $x_{jj'}$ and $x_{j'j}$, respectively, and  correspond to the hyperedges $f^{ij}$, $h^{jj'}$, $h^{j'j}$ of $H_{n,k}$; their internal variables of the form $v_i$, $w_{jj'}$ and $u_{jj'}$ correspond to the vertices in the respective hyperedge (see Fig.~\ref{fig:H42:query}).

Given a vector $\vec{e}$ representing a graph with $n$ vertices,
we construct  a data instance $\A_{\vec{e}}$ with a single individual $a$ by taking the following atoms:
\begin{multline*}
Q(a,a), \qquad U(a,a), \qquad  A_{ij}(a), \text{ for } 1 \le i \le k \text{ and } 1 \le j \le n,\\
P_{jj'}(a,a) \text{ and } P_{j'j}(a,a),  \text{ for } 1 \leq j  < j' \leq n \text{ with } e_{jj'} = 1.
\end{multline*}

\begin{lemma}\label{lemma:clique}
$\T_{n,k}, \A_{\vec{e}} \models \q_{n,k}$  iff $\cli_{n,k}(\vec{e}) = 1$.
\end{lemma}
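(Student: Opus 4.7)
My plan is to reduce Lemma~\ref{lemma:clique} to Theorem~\ref{thm:clique} by establishing a bijection-style correspondence between homomorphisms $h \colon \q_{n,k} \to \C_{\T_{n,k},\A_{\vec{e}}}$ and independent subsets of hyperedges in $H_{n,k}$ that cover all zero-labelled vertices under the input $\vec{e}$. Before doing anything else, I would describe the canonical model $\C = \C_{\T_{n,k},\A_{\vec{e}}}$: at the unique individual $a$ it carries exactly the atoms of $\A_{\vec{e}}$; each atom $A_{ij}(a) \in \A$ generates a two-step chain of labelled nulls $c_{ij}, c'_{ij}$ populated by the $A_{ij}$- and $A'_{ij}$-tgds with the atoms drawn in the left half of Figure~\ref{fig:qnk}, and each $B_{jj'}(a) \in \A$ generates an analogous chain $d_{jj'}, d'_{jj'}$ with the atoms drawn on the right. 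The families $\t^{ij}$, $\s^{jj'}$, $\s^{j'j}$ of tree witnesses listed just above the lemma mirror exactly the hyperedges $f^{ij}$, $h^{jj'}$, $h^{j'j}$ of $H_{n,k}$.

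For the direction $(\Leftarrow)$ I would represent a $k$-clique by an injection $\lambda \colon \{1,\dots,k\} \to \{1,\dots,n\}$ whose image $C$ satisfies $e_{jj'}=1$ for all $j, j' \in C$, and then build the homomorphism as follows, matching the choice of the independent cover $\{f^{i,\lambda(i)}\} \cup \{h^{jj'} \text{ or } h^{j'j} : e_{jj'}=0\}$ in $H_{n,k}$: set $v_i \mapsto c_{i,\lambda(i)}$; $z_{i,\lambda(i)} \mapsto c'_{i,\lambda(i)}$ and $z_{i,j} \mapsto a$ for $j \ne \lambda(i)$; $u_{\lambda(i),j'} \mapsto c_{i,\lambda(i)}$ for $j' \ne \lambda(i)$. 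For a pair $(j,j')$ with both endpoints inside $C$ the clique condition $e_{jj'}=1$ lets us map $w_{jj'}, x_{jj'}, x_{j'j}$ all to $a$; for any other pair at least one endpoint, say $j$, lies outside $C$ (so $u_{jj'}$ is still free) and we map $u_{jj'}, w_{jj'} \mapsto d_{jj'}$, $x_{jj'} \mapsto d'_{jj'}$ and $x_{j'j} \mapsto a$, thus activating $\s^{jj'}$. A routine check against Figure~\ref{fig:qnk} confirms that every atom of $\q_{n,k}$ lands on an existing atom of $\C$.

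For $(\Rightarrow)$ I would start from an arbitrary homomorphism $h$. Since no atom of the form $T_{ij}(a,\cdot)$ occurs at $a$, each $v_i$ is forced into some null $c_{i,\lambda(i)}$, defining a map $\lambda$. The rigid local shape of $\C$ around $c_{ij}$ then propagates: the query atoms $T_{ij'}(v_i, z_{i,j'})$ force $z_{i,\lambda(i)} \mapsto c'_{i,\lambda(i)}$ and $z_{i,j} \mapsto a$ for $j \ne \lambda(i)$; the atoms $U(u_{jj'}, z_{i,j})$ with $z_{i,\lambda(i)}$ pinned to $c'_{i,\lambda(i)}$ force $u_{\lambda(i),j'} \mapsto c_{i,\lambda(i)}$ for every $j' \ne \lambda(i)$; and this in turn forces $\lambda$ to be injective, since otherwise the same $u_{\lambda(i),\lambda(i')}$ would be pinned to two distinct $c$-nulls. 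Setting $C = \mathrm{range}(\lambda)$, pick $i \ne i'$ with $j = \lambda(i)$, $j' = \lambda(i')$: both $u_{jj'}$ and $u_{j'j}$ are pinned to $c$-nulls from whose $Q$-arrows only $a$ is reachable, so $x_{jj'}, x_{j'j} \mapsto a$; then $P_{jj'}(w_{jj'}, a) \in \C$, which is only possible when $w_{jj'} \mapsto a$ and $e_{jj'}=1$. Hence $C$ is a $k$-clique in $\vec{e}$.

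The main obstacle is the propagation step in $(\Rightarrow)$: carefully justifying, using the absence of $T$-atoms at $a$ and the tree-shape of each $c$-null's neighbourhood, that the six families of variables $v_i, z_{ij}, u_{jj'}, x_{jj'}, w_{jj'}$ really have no choice beyond the assignments described, and that any local conflict among forced images corresponds exactly to a shared vertex between two chosen hyperedges of $H_{n,k}$. Once this correspondence with the independence/cover condition is made precise, Theorem~\ref{thm:clique} immediately yields $\T_{n,k},\A_{\vec{e}} \models \q_{n,k}$ iff $\cli_{n,k}(\vec{e}) = 1$.
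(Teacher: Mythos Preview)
Your proposal is correct and follows essentially the same route as the paper: in both directions you work directly with a clique map $\lambda$, forcing $h(v_i)=c_{i\lambda(i)}$ via the $T_{ij}$-atoms, propagating along $U$ to pin the $u_{\lambda(i),j'}$ and derive injectivity, and then reading off $e_{jj'}=1$ from the $P$-atoms; conversely, you build the homomorphism from $\lambda$ by sending the ``clique'' variables into the $c$-nulls and absorbing the remaining edge-gadgets into the $d$-nulls. Two small remarks: your opening frames the argument as a reduction to Theorem~\ref{thm:clique} via a homomorphism/cover correspondence, but what you actually carry out (and what the paper does) is the direct homomorphism/clique argument, so Theorem~\ref{thm:clique} is not really invoked; and in the $(\Leftarrow)$ direction you should also say where $u_{j'j}$ goes when both $j,j'\notin C$ (the paper sends it to $a$, using $Q(a,a),U(a,a)\in\A_{\vec{e}}$), which your sketch leaves implicit.
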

\begin{proof}
$(\Rightarrow)$ Suppose $\T_{n,k}, \A_{\vec{e}} \models \q_{n,k}$. Then
there is a homomorphism $g$ from $\q_{n,k}$ to the canonical model $\C$ of $(\T_{n,k},\A_{\vec{e}})$. Since the only points of $\C$  that belong to $\exists y\,T_{ij}(x,y)$ are of the form $c_{ij}$ (see Fig.~\ref{fig:qnk}) and $\q_{n,k}$ contains atoms of the form $T_{ij}(v_i,z_{ij})$, there is $\lambda\colon \{1,\dots,k\} \to \{1,\dots,n\}$ such that $g(v_i) = c_{i\lambda(i)}$. We claim that $C = \{ \lambda(i) \mid 1 \leq i \leq k\}$ is a $k$-clique in the graph given by $\vec{e}$.

We first show that $\lambda$ is injective.
Suppose to the contrary that $\lambda(i) = \lambda(i') = j$, for $i\ne i'$. Since $\q_{n,k}$ contains $T_{ij}(v_i,z_{ij})$ and $T_{i'j}(v_{i'},z_{i'j})$, we have $g(z_{ij}) = c'_{ij}$ and $g(z_{i'j}) = c'_{i'j}$. Take $j'\ne j$. Since $U(u_{jj'},z_{ij}), U(u_{jj'},z_{i'j})\in \q_{n,k}$, we obtain $g(u_{jj'}) = c_{ij}$ and $g(u_{jj'}) = c_{i'j}$, contrary to $i \ne i'$.

Next, we show that $e_{jj'} = 1$, for all $j,j'\in C$ with $j < j'$.
Since $U(u_{jj'}, z_{ij})$ is in $\q_{n,k}$, we have $g(u_{jj'}) = c_{ij}$, and so
$g(x_{jj'}) = a$. Similarly, we also have
$g(u_{j'j}) = c_{i'j'}$ and $g(x_{j'j}) = a$. Then, since  $\q_{n,k}$ contains both $P_{jj'}(w_{jj'}, x_{jj'})$ and
$P_{j'j}(w_{jj'}, x_{j'j})$ and $\C$ contains no pair of points in both $P_{jj'}$ and $P_{j'j}$ apart from $(a,a)$, we obtain $e_{jj'} = 1$ whenever $g(x_{jj'}) = g(x_{j'j}) = a$, as shown in Fig.~\ref{fig:cnk:edge}.

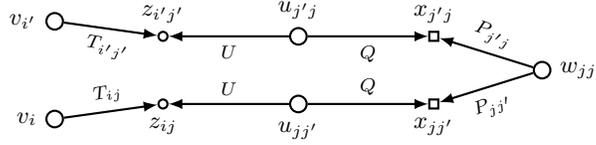
\begin{figure}[ht]
\centerline{%
\begin{tikzpicture}[>=latex, point/.style={circle,draw=black,thick,minimum size=1.2mm,inner sep=0pt,fill=white},
spoint/.style={rectangle,draw=black,thick,minimum size=1.2mm,inner sep=0pt,fill=white},
ipoint/.style={circle,draw=black,thick,minimum size=2.2mm,inner sep=0pt,fill=white},
query/.style={thick},xscale=1.2]\small
\node (w) at (4.2,0) [ipoint,label=right:{$w_{jj'}$}] {};
\node (xjpj) at (3,0.45) [spoint,label=above:{$x_{j'j}$}] {};
\node (ujpj) at (1.5,0.45) [ipoint,label=above:{$u_{j'j}$}] {};
\node (zjpj) at (0,0.45) [point,label=above:{$z_{i'j'}$}] {};
\node (vip) at (-1.2,0.65) [ipoint,label=left:{$v_{i'}$}] {};
\draw[->, thick] (w) to node[midway, sloped, above] {\scriptsize $P_{j'j}$} (xjpj);
\draw[->, thick] (ujpj) to node[midway, sloped, below] {\scriptsize $Q$} (xjpj);
\draw[->, thick] (ujpj) to node[midway, sloped, below] {\scriptsize $U$} (zjpj);
\draw[->, thick] (vip) to node[midway, sloped, below] {\scriptsize $T_{i'j'}$} (zjpj);
\node (xjjp) at (3,-0.45) [spoint,label=below:{$x_{jj'}$}] {};
\node (ujjp) at (1.5,-0.45) [ipoint,label=below:{$u_{jj'}$}] {};
\node (zjjp) at (0,-0.45) [point,label=below:{$z_{ij}$}] {};
\node (vi) at (-1.2,-0.65) [ipoint,label=left:{$v_i$}] {};
\draw[->, thick] (w) to node[midway, sloped, below] {\scriptsize $P_{jj'}$} (xjjp);
\draw[->, thick] (ujjp) to node[midway, sloped, above] {\scriptsize $Q$} (xjjp);
\draw[->, thick] (ujjp) to node[midway, sloped, above] {\scriptsize $U$} (zjjp);
\draw[->, thick] (vi) to node[midway, sloped, above] {\scriptsize $T_{ij}$} (zjjp);
\end{tikzpicture}
}%
\caption{Proof of Lemma~\ref{lemma:clique}.}\label{fig:cnk:edge}
\end{figure}

\smallskip

\noindent
$(\Leftarrow)$ Suppose $\lambda\colon \{1, \dots, k\} \to \{1, \dots, n\}$ is a $k$-clique.
We construct a homomorphism $g$ from $\q_{n,k}$ to the canonical model of $(\T_{n,k},\A_{\vec{e}})$
by taking (see~Fig.~\ref{fig:qnk}), for $1 \leq i \leq k$ and $1 \leq j < j'\  \leq n$,  
\begin{align*}
g(v_i) & = c_{i\lambda(i)},\\
g(z_{ij}) & = \begin{cases} c'_{ij}, & \text{if } j = \lambda(i), \\ a & \text{otherwise}, \end{cases} &
g(w_{jj'}) & = \begin{cases}
a,  & \text{if } j,j'\in C,\\
d_{j'j}, & \text{if } j'\notin C \text{ and } j\in C,\\
d_{jj'}, & \text{otherwise}, \end{cases}
\end{align*}
and, for $1 \leq j \ne j' \leq n$,
\begin{align*}
g(u_{jj'}) & = \begin{cases}
	c_{\lambda^{-1}(j)j}, &\text{if } j \in C,\\
	d_{jj'}, & \text{if }   j \notin C, j' \in C, \\
	d_{jj'}, & \text{if } j,j' \notin C,\  j < j',\\
	a,& \text{if } j,j' \notin C,\  j' < j,
\end{cases} &
g(x_{jj'}) & = \begin{cases}
	a,  & \text{if } j \in C,  \\
	d'_{jj'}, & \text{if } j \notin C, j'\in C, \\
	d'_{jj'}, & \text{if } j,j' \notin C, \ j < j',\\
	a, & \text{if } j,j' \notin C, \ j' < j.
\end{cases}
\end{align*}
This homomorphism mimics the cover $X$ constructed for $H_{n,k}$ in the proof of Theorem~\ref{thm:clique}. The internal variables of the tree witnesses from $X$ are sent to labelled nulls, and all other points are sent to $a$.  For example, in the definition of $g(u_{jj'})$, the first case corresponds to $u_{jj'} \in f^{\lambda^{-1}(j)j} \in X$; the second and third cases to $u_{jj'} \in h^{jj'} \in X$; and in the fourth case, $u_{jj'}$ is not covered by $X$.
It follows that $\T_{n,k}, \A_{\vec{e}} \models \q_{n,k}$.
\end{proof}

\begin{theorem}\label{Depth2:Clique}
There exists  a sequence of CQs $\q_n$ and ontologies $\T_n$ of depth $2$ any \PE- and \NDL-rewritings of which are of exponential size, while any \FO-rewriting is of superpolynomial size unless $\NP \subseteq \Ppoly$.
\end{theorem}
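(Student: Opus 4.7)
The plan is to use $\q_n = \q_{n,k_n}$ and $\T_n = \T_{n,k_n}$ from the construction above, with $k_n$ chosen differently for each of the three bounds. Both are of polynomial size in $n$, and $\T_n$ is of depth $2$ (the canonical models $\C_{\T_{n,k}}^{A_{ij}(a)}$ and $\C_{\T_{n,k}}^{B_{jj'}(a)}$ in Fig.~\ref{fig:qnk} have chains of exactly two labelled nulls). By Lemma~\ref{lemma:clique}, for any graph $\vec{e}$ on $n$ vertices, deciding $\T_n, \A_{\vec{e}} \models \q_n$ amounts to evaluating $\cli_{n,k_n}(\vec{e})$. Moreover, $\A_{\vec{e}}$ contains only the single individual $a$, and the tgds of $\T_n$ all have existentially quantified variables in the head, so $\A_{\vec{e}}$ is already complete with respect to $\T_n$. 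This sets the stage for the same quantifier-elimination trick used in the proof of Theorem~\ref{Depth1:PE}.

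Given a \PE-rewriting $\q'_n$ of $\q_n$ and $\T_n$, I would repeatedly replace every subformula of the form $\exists x\,\psi(x)$ in $\q'_n$ with $\psi(a)$, producing a ground positive formula $\q''_n$ of size $\leq |\q'_n|$ such that $\T_n,\A_{\vec{e}}\models \q_n$ iff $\A_{\vec{e}} \models \q''_n$. Then I would replace each ground atom $P_{jj'}(a,a)$ and $P_{j'j}(a,a)$ with the propositional variable $e_{jj'}$ and each ground atom that is present in $\A_{\vec{e}}$ independently of $\vec{e}$ (such as $Q(a,a)$, $U(a,a)$, $A_{ij}(a)$) with $\top$. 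This yields a monotone Boolean formula of size $\leq |\q'_n|$ computing $\cli_{n,k_n}$. Taking $k_n = \lfloor 2n/3 \rfloor$ and invoking the $2^{\Omega(k)}$ lower bound of~\cite{RazW92} on monotone formulas computing $\cli_{n,k}$, we obtain $|\q'_n| = 2^{\Omega(n)}$.

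For an \NDL-rewriting $(\Pi,\q')$, a parallel argument produces a monotone Boolean \emph{circuit} for $\cli_{n,k_n}$: over the single-individual data $\A_{\vec{e}}$, every intensional predicate of $\Pi$ can hold only on tuples of $a$'s, so each such ground atom is a Boolean gate whose definition is the disjunction of the bodies of the clauses defining it (with extensional atoms substituted as above). The resulting monotone circuit has size polynomial in $|\Pi|$, so the Alon--Boppana $2^{\Omega(\sqrt{k})}$ lower bound for $k \le \tfrac14 (n/\log n)^{2/3}$ gives an exponential lower bound on $|\Pi|$ for an appropriate choice of $k_n$. For an arbitrary \FO-rewriting the same quantifier elimination produces a Boolean formula (no longer monotone) of size $\leq |\q'_n|$ computing $\cli_{n,k_n}$; a sequence of polynomial-size such formulas would place $\cli$ in $\NCo/\poly \subseteq \Ppoly$ and hence, by the \NP-hardness of clique, give $\NP \subseteq \Ppoly$.

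The main obstacle is making the translations from rewritings to Boolean formulas/circuits syntactically tight, in particular ensuring that the substitution of $a$ for every quantified variable does not blow up the size and that the \NDL-to-circuit conversion stays monotone; both points are essentially the same as in the proofs of Theorems~\ref{NDL} and~\ref{Depth1:PE} and go through because $\A_{\vec{e}}$ has a single constant and $\T_n$ produces no completion atoms over the data.
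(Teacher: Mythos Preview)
Your proposal is correct and follows essentially the same route as the paper: ground the rewriting to the single constant $a$, substitute the resulting ground atoms by propositional variables or constants according to their presence in $\A_{\vec{e}}$, and then invoke the Raz--Wigderson and Alon--Boppana lower bounds for monotone formulas and circuits computing $\cli_{n,k}$ (and the $\NP\subseteq\Ppoly$ argument for the \FO{} case). The only omission is that you should also explicitly replace by $\bot$ the ground atoms \emph{never} occurring in $\A_{\vec{e}}$---in particular $T_{ij}(a,a)$, $A'_{ij}(a)$, $B_{jj'}(a)$, $B'_{jj'}(a)$---which the paper spells out; once that is added, your argument matches the paper's proof.
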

\begin{proof}
Given a \PE-, \FO- or \NDL-rewriting $\q'_{n,k}$ of $\q_{n,k}$ and $\T_{n,k}$, we show how to construct, respectively, a monotone Boolean formula, a Boolean formula or a monotone Boolean circuit for
the function $\cli_{n,k}$ of size $|\q'_{n,k}|$.

Suppose $\q'_{n,k}$ is a \PE-rewriting of $\q_{n,k}$ and $\T_{n,k}$. We eliminate the quantifiers in $\q'_{n,k}$ by replacing first every subformula of the form $\exists x\, \psi(x)$ in $\q'_n$ with $\psi(a)$, and then
replacing each $P_{jj'}(a,a)$ and
$P_{j'j}(a,a)$ with $e_{jj'}$, each $T_{ij}(a,a)$, $A'_{ij}(a)$ and $B'_{jj'}(a)$  with~0 and each $U(a,a)$, $Q(a,a)$, $A_{ij}(a)$ and $B_{jj'}(a)$ with~1.
One can check that the resulting propositional monotone Boolean formula computes $\cli_{n,k}$.

If  $\q'_{n,k}$ is an \FO-rewriting of $\q_{n,k}$, then we eliminate the quantifiers by replacing $\exists x\, \psi(x)$ and $\forall x\, \psi(x)$ in $\q'_{n,k}$ with $\psi(a)$, and then carry
out the replacing procedure above, obtaining a propositional Boolean formula computing $\cli_{n,k}$.

If  $(\Pi, \q'_{n,k})$ is an \NDL-rewriting of $\q_{n,k}$,
we replace all the individual variables in $\Pi$ with $a$ and then perform the replacement described above. Denote the resulting propositional \NDL-program by $\Pi'$.
The program $\Pi'$ can now be transformed into a monotone Boolean circuit computing $\cli_{n,k}$: for every (propositional) variable $p$ occurring in the head of a clause in $\Pi'$,
we introduce an $\lor$-gate whose output is $p$ and inputs are the bodies of the clauses with the head $p$; and for each such body, we introduce an $\land$-gate whose inputs are the propositional variables in the body.

Now Theorem \ref{Depth2:Clique} follows from the lower bounds for monotone Boolean circuits and formulas computing $\cli_{n,k}$ given at the beginning of this section.
\end{proof}

As the function $\cli_{n,k}$ is known to be $\NP/\poly$-complete with respect to $\NCo$-reductions, we also obtain:
\begin{theorem}\label{nppoly}
There exist polynomial-size \FO-rewritings for all CQs and ontologies of depth $2$ with polynomially-many tree witnesses  
iff all functions in $\NP/\poly$ are com\-puted by polynomial-size formulas, that is, iff $\NP/\poly \subseteq \NCo$.
\end{theorem}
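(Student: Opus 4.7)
The plan is to prove both directions by leveraging the correspondence between \FO-rewritings, hypergraph functions, and nondeterministic Boolean circuits established earlier in the paper, together with the concrete construction of Section~\ref{sec:sizeof depth1}.

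For the $(\Leftarrow)$ direction, suppose $\NP/\poly \subseteq \NCo$ and let $\q, \T$ be any CQ and ontology of depth~2 with polynomially-many tree witnesses. Then the hypergraph $H^{\q}_{\T}$ has polynomially many vertices (the atoms of $\q$) and polynomially many hyperedges (one per tree witness), so it is of polynomial size. By Lemma~\ref{lemma:hyper:program}, $f_{H^{\q}_{\T}}$ is computed by a monotone HGP of polynomial size; then Theorem~\ref{NBC}(\textit{ii}) gives a monotone nondeterministic Boolean circuit of polynomial size, placing $f_{H^{\q}_{\T}}$ in $\NP/\poly$. Our assumption then furnishes a polynomial-size (monotone) Boolean formula for $f_{H^{\q}_{\T}}$, and Theorem~\ref{NDL} converts it into a polynomial-size \FO-rewriting of $\q$ and $\T$.

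For the $(\Rightarrow)$ direction, I will reuse the construction of $\q_{n,k}$ and $\T_{n,k}$ from Section~\ref{sec:sizeof depth1}. First I observe that these are of depth~2 and polynomial size, and that an explicit enumeration of the tree witnesses $\t^{ij}, \s^{jj'}, \s^{j'j}$ described just before Lemma~\ref{lemma:clique} shows there are only $O(kn + n^2)$ of them --- so $\q_{n,k}, \T_{n,k}$ satisfy the hypothesis. By our global assumption there is a polynomial-size \FO-rewriting $\q'_{n,k}$. I then invoke the quantifier-elimination argument from the proof of Theorem~\ref{Depth2:Clique}: replace every $\exists x\,\psi(x)$ and $\forall x\,\psi(x)$ in $\q'_{n,k}$ with $\psi(a)$, substitute $e_{jj'}$ for $P_{jj'}(a,a)$ and $P_{j'j}(a,a)$, substitute $0$ for the predicates that are never in $\A_{\vec{e}}$, and $1$ for those that always are. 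Lemma~\ref{lemma:clique} guarantees the resulting polynomial-size propositional Boolean formula computes $\cli_{n,k}$. Since $\cli_{n,k}$ is $\NP/\poly$-complete under $\NCo$-reductions, composing with the $\NCo$-reduction of an arbitrary $f \in \NP/\poly$ to $\cli_{n,k}$ yields a polynomial-size Boolean formula for $f$, so $\NP/\poly \subseteq \NCo$.

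The main obstacle I foresee is not technical but curatorial: verifying cleanly that the tree-witness count for $\q_{n,k}, \T_{n,k}$ remains polynomial (which follows immediately from the enumeration in Section~\ref{sec:sizeof depth1}), and checking that the quantifier-elimination step still goes through for genuine \FO-rewritings (which may use $\forall$ and $\neg$, not only $\exists$ and $\lor, \land$). The latter was already handled in the \FO-case of Theorem~\ref{Depth2:Clique}, so essentially no new argument is required; all remaining work is assembly of earlier results.
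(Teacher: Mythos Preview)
Your proposal is correct and follows essentially the same route as the paper's proof. The only minor difference is in the $(\Leftarrow)$ direction: the paper directly observes that $f_{H^{\q}_{\T}}\in\NP/\poly$ because deciding existence of an independent hyperedge set covering all zeros is an $\NP$ problem, whereas you route through Lemma~\ref{lemma:hyper:program} and Theorem~\ref{NBC} to reach the same conclusion --- both arguments are valid and amount to the same thing.
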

\begin{proof}
$(\Leftarrow)$ Suppose $\NP/\poly \subseteq \NCo$. Consider an arbitrary CQ $\q$ and an ontology $\T$ of depth 2 with polynomially-many tree witnesses. Then the hypergraph $H^{\q}_{\T}$ is of polynomial size. The hypergraph function $f_{H^{\q}_{\T}}$ is in the class $\NP/\poly$ because the problem whether there exists an  independent set of hyperedges in a hypergraph covering all zeros is in $\NP$. 
Therefore, by our assumption, $f_{H^{\q}_{\T}}$ can be computed by a polynomial-size formula, which translates 
into a polynomial-size \FO-rewriting by Theorem~\ref{NDL}.

\smallskip

\noindent $(\Rightarrow)$ Conversely, suppose that there is a polynomial-size \FO-rewriting for all CQs and ontologies of depth~$2$.
In particular, there is a polynomial-size \FO-rewriting for the CQs and ontologies of depth 2 encoding $\cli_{n,k}$ defined above. These CQs and ontologies have polynomially-many tree witnesses. Our assumption and the construction in the proof of Theorem~\ref{Depth2:Clique} provide us with a polynomial-size Boolean formula computing $\cli_{n,k}$. Since $\cli_{n,k}$ is $\NP/\poly$-complete under $\NCo$ reductions, this gives us $\NP/\poly \subseteq \NCo$.
\end{proof}


\section{Rewritings of Tree-Shaped CQs}\label{sec:tree}

A CQ is said to be \emph{tree-shaped} if its Gaifman graph is a tree.   It is well known~\cite{DBLP:conf/vldb/Yannakakis81,DBLP:journals/tcs/ChekuriR00} that tree-shaped CQs (or, more generally, CQs of bounded treewidth) can be evaluated over plain data instances in polynomial time. In contrast, the evaluation of arbitrary CQs is \NP-complete for combined complexity and $W[1]$-complete for parameterised complexity. In this section, we consider tree-shaped CQs over ontologies.

At first sight, we do not gain much by focusing on tree-shaped CQs: answering such CQs over ontologies is \NP-complete for combined complexity~\cite{KKZ-DL11}, while their PE- and NDL-rewritings can suffer an exponential blowup~\cite{icalp12}.
However, by examining the tree-witness rewriting~\eqref{rewriting0}, we see that the $\tw_\t$ formula~\eqref{tw-formula} defines a  predicate over the data that can be computed in linear time. It follows that,  for a tree-shaped $\q$, every disjunct of~\eqref{rewriting0} can also be regarded as a  tree-shaped CQ of size $\le |\q|$. So, bearing in mind that $|\Theta^{\q}_{\T}| \le 3^{|\q|}$, we obtain the following:
\begin{theorem}
Given a tree-shaped CQ $\q(\vec{x})$, an ontology $\T$, a data instance $\A$ and a tuple $\vec{a} \subseteq \ind(\A)$, the problem of deciding whether $\T, \A \models \q(\vec{a})$ is fixed-parameter tractable, with parameter $|\q|$.
\end{theorem}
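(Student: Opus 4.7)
The plan is to turn the tree-witness rewriting~\eqref{rewriting0} into a decision procedure and to show that each of its disjuncts can be evaluated as a tree-shaped CQ over $\A$. First I would enumerate $\Theta^\q_\T$: every tree witness $\t = (\tr, \ti)$ is determined by a pair of disjoint subsets of the variables of $\q$, so there are at most $3^{|\q|}$ candidates, each verifiable in time polynomial in $|\q|$ and $|\T|$. This also bounds the number of independent subsets $\Theta \subseteq \Theta^\q_\T$ by $2^{3^{|\q|}}$. In parallel, for every tree witness $\t$ I would precompute the unary predicate $P_\t = \{ a \in \ind(\A) \mid \A \models \varrho(a) \text{ for some } \varrho \text{ generating } \t \}$ in polynomial time in $|\A|$ and $|\T|$; by~\eqref{tw-formula}, the formula $\tw_\t(\tr)$ on $\A$ is then equivalent to requiring every variable in $\tr$ to be equal to one and the same element of $P_\t$.

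Next, for each independent $\Theta$, I would form the plain CQ $\q^*_\Theta$ over the signature of $\q$ extended with the new unary $P_\t$'s: delete the atoms of $\q_\Theta$, and for every $\t \in \Theta$ introduce a fresh variable $z_\t$ with the atom $P_\t(z_\t)$ and identify all root variables of $\t$ with $z_\t$. The crucial claim is that $\q^*_\Theta$ is tree-shaped and of size $O(|\q|)$. To justify it, view $\q$ as its Gaifman tree: for each $\t \in \Theta$ the atoms of $\q_\t$ span a connected subtree whose interior variables are in $\ti$ and whose frontier in the full tree lies in $\tr$, so deleting $\q_\t$ prunes that subtree and contracting its frontier to $z_\t$ rejoins the pendant branches through $z_\t$---an operation that preserves treewidth $1$. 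Because the tree witnesses in $\Theta$ are pairwise non-conflicting ($\q_\t \cap \q_{\t'} = \emptyset$), their prunings act on disjoint sets of atoms, and although two of them may share root variables (thereby forcing the corresponding $z_\t$ and $z_{\t'}$ to be identified), each such further identification is itself a contraction in the already-contracted tree and hence cannot create a cycle. Verifying this last point carefully is, in my view, the main obstacle of the proof.

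Finally, I would evaluate each $\q^*_\Theta$ over $\A$ expanded with the precomputed $P_\t$'s in time $\text{poly}(|\q|, |\A|)$ using Yannakakis's algorithm~\cite{DBLP:conf/vldb/Yannakakis81}, returning ``yes'' on $\vec{a}$ whenever some disjunct matches. Summing over the at most $2^{3^{|\q|}}$ independent subsets and including the preprocessing cost yields a total running time of the form $f(|\q|) \cdot \text{poly}(|\T|, |\A|)$ for a computable function $f$, which is precisely fixed-parameter tractability with parameter $|\q|$.
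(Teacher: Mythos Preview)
Your proposal is correct and follows essentially the same route as the paper's (very brief) argument: both observe that each $\tw_\t$ amounts to a unary predicate computable in polynomial time over the data, that $|\Theta^\q_\T|\le 3^{|\q|}$ bounds the number of disjuncts in~\eqref{rewriting0} by a function of $|\q|$ alone, and that each disjunct, once the $\tw_\t$'s are replaced by unary atoms with the root variables identified, is a tree-shaped CQ of size $O(|\q|)$ evaluable in polynomial time via Yannakakis's algorithm. The paper simply asserts the tree-shapedness of the disjuncts, whereas you spell out the contraction argument and correctly flag the shared-root-variable case as the point requiring care; in that sense your write-up is more detailed than the paper's, but the underlying idea is identical.
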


Furthermore, if each variable in a tree-shaped CQ is covered by a `small' number of tree witnesses then we can obtain  polynomial-size PE- or NDL-rewritings.

\begin{example}\label{ex:tree}\em
Consider the following ontology and CQ illustrated in Fig.~\ref{fig:divide}:
\begin{align*}
\T &= \bigl\{ \, A_i(x) \to \exists y\,\bigl(R_i(x,y) \land  R_{i+1}(y,x)\bigr) \mid 1 \leq i \leq 3 \, \bigr\},\\
\q  &=  \exists y_1 \dots y_5\bigwedge_{1 \leq i \leq 4} R_i(y_i, y_{i+1}).
\end{align*}

\begin{figure}[ht]
\centerline{%
\begin{tikzpicture}[>=latex, point/.style={circle,draw=black,thick,minimum size=1.5mm,inner sep=0pt,fill=white},
spoint/.style={rectangle,draw=black,thick,minimum size=1.5mm,inner sep=0pt,fill=white},
ipoint/.style={circle,draw=black,thick,minimum size=1.5mm,inner sep=0pt,fill=white},
wiggly/.style={thick,decorate,decoration={snake,amplitude=0.3mm,segment length=2mm,post length=1mm}},
query/.style={thick},yscale=1,xscale=1]\small
\draw[ultra thin,rounded corners=8,fill=gray!15] (0.6,-0.7) rectangle +(2.8,2.4);
\draw[ultra thin,dashed,rounded corners=8] (-0.8,-0.5) rectangle +(3.35,2);
\draw[ultra thin,dashed,rounded corners=8] (4.8,-0.6) rectangle +(-3.2,2.2);
\node at (-0.3,0.8) {\small $\q_1$};
\node at (4.3,0.8) {\small $\q_2$};
\node[ipoint,label=below:{$y_1$}] (y1) at (-0.5,0) {};
\node[ipoint,label=below:{$y_2$}] (y2) at (1,0) {};
\node[ipoint,label=above:{$y_3$}] (y3) at (2,1) {};
\node[ipoint,label=below:{$y_4$}] (y4) at (3,0) {};
\node[ipoint,label=below:{$y_5$}] (y5) at (4.5,0) {};
\draw[->,query] (y1) to node[above] {\scriptsize $R_1$} (y2);
\draw[->,query] (y2) to node[above,sloped] {\scriptsize $R_2$} (y3);
\draw[->,query] (y3) to node[pos=0.6,above,sloped] {\scriptsize $R_3$} (y4);
\draw[->,query] (y4) to node[above,sloped] {\scriptsize $R_4$} (y5);
\node[ipoint,fill=black,label=right:{$A_2$}] (c1) at (6,0) {};
\node[ipoint] (c2) at (6,1) {};
\draw[->,wiggly] (c1) to node[sloped,below] {\scriptsize $R_2$} node[sloped,above] {\scriptsize $R_3^-$} (c2);
\end{tikzpicture}%
}%
\caption{CQ and ontology from Example~\ref{ex:tree}.}\label{fig:divide}
\end{figure}
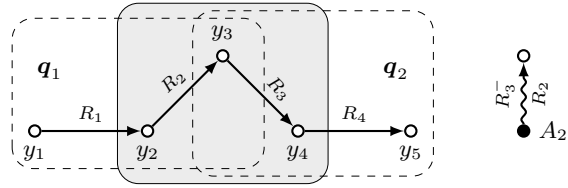

We construct a PE-rewriting $\q^\dag$ of $\q$ and $\T$ recursively by splitting $\q$ into smaller subqueries. Suppose $\T,\A \models \q$, for some $\A$. Then there is a homomorphism $h\colon \q \to \C_{\T,\A}$. Consider the `central' variable $y_3$ dividing $\q$ in half. If $h(y_3)$ is in the data part of $\C_{\T,\A}$ then $y_3$ behaves like a free variable in $\q$. Since $\q$ is tree-shaped, we can then proceed by constructing PE-rewritings, $\q_1^\dag (y_3)$ and $\q_2^\dag(y_3)$, for the subqueries
\begin{align*}
\q_1(y_3) & = \exists y_1 y_2\, (R_1(y_1,y_2) \land R_2(y_2,y_3)),\\
\q_2(y_3) & = \exists y_4 y_5\, (R_3(y_3,y_4) \land R_4(y_4,y_5)).
\end{align*}
If $h(y_3)$ is a labelled null, then $y_3$ must be  an internal point of some tree witness for $\q$ and $\T$. We have only one such tree witness, $\t=(\tr,\ti)$, generated by $A_2(x)$ with  $\tr = \{y_2,y_4\}$, $\ti = \{y_3\}$ and $\q_\t = \{ R_2(y_2,y_3), R_3(y_3,y_4)\}$ (shaded in Fig.~\ref{fig:divide}). But then $h(y_2) = h(y_4)$ and this element is in the data part of $\C_{\T,\A}$. So, we need  PE-rewritings, $\q_3^\dag (y_2)$ and $\q_4^\dag(y_4)$, of the remaining fragments of $\q$:
\begin{equation*}
\q_3(y_2) = \exists y_1 \, R_1(y_1,y_2),\qquad
\q_4(y_4) = \exists y_5\, R_4(y_4,y_5).
\end{equation*}
If the required rewritings $\q_i^\dag$, $1 \le i \le 4$, are constructed then we obtain a PE-rewriting $\q^\dag$ of $\q$ and $\T$:  
\begin{equation*}
\q^\dag \ \ = \ \ \exists y_3\, \big( \q_1^\dag (y_3) \land \q_2^\dag(y_3)\big) \ \ \lor \ \  \exists y_2 y_4 \, \big( A_2(y_2) \land (y_2 = y_4) \land\q_3^\dag (y_2) \land \q_4^\dag(y_4)\big).
\end{equation*}
We analyse $\q_1,\q_2,\q_3$ and $\q_4$ in the same way and obtain
\begin{align*}
\q_1^\dag (y_3) & = \exists y_2\,\bigl(\q_3^\dag(y_2) \land R_2(y_2,y_3)\bigr) \lor
\exists y_1\,\bigl( A_1(y_3) \land (y_1 = y_3) \bigr), \\
\q_2^\dag (y_3) & = \exists y_4\,\bigl(R_3(y_3,y_4) \land \q_4^\dag(y_4)\bigr) \lor
\exists y_5\, \bigl(A_3(y_3) \land (y_5 = y_3)  \bigr),
\end{align*}
$\q_3^\dag (y_2)$ and $\q_4^\dag(y_4)$ equal to  $\q_3(y_2)$ and $\q_4(y_4)$, respectively.
\end{example}

We now give a general definition of a PE-rewriting obtained by the strategy `divide and rewrite' and applicable to any (not necessarily tree-shaped) CQ.
Let $\q(\vec{x}) = \exists \vec{y} \, \varphi(\vec{x}, \vec{y})$ and an ontology $\T$ be given. We recursively define a PE-query $\q^\dag(\vec{x})$  as follows. Take the finest partition of $\exists \vec{y} \, \varphi(\vec{x},\vec{y})$ into a conjunction $\bigwedge_j \exists \vec{y}_j\,\varphi_j(\vec{x},\vec{y}_j)$ such that every atom containing some $y\in\vec{y}_j$ belongs to the same conjunct $\varphi_j(\vec{x},\vec{y}_j)$.
(Informally,  the Gaifman graph of $\varphi$ is cut along the answer variables $\vec{x}$.) By definition, the set of tree witnesses for $\exists\vec{y}\,\varphi(\vec{x},\vec{y})$ and $\T$ is the disjoint union of the sets of tree witnesses for the $\exists \vec{y}_j\,\varphi_j(\vec{x},\vec{y}_j)$ and $\T$.
Then we set $(\exists \vec{y} \, \varphi(\vec{x}, \vec{y}))^\dag = \bigwedge_j \psi_j$, where  $\psi_j$ is $\varphi_j(\vec{x})$ in case $\vec{y}_j$ is empty; otherwise, we choose a variable $z$ in $\vec{y}_j$ and define $\psi_j$ to be the formula
\begin{equation*}
\exists z\,\bigl(\exists\,[\vec{y}_j\setminus\{z\}] \,\varphi_j(\vec{x},\vec{y}_j)\bigr)^\dag \ \  \ \  \lor
\bigvee_{\begin{subarray}{c}\t \text{ a tree witness for } \exists \vec{y}_j\,\varphi_j(\vec{x},\vec{y}_j) \text{ and } \T\\\text{ such that } \t=(\tr,\ti) \text{ and } z\in\ti \end{subarray}} \hspace*{-5em} \exists \vec{y}_{j,\t} \,\bigl(\bigl(\exists\, [\vec{y}_j\setminus \vec{y}_{j,\t}]\, \varphi_{j,\t}(\vec{x},\vec{y}_j)\bigr)^\dag \land \tw_\t(\tr) \bigr),
\end{equation*}
where $\vec{y}_{j,\t} = \vec{y}_j \cap \tr$ contains the variables in $\vec{y}_j$  that occur among  $\tr$, the quantifiers $\exists\,[\vec{y}_j\setminus\{z\}]$  and $\exists\,[\vec{y}_j\setminus \vec{y}_{j,\t}]$ contain all variables in $\vec{y}_j$ but $z$ and $\vec{y}_{j,\t}$, respectively,
and $\varphi_{j,\t}$ consists of all the atoms of $\varphi_j$ except those in $\q_\t$. Note that the variables in $\ti$ (in particular, $z$) do not occur in the disjunct for $\t$ (and so can be removed from the respective quantifier). Intuitively, the first disjunct represents the situation where $z$ is mapped to a data individual  and treated as a free variable in the rewriting of $\varphi_j$. The other disjuncts reflect the cases where $z$ is mapped to a labelled null, and so $z$ is an internal variable of a tree witness $\t = (\tr,\ti)$ for $\exists \vec{y}_j\,\varphi_j(\vec{x},\vec{y}_j)$ and $\T$. As the variables in $\tr$ must be mapped to data individuals, this only leaves the set of atoms  $\varphi_{j,\t}$ with existentially quantified $\vec{y}_j\setminus \vec{y}_{j,\t}$ for further rewriting.  The existentially quantified variables in each of the disjuncts do not contain $z$, and so our recursion is well-founded.
The proof of the following theorem is straightforward:
\begin{theorem}\label{th:split}
For any CQ $\q(x)$ and ontology $\T$, $\q^\dag(\vec{x})$ is a PE-rewriting of $\q$ and $\T$  \textup{(}over complete data\textup{)}.
\end{theorem}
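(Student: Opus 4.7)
The plan is to prove the equivalence $\T,\A \models \q(\vec{a}) \Leftrightarrow \A \models \q^\dag(\vec{a})$ by induction on the number of existentially quantified variables in $\q(\vec{x})=\exists\vec{y}\,\varphi(\vec{x},\vec{y})$. The base case $|\vec{y}|=0$ is immediate from the completeness of $\A$: a conjunction of ground atoms is entailed by $(\T,\A)$ iff each atom already lies in $\A$. For the inductive step, note that every atom of $\varphi$ containing some $y\in\vec{y}$ lies entirely within one component $\varphi_j$ of the partition, so any homomorphism $h\colon\varphi\to\C_{\T,\A}$ with $h(\vec{x})=\vec{a}$ decomposes into independent homomorphisms on the individual conjuncts $\exists\vec{y}_j\,\varphi_j(\vec{a},\vec{y}_j)$; the same decomposition holds syntactically for $\q^\dag$. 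Hence it suffices to establish, for each $j$, that $\T,\A\models\exists\vec{y}_j\,\varphi_j(\vec{a},\vec{y}_j)$ iff $\A\models\psi_j(\vec{a})$.

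Fix a component with $\vec{y}_j\neq\emptyset$ and its distinguished variable $z$. For the $(\Rightarrow)$ direction, take a witnessing homomorphism $h$ and split on whether $h(z)\in\ind(\A)$ or $h(z)$ is a labelled null. In the first case, treat $z$ as an answer variable and invoke the inductive hypothesis on $\exists[\vec{y}_j\setminus\{z\}]\,\varphi_j$ to obtain the first disjunct of $\psi_j$. In the second case, $h(z)$ lies in a subtree of $\C_{\T,\A}$ rooted at some $c\in\ind(\A)$; the atoms of $\varphi_j$ that $h$ maps into this subtree form the set $\q_\t$ of a tree witness $\t=(\tr,\ti)$ with $z\in\ti$ and $h(y)=c$ for all $y\in\tr$. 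The generator $\varrho$ of $\t$ is satisfied at $c$ in $\C_{\T,\A}$ and hence, by completeness of $\A$, we have $\A\models \tw_\t(h(\vec{y}_{j,\t}))$ with each variable in $\tr$ bound to $c$. The inductive hypothesis applied to $\exists[\vec{y}_j\setminus\vec{y}_{j,\t}]\,\varphi_{j,\t}$, with $\vec{y}_{j,\t}$ promoted to the answer tuple, then supplies the corresponding tree-witness disjunct.

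The $(\Leftarrow)$ direction is the symmetric gluing argument. From the first disjunct the inductive hypothesis directly supplies a homomorphism extending $\vec{x}\mapsto\vec{a}$ with a choice for $z$ in $\ind(\A)$. From a tree-witness disjunct for $\t=(\tr,\ti)$, the inductive hypothesis gives a homomorphism from $\varphi_{j,\t}$ into $\C_{\T,\A}$, while satisfaction of $\tw_\t$ provides an individual $c$ with $\varrho(c)$, and hence a canonical embedding of $\q_\t$ into the subtree above $c$ via the homomorphism $\q_\t\to\C_\T^{\varrho(c)}$ guaranteed by the definition of tree witness; the two pieces agree on the boundary $\vec{y}_{j,\t}$, both sending these variables to $c$, and together form a homomorphism from $\varphi_j$ to $\C_{\T,\A}$.

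The main obstacle is the structural claim invoked in the $(\Rightarrow)$ direction, Case 2: that the atoms of $\varphi_j$ meeting the subtree above $c$ really constitute $\q_\t$ for some tree witness with $z\in\ti$. This is handled by taking the minimal such set of atoms and reading off the decomposition from how $\C_{\T,\A}$ generates the labelled null $h(z)$ above $c$ — the generator $\varrho$ of $\t$ is precisely the formula responsible for introducing the relevant subtree, and the definition of tree witness was crafted exactly to capture this situation. Once this structural fact is in hand, the well-foundedness of the recursion (each recursive call strictly decreases $|\vec{y}|$, since $z$ disappears in the first disjunct and $z\in\ti$ is dropped together with the internal variables in each tree-witness disjunct) and routine bookkeeping complete the argument.
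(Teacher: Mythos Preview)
The paper does not actually give a proof of this theorem; it simply declares it ``straightforward'' immediately before the statement. Your induction on $|\vec{y}|$, with the case split on whether the chosen variable $z$ is sent to an individual or to a labelled null, is precisely the natural argument the paper is alluding to, and it is correct. The structural obstacle you flag---that when $h(z)$ is a null, the atoms of $\varphi_j$ whose image touches the anonymous subtree rooted at some $c\in\ind(\A)$ form $\q_\t$ for a tree witness $\t$ with $z\in\ti$ and $h[\tr]=\{c\}$---is exactly the content of the tree-witness machinery from Section~\ref{sec:tr-wit} (and underlies the correctness of the original rewriting $\qtw$), so invoking it here is legitimate. One small point worth making explicit: the generator $\varrho$ you need can always be taken to be the body of the tgd that created the first null above $c$; since bodies of existential tgds in this language have a single free variable, $\varrho(c)$ is a ground atom and completeness of $\A$ gives $\A\models\varrho(c)$ directly, which is what makes $\A\models\tw_\t$ go through.
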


The exact form of the rewriting $\q^\dag$ depends on the choice of the variables $z$. We now consider two strategies for choosing these variables in the case of tree-shaped CQs.
Let
\begin{equation*}
d^\q_\T ~=~ 1 + \max_{z\in \vec{y}} \big|\{ \t = (\tr,\ti) \in \Theta_{\T}^{\q} \mid z\in\ti \}\big|.
\end{equation*}
We call $d^\q_\T$ the \emph{tree-witness degree  of} $\q$ and $\T$.
For example, the tree-witness degree of any CQ and ontology of depth 1 is at most 2, as observed in the proof of Theorem~\ref{depth1}. In general, however, it can only be bounded by $1 + |\Theta^\q_\T|$.

Given a tree-shaped CQ $\q(\vec{x}) = \exists \vec{y} \, \varphi(\vec{x}, \vec{y})$, we pick some variable as its root and define a partial order $\preceq$ on the variables of $\q$ by taking $z \preceq z'$ iff $z'$ occurs in the subtree of $\q$ rooted in $z$. The strategy used in~\cite{DBLP:conf/ijcai/BienvenuOSX13} chooses the smallest $z$ with respect to $\preceq$. Since the number of distinct subtrees of $\q$ is bounded by $|\q|$ and \NDL{} programs allow for structure sharing, this strategy yields  an \NDL-rewriting of size $|\T| \cdot |\q| \cdot d^\q_\T$:
\begin{corollary}[\cite{DBLP:conf/ijcai/BienvenuOSX13}]
Any tree-shaped CQ and ontology with polynomially-many tree-witnesses have a polynomial-size \NDL-rewriting.
\end{corollary}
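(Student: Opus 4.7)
The plan is to instantiate the recursive PE-rewriting $\q^\dag$ of Theorem~\ref{th:split} with the strategy of choosing the pivot $z$ to be the $\preceq$-smallest (root-most) existential variable at every recursive step, and then to convert the result into an \NDL-program by introducing one intensional predicate per distinct subquery encountered by the recursion, so that common subqueries share a single definition.

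First I would prove by induction on the recursion that every subquery $\exists \vec{y}_j\,\varphi_j$ arising in $\q^\dag$ coincides, as a set of atoms, with a subtree $T_v$ of the Gaifman tree of $\q$ rooted at some vertex $v$, with $v$ being the $\preceq$-smallest variable in $\vec{y}_j$. The data case removes $v$ from the existentials and partitions: since $\q$ is a tree, the components are precisely the subtrees of $\q$ rooted at the children of $v$, each extended by the single atom linking that child to $v$. For each tree-witness case with $\t=(\tr,\ti)$ and $v\in\ti$, the atoms of $\q_\t$ form a connected subgraph of $\q$ containing $v$, hence a subtree; removing it and making $\tr\cap\vec{y}_j$ free splits what remains into subtrees rooted at strictly lower vertices. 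Since a tree on $|\q|$ vertices has at most $|\q|$ rooted subtrees, only $O(|\q|)$ distinct subqueries ever arise.

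Next, I would introduce a unary intensional predicate $G_v$ for each such $T_v$ and emit, following the disjuncts of the corresponding $\psi$, one data-case clause
\begin{equation*}
G_v(v) \ \leftarrow \ \bigwedge_{c\,\text{child of }v} \bigl(A_{vc}(v,c) \land G_c(c)\bigr),
\end{equation*}
where $A_{vc}$ is the atom of $\q$ joining $v$ and $c$, together with, for each tree witness $\t=(\tr,\ti)$ with $v\in\ti$, a tree-witness clause
\begin{equation*}
G_v(v) \ \leftarrow \ \tw_\t(\tr) \,\land \bigwedge_{u\in\tr\cap\vec{y}_j} G_u(u),
\end{equation*}
in which the equalities forced by $\tw_\t$ collapse the variables of $\tr$ to a common witness. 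The auxiliary clauses defining $\tw_\t$ and a domain predicate to keep everything safe are taken from the proof of Theorem~\ref{NDL}, and the goal atom is $G_{v_0}$ applied to the designated root $v_0$ of $\q$. Correctness is immediate: unfolding the nonrecursive datalog program reproduces $\q^\dag$, which is a rewriting by Theorem~\ref{th:split}.

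The size bound follows by counting: at most $|\q|$ intensional predicates, at most $d^\q_\T$ clauses per predicate (one data case plus at most $d^\q_\T-1$ tree witnesses with $v$ internal), and each clause of length $O(|\q|\cdot|\T|)$ owing to the $\tw_\t$ subformulas, yielding the announced $|\T|\cdot|\q|\cdot d^\q_\T$ bound and hence a polynomial-size rewriting whenever $|\Theta^\q_\T|$ is polynomial. The step I expect to require most care is the structural lemma in paragraph two, namely the claim that in every tree-witness step $\q_\t$ is really a connected subtree of the \emph{current} subquery $T_v$. This is precisely what makes the $\preceq$-minimal choice of pivot essential: for a non-root pivot a tree witness with that pivot internal could reach upwards into ancestors of $v$ and thereby destroy both the subtree decomposition and the $O(|\q|)$ bound on the number of distinct subqueries, which in turn is what underwrites polynomial size via structure sharing.
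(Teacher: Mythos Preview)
Your approach is exactly the one the paper sketches: instantiate the recursive rewriting $\q^\dag$ with the root-most pivot, observe that only $O(|\q|)$ distinct subtrees arise, and exploit \NDL{} structure sharing. The paper's own argument for this corollary is literally two sentences, so your structural lemma and size count already go beyond what the paper spells out, and your identification of the subtree-preservation step as the crux is on target.

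That said, the concrete \NDL{} clauses you write are internally inconsistent. If $G_v$ is the predicate for the subquery whose \emph{pivot} (smallest existential) is $v$, then its free variable---and hence the argument of $G_v$---is the parent $p$ of $v$, not $v$ itself; the data clause should read $G_v(p)\leftarrow A_{p,v}(p,v)\land\bigwedge_c G_c(v)$, and the tree-witness clause (correctly indexed by $\t$ with $v\in\ti$) must recurse on the \emph{clean} subtrees at the boundary points $u\in\tr$, which have $u$ free and therefore need a second family of predicates. If instead your $G_v$ is already the clean-subtree predicate with $v$ free, then your data clause covers only the case where \emph{every} child of $v$ lands in the data, and your tree-witness clause is vacuous since a free $v$ cannot lie in $\ti$. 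Either reading leaves the mixed cases (some children in data, some internal to a tree witness) unhandled. The fix is routine---introduce one predicate per edge, i.e., per subquery ``subtree at $c$ plus the edge to its parent $v$, with $v$ free and $c$ the pivot''---after which your counting argument and the bound $|\T|\cdot|\q|\cdot d^\q_\T$ go through unchanged.
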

The depth of recursion in the rewiring process with the above strategy is $|\q|$ in the worst case. Therefore, we can only obtain a \PE-rewriting of exponential size in $|\q|$. However, if we adopt the strategy of choosing $z$ that splits the graph of each $\varphi_j$ in half, then the depth of recursion does not exceed $\log |\q|$, and so the resulting \PE-rewriting is of polynomial size for $\q$ and $\T$ of bounded tree-witness degree.
This strategy is based on the following fact:
\begin{proposition}\label{PrepMiddleVertex}
Any tree $T =(V, E)$ contains a vertex $v \in V$ such that each connected component obtained by removing $v$ from $T$ has at most ${|V|}/{2}$ vertices.
\end{proposition}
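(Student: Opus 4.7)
The plan is to prove the statement by exhibiting a \emph{centroid} of $T$. I would root $T$ at an arbitrary vertex $r$ and, for every vertex $w \in V$, define $s(w)$ to be the size of the subtree rooted at $w$. Then I would pick $v$ to be a vertex satisfying $s(v) > |V|/2$ that lies as deep as possible from $r$. Such a $v$ exists because $s(r) = |V| > |V|/2$ (the trivial case $|V|=1$ can be disposed of separately by taking $v = r$).

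The key steps would then proceed as follows. First, I would observe that by the maximality of the depth of $v$, every child $c$ of $v$ satisfies $s(c) \le |V|/2$; otherwise $c$ would be a deeper witness, contradicting the choice of $v$. Second, I would note that the connected components of $T - v$ are exactly the subtrees rooted at the children of $v$ together with, in the case $v \ne r$, one additional ``upward'' component $U$ consisting of the vertices not lying below $v$. Since $|U| = |V| - s(v)$ and $s(v) > |V|/2$, this upward component satisfies $|U| < |V|/2$. Combining this with the bound on the children's subtree sizes, every component of $T - v$ has at most $|V|/2$ vertices, as required.

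I do not expect any serious obstacle: the argument is the standard existence proof for the centroid of a tree and relies only on the basic observation that removing a vertex from a tree splits it into components indexed by its neighbours, together with a simple extremal choice of $v$. The only minor care point is handling the degenerate case $|V|=1$ and making sure the strict inequality $s(v) > |V|/2$ (rather than $\ge$) is what gives $|U| < |V|/2$; both are straightforward.
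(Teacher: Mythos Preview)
Your argument is correct and is the standard centroid-of-a-tree proof. Note, however, that the paper does not actually prove this proposition: it is stated as a known fact (introduced with ``based on the following fact'') and used without proof in the subsequent theorem. So there is no paper proof to compare against; your proposal simply supplies the routine justification the authors omitted.
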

As a consequence, we obtain:
\begin{theorem}\label{Depth1:Tree-shaped}
For any tree-shaped CQ $\q$ and any ontology $\T$, there is a \PE-rewriting of size $|\T| \cdot |\q|^{1 + \log d^\q_\T}$ \textup{(}over complete data\textup{)}.
\end{theorem}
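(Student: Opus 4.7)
The plan is to instantiate the construction $\q^\dag$ defined before Theorem~\ref{th:split} with a specific choice of splitting variable $z$. At every recursive call, applied to a tree-shaped subquery $\varphi_j$ with existential variables $\vec{y}_j$, I would pick $z \in \vec{y}_j$ to be a centroid of the Gaifman graph of $\varphi_j$, guaranteed by Proposition~\ref{PrepMiddleVertex}: deleting $z$ leaves connected components each of size at most $|\varphi_j|/2$.

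With this choice, the formula $\psi_j$ is a disjunction of at most $d = d^\q_\T$ disjuncts (one ``free $z$'' case and at most $d - 1$ tree-witness cases, by the definition of $d^\q_\T$). I claim that in either kind of disjunct, the subqueries to which $(\cdot)^\dag$ is applied recursively have size at most $|\varphi_j|/2$. For the ``free $z$'' disjunct this is immediate, since the recursive call re-partitions $\varphi_j$ along the new answer variable $z$, producing exactly the components of the Gaifman graph with $z$ deleted. For a tree-witness disjunct with $\t = (\tr,\ti)$ and $z \in \ti$, the remaining atoms $\varphi_{j,\t}$ are those whose variables lie entirely outside $\ti$; this removes at least $z$ from the Gaifman graph, and deleting a superset of $\{z\}$ from a tree can only further shrink its components. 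Each disjunct further contributes the tree-witness formula $\tw_\t(\tr)$ of size $O(|\T| \cdot |\varphi_j|)$ plus constant structural overhead.

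Writing $T(n)$ for the worst-case size of $\q^\dag$ over a tree-shaped subquery with $n$ atoms, the analysis above yields the recurrence
\begin{equation*}
T(n) \;\leq\; \sum_{k=1}^{d}\Bigl(\,\sum_i T(n_i^k)\;+\; c\,|\T|\,n\,\Bigr),
\end{equation*}
where, for each disjunct $k$, one has $n_i^k \leq n/2$ and $\sum_i n_i^k \leq n$. I would solve this by a recursion-tree estimate: writing $m_\nu$ for the subquery size at a node $\nu$, the total ``mass'' at depth $\ell$ satisfies $\sum_{\nu \text{ at depth }\ell} m_\nu \leq d^\ell\, n$, and the recursion terminates at depth $\log_2 n$. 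Summing the per-node contribution $O(|\T|\cdot m_\nu)$ over all nodes gives
\begin{equation*}
T(n) \;=\; O(|\T|)\cdot \sum_{\ell=0}^{\log_2 n} d^\ell\, n \;=\; O\bigl(|\T|\cdot n\cdot d^{\log n}\bigr) \;=\; O\bigl(|\T|\cdot n^{1+\log d}\bigr),
\end{equation*}
using $d^{\log n} = n^{\log d}$. The same bound can alternatively be verified by induction on $n$ with hypothesis $T(m) \leq C |\T|\, m^{1+\log d}$, exploiting the convexity estimate $\sum_i (n_i^k)^{1+\log d} \leq n^{1+\log d}/d$ valid when $n_i^k \leq n/2$ and $\sum_i n_i^k \leq n$.

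The main obstacle in executing the plan is the halving claim for tree-witness disjuncts: one must verify that, although a tree witness can remove many atoms at once (the whole set $\ti$, not just $z$), the induced partition of the tree-shaped query $\varphi_j$ still respects the centroid bound. This is a direct consequence of the tree structure of $\q$ together with the choice of $z$ as a centroid, but it is precisely where the tree-shape hypothesis and Proposition~\ref{PrepMiddleVertex} interact, and it is what distinguishes this analysis from the simpler ``smallest-first'' strategy that yields only \NDL-rewritings of polynomial size.
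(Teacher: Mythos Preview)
Your proposal is correct and follows the same strategy as the paper: instantiate $\q^\dag$ by always choosing $z$ as a centroid via Proposition~\ref{PrepMiddleVertex}, then bound the resulting recursion using the halving together with the branching factor $d^\q_\T$. The paper packages the analysis as a direct induction on the number of atoms rather than a recursion-tree sum, and it bounds the tree-witness disjuncts simply by $F(n_j - m_{ji}) \le F(n_j)$ rather than by your (also correct) observation that removing $\ti \supseteq \{z\}$ still halves the components; but the arithmetic and the final bound are identical.
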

\begin{proof}
Denote by $F(n)$ the maximal size of $\pq^\dag$ for a subquery $\pq$ of $\q$ with at most $n$ atoms.
We show by induction that
$F(n) \le |\T| \cdot n^{1 + \log d}$, where $d = d^\q_\T$.
By definition, for  each component $\pq_j$ of the finest partition of $\pq$, the length of its contribution to $\pq^\dag$ does not exceed
\begin{equation*}
F(n_j) + \sum\nolimits_{i = 1}^{d-1}(F(n_j - m_{ji}) + |\T| \cdot m_{ji}),
\end{equation*}
where $n_j$ is the number of atoms in $\pq_j$ and $m_{ji}$ is the number of atoms in the $i$th tree witness with $z \in \ti$, $1 \leq m_{ji} \leq n_j$. By the induction hypothesis, the length of the contribution of $\pq_j$ does not exceed
\begin{multline*}
|\T|\cdot n_j^{1 + \log d} +  |\T|\cdot  \sum_{i =1}^{d-1}\bigl((n_j-m_{ji})^{1+\log d} + m_{ji}\bigr)  \ \leq \\
|\T| \cdot \bigl(n_j^{1 + \log d} + (d - 1)\cdot n_j^{1+\log d}\bigr) = |\T| \cdot  d \cdot  n_j^{1+\log d}.
\end{multline*}
By Proposition~\ref{PrepMiddleVertex}, we can choose $z$ (at the preceding step) so that $\pq$ with $n$ atoms  is split into components $\pq_1,\dots,\pq_k$ each of which has $n_j \leq n/2$ atoms (by definition, $\sum_{j = 1}^k n_j = n$). Then we obtain
\begin{equation*}
F(n)  \ \le \  
|\T| \cdot d\cdot \sum\nolimits_{j = 1}^{k}  \hspace*{-0.2em}\bigl((n/2)^{\log d} \cdot n_j\bigr) \
\le 
|\T| \cdot n^{1+\log d},
\end{equation*}
as required.
\end{proof}

\begin{corollary}
Any tree-shaped CQ $\q$ and ontology $\T$ of depth~$1$ have a \PE-rewriting of size $|\T| \cdot |\q|^2$
\textup{(}over complete data\textup{)}.
\end{corollary}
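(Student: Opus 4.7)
The plan is to derive this as an immediate consequence of Theorem~\ref{Depth1:Tree-shaped} by showing that, for any ontology $\T$ of depth $1$ and any CQ $\q$, the tree-witness degree $d^\q_\T$ is bounded by $2$. Once this bound is established, substituting $d^\q_\T = 2$ into the size estimate $|\T| \cdot |\q|^{1 + \log d^\q_\T}$ yields $|\T| \cdot |\q|^{1 + \log 2} = |\T| \cdot |\q|^2$, which is exactly the claimed bound.

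To establish $d^\q_\T \le 2$, I would recall the structural observation made inside the proof of Theorem~\ref{depth1}: for $\T$ of depth~$1$, every tree witness $\t = (\tr, \ti)$ has $\ti = \{y\}$ for exactly one variable $y \in \vec{y}$, since the canonical models $\C_\T^{\varrho(a)}$ contain at most one labelled null. Moreover, $\tr$ is then completely determined by $y$ (it is the set of variables $z$ such that some atom of $\q$ contains both $y$ and $z$). Consequently, for each existentially quantified variable $z$ of $\q$, there is at most one tree witness $\t\in \Theta^\q_\T$ with $z \in \ti$, which gives $d^\q_\T \le 1 + 1 = 2$ directly from the definition of the tree-witness degree.

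With this bound in hand, the corollary follows by a single application of Theorem~\ref{th:split} together with the size estimate of Theorem~\ref{Depth1:Tree-shaped}: the divide-and-rewrite rewriting $\q^\dag$ constructed using the balanced splitting strategy of Proposition~\ref{PrepMiddleVertex} is a \PE-rewriting of $\q$ and $\T$ over complete data of size at most $|\T|\cdot |\q|^{1 + \log 2} = |\T|\cdot |\q|^2$. There is no real obstacle here; the only slightly delicate point is to make sure the reader sees that the two notions of ``degree'' appearing in the paper coincide for depth~$1$ (the hypergraph degree $\le 2$ from Theorem~\ref{depth1} and the tree-witness degree $d^\q_\T \le 2$ defined in Section~\ref{sec:tree}), since both reduce, via essentially the same argument, to the fact that a tree witness in the depth~$1$ setting is uniquely identified by its single internal variable.
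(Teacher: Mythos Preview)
Your proposal is correct and matches the paper's intended argument: the paper explicitly remarks (immediately after defining $d^\q_\T$) that ``the tree-witness degree of any CQ and ontology of depth~1 is at most~2, as observed in the proof of Theorem~\ref{depth1},'' and the corollary then follows by substituting $d^\q_\T = 2$ into Theorem~\ref{Depth1:Tree-shaped}. One minor quibble: the hypergraph degree of Theorem~\ref{depth1} (atoms in at most two $\q_\t$) and the tree-witness degree $d^\q_\T$ (variables in at most one $\ti$) are not literally the same quantity, so ``coincide'' is a slight overstatement---but you are right that both bounds follow from the single observation that a depth-1 tree witness is determined by its unique internal variable.
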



\section{Conclusions}

We established a fundamental link between FO-rewritings of CQs over \OWLQL{} ontologies of depth 1 and 2 and---via the hypergraph functions and  programs---classical computational models for Boolean functions. This link allowed us to apply the Boolean complexity theory and  obtain both polynomial upper and exponential (or superpolynomial) lower bounds for the size of rewritings. It is to be noted that the high lower bounds were proved for CQs and ontologies with polynomially-many tree witnesses and polynomial-size chases.

A few challenging important questions remain open: (\emph{i}) Are all hypergraphs representable as subgraphs of some tree-witness hypergraphs? (\emph{ii}) Do all tree-shaped CQs have polynomial-size rewritings over ontologies of depth 2 (more generally, of bounded depth)? (\emph{iii}) What is the size of CQ rewritings over a \emph{fixed} ontology in the worst case? (The last question is related to the \emph{non-uniform} approach to the complexity of query answering in OBDA on the level of individual ontologies~\cite{DBLP:conf/kr/LutzW12}.)


\paragraph{Acknowledgment}

This work was supported by the U.K.\ EPSRC project `ExODA: Integrating Description Logics and Database Technologies for Expressive Ontology-Based Data Access' (EP/H05099X).

\bibliographystyle{abbrv}
\bibliography{DL-Lite-bib}

\end{document}